\newtheorem{theo}{Theorem}[section]
{\theorembodyfont{\rm} \newtheorem{defi}[theo]{Definition}}
{\theorembodyfont{\rm} }
{\theorembodyfont{\rm} \newtheorem{exa}[theo]{Example}}
{\theorembodyfont{\rm} \newtheorem{rem}[theo]{Remark}}
\newtheorem{prop}[theo]{Proposition}
\newtheorem{lemma}[theo]{Lemma}
{\theorembodyfont{\rm}}
{\theorembodyfont{\rm}\newtheorem{notation}[theo]{Notation}}
\newenvironment{proof}{{\sc Proof:}}{\mbox{}\hfill$\Box$\par}
\numberwithin{equation}{section}
\newcommand{\eqnref}[1]{~\mbox{$(${\rm \ref{#1}}$)$}}
\newcommand{\junk}[1]{}
\newcommand{\TS}{\textstyle}
\newcommand{\F}{{\mathbb F}}
\newcommand{\cC}{{\mathcal C}}
\newcommand{\cP}{{\mathcal P}}
\newcommand{\cR}{{\mathcal R}}
\newcommand{\cL}{{\mathcal L}}
\newcommand{\cS}{{\mathcal S}}
\newcommand{\cI}{{\mathcal I}}
\newcommand{\cA}{{\mathcal A}}
\newcommand{\cB}{{\mathcal B}}
\newcommand{\cT}{{\mathcal T}}
\newcommand{\cK}{{\mathcal K}}
\newcommand{\cV}{{\mathcal V}}
\newcommand{\rk}{\mbox{${\rm rk}$\,}}
\newcommand{\inner}[1]{\mbox{$\langle{#1}\rangle$}}
\newcommand{\edge}[1]{\mbox{$\longrightarrow$}\hspace{-1.3em}%
\raisebox{1.3ex}{${\scriptscriptstyle{#1}}$}\hspace{.8em}}
\newcommand{\im}{\mbox{\rm im}\,}
\newcommand{\row}{\,\mbox{${\rm row}$}}
\newcommand{\T}{\mbox{$^{\sf T}$}}
\renewcommand{\mod}{\,{\rm mod}\,}
\newcounter{alp}
\newcounter{ara}
\newcounter{rom}
\newenvironment{romanlist}{\begin{list}{(\roman{rom})\hfill}{\usecounter{rom}
     \topsep0ex \labelwidth.7cm \leftmargin.7cm \labelsep0cm
     \rightmargin0cm \parsep0ex \itemsep.6ex
     \partopsep1.6ex}}{\end{list}}
\newenvironment{alphalist}{\begin{list}{(\alph{alp})\hfill}{\usecounter{alp}
     \topsep0ex \labelwidth.7cm \leftmargin.7cm \labelsep0cm
     \rightmargin0cm \parsep0ex \itemsep.6ex
     \partopsep1.6ex}}{\end{list}}
\newenvironment{arabiclist}{\begin{list}{(\arabic{ara})\hfill}{\usecounter{ara}
     \topsep0ex \labelwidth.7cm \leftmargin.7cm \labelsep0cm
     \rightmargin0cm \parsep0ex \itemsep.6ex
     \partopsep1.6ex}}{\end{list}}
\title{Characteristic Generators and Dualization for Tail-Biting Trellises}
\date{January~3, 2011}
\author{Heide Gluesing-Luerssen\thanks{The author was partially supported by National Science Foundation
        grant \#DMS-0908379}\ \; and Elizabeth~A.~Weaver
       \\
       University of Kentucky\\
       Department of Mathematics\\
       715 Patterson Office Tower\\
       Lexington, KY 40506-0027, USA;
       \\
       heide.gl@uky.edu, eaweaver1s@uky.edu
       }
\begin{document}
\maketitle
\noindent{\bf Abstract:}
This paper focuses on dualizing tail-biting trellises, particularly KV-trellises.
These trellises are based on characteristic generators, as introduced by Koetter/Vardy (2003), and
may be regarded as a natural generalization of minimal conventional trellises, even though
they are not necessarily minimal.
Two dualization techniques will be investigated:
the local dualization, introduced by Forney (2001) for general normal graphs, and
a linear algebra based dualization tailored to the specific class of tail-biting BCJR-trellises,
introduced by Nori/Shankar (2006).
It turns out that, in general, the BCJR-dual is a subtrellis of the local dual, while for
KV-trellises these two coincide.
Furthermore, making use of both the BCJR-construction and the local dualization, it will be shown
that for each complete set of characteristic generators of a code there exists a complete set of characteristic
generators of the dual code such that their resulting KV-trellises are dual to each other if paired suitably.
This proves a stronger version of a conjecture formulated by Koetter/Vardy.

\vspace*{.3cm}
\noindent{\bf Keywords:} linear block codes, tail-biting trellises, characteristic generators,
tail-biting BCJR-trellises, KV-trellises, dualization

\vspace*{.3cm}
\noindent{\bf MSC (2000):} 94B05, 94B12, 68R10, 93B20

\section{Introduction}\label{S-Intro}
\setcounter{equation}{0}

It is well known that for a given linear block code, a tail-biting trellis may be smaller than the
minimal conventional trellis with respect to any of the various notions of complexity of a trellis; see
the discussion in \cite[Sec.~III]{KoVa03}.
Since iterative decoding on tail-biting trellises is well understood (as opposed to decoding on more
general graphs with cycles), this has led to an increased interest in the construction of minimal
tail-biting trellises; see also~\cite{KoVa03,LiSh00,NoSh06,ShBe00,YaQi07,ZhOh07}.
A major breakthrough in this direction has been obtained by Koetter/Vardy~\cite{KoVa03}.
They showed that for each $k$-dimensional linear block code of length~$n$ with full support there exists
a list of~$n$ characteristic generators, each endowed with a span interval, such that every minimal tail-biting
trellis of the code is structurally isomorphic to a product of the elementary trellises of~$k$ linearly
independent characteristic generators (where structurally isomorphic means trellis isomorphic, but disregarding the edge labels);
see \cite[Thm.~5.5]{KoVa03} and the adjustments in \cite[Prop.~III.14, Thm.~III.15]{GLW10}.
Here minimality may refer to any of the standard complexity notions for tail-biting trellises discussed in
\cite[Thm.~5.6]{KoVa03}.
Moreover, the same results show that for each minimal trellis there is a suitable choice of~$k$ linearly independent
characteristic generators such that the given trellis is isomorphic to the resulting product trellis.
This shows a major difference from conventional trellises: while the minimal conventional trellis of a
block code is unique up to trellis isomorphism, this is not the case in the tail-biting situation~-- again, this
refers to any minimality notion.
Not only may there exist minimal trellises with incomparable state complexity profiles
or edge complexity profiles, but even if both these profiles coincide for two minimal trellises,
the trellises may not be isomorphic (but they are structurally isomorphic in this situation due to
\cite[Prop.~III.14]{GLW10}).

We will use the term KV-trellises for product trellises based on~$k$ linearly independent
characteristic generators in the sense described above.
From the construction performed in~\cite{KoVa03}, it follows that characteristic generators may be regarded as
a generalization of MSGM's or trellis-oriented generator matrices in the realm of conventional trellises,
see \cite[Def.~6.2]{McE96} or \cite[Sec.~IV]{KschSo95}, or shortest bases in the sense of~\cite{Fo09};
see also Lemma~\ref{L-ShortestGen} and Remark~\ref{R-greedy} in the next section.
As a consequence, KV-trellises, though not minimal in general, form a natural generalization of minimal
conventional trellises, and indeed, these trellises have much nicer properties than more general tail-biting trellises.
In the paper~\cite{GLW10}, two trellis constructions based on generators with
span intervals have been investigated: the product construction and the BCJR-construction, introduced
by Nori/Shankar~\cite{NoSh06}.
It has been shown that, in general, a BCJR-trellis is smaller than the corresponding product trellis.
In fact, the latter can be merged to the former by taking suitable quotients of its state spaces~\cite[Thm.~IV.9]{GLW10}.
For KV-trellises, however, these two constructions are isomorphic~\cite[Thm.~IV.11]{GLW10}.
As a consequence, KV-trellises are non-mergeable.
Another demonstration of the distinctiveness of KV-trellises will be given in this paper.
It will be shown that KV-trellises behave significantly nicer under dualization than more general trellises.

We will investigate two dualization techniques for tail-biting trellises.
Both lead to trellises representing the dual code.
The first construction is a specialization of the local dualization introduced by Forney in~\cite{Fo01} for
general normal graphs.
It amounts to dualizing the transition spaces along with a sign inverter; see also~\cite{Fo11} for a different approach
based on graphical models.
This dualization has been generalized to factor graphs in~\cite{MaoKsch05} and recently been
recast in the framework of Valiant transforms~\cite{AlMao10}.
Even though the local dualization is a very elegant and convenient construction, for tail-biting trellises it may
lead to dual trellises with some undesirable properties; see Example~\ref{E-Localdual1}.
The second construction is a simple linear algebra based dualization for BCJR-trellises as introduced by Nori/Shankar~\cite{NoSh06}.
In Section~\ref{S-Dual} we will see that the BCJR-dual is a subtrellis of the local dual.
For KV-trellises, however, these duals coincide.
Furthermore, as we will show in Section~\ref{S-DualProc}, the dual of a KV-trellis is a KV-trellis again and thus
shares all their nice properties.

More specifically, in Section~\ref{S-DualProc} we will prove that for each set of~$n$ characteristic generators of a given
code~$\cC\subseteq\F^n$, there exists a set of~$n$ characteristic generators of the dual code~$\cC^{\perp}$
such that the dual of each KV-trellis of~$\cC$ based on the chosen generators is (isomorphic to) a
KV-trellis of~$\cC^{\perp}$ based on the dual generators.
We will construct the list of dual generators explicitly and also show the direct link between the~$k$ linearly independent
characteristic generators for~$\cC$ and the $n-k$ dual characteristic generators that give rise
to the dual KV-trellis.
In fact, this link is easily described because Koetter/Vardy have shown~\cite[Thm.~5.12]{KoVa03} that the characteristic
span list of~$\cC^{\perp}$ is simply obtained by reversing the characteristic spans of~$\cC$.
The construction of the dual list of characteristic generators is performed as follows.
One starts with the BCJR-trellis of~$\cC$ based on the entire list of chosen characteristic generators.
This trellis gives rise to~$n$ subtrellises obtained by omitting exactly one characteristic generator.
They still represent the code~$\cC$.
The main result of the procedure tells us that the local dual of each such subtrellis contains a cycle that gives rise to
a dual characteristic generator whose span is the reversal of the span that had been omitted.
All of this leads to the desired~$n$ dual characteristic generators.
With the principles of local dualization as well as Koetter/Vardy's result of reversed characteristic spans in mind, this
approach of constructing dual characteristic generators is quite natural.
The details however, carried out in Section~\ref{S-DualProc}, become rather technical due
to certain linear independence conditions that need to be verified.

\medskip

Let us close the introduction with introducing the basic notions needed for this paper.
Throughout, a {\sl tail-biting trellis\/} $T=(V,E)$ of depth~$n$ over the finite field~$\F$  is a directed edge-labeled graph
with the property that the vertex set~$V$ partitions into $n$ disjoint sets $V=V_0\cup V_1\cup\ldots\cup V_{n-1}$
such that every edge in~$T$ that starts in~$V_i$ ends in $V_{i+1\mod n}$.
The edges are labeled with field elements from~$\F$.
Notice that we compute modulo~$n$ on the {\sl time axis\/}~$\cI:=\{0,\ldots,n-1\}$.
Referring to the fact that a trellis is a state space realization of the code regarded as a
dynamical system (behavior) on~$\cI$, we call~$V_i$ the {\sl state space\/} of the trellis at time~$i$,
and its elements are the states at that time.
The edge set~$E$ decomposes into $E=\bigcup_{i=0}^{n-1}E_i,$ where $E_i$ is the set of edges starting
in~$V_i$ and ending in~$V_{i+1\mod n}$.
Its elements reflect the present-state to next-state transitions, and therefore the
edge sets~$E_i$ will be called {\sl transition spaces}.
We  identify the elements of~$E_i$ (the edges) with the triples consisting of starting state, label, and ending state.
Thus, the transition spaces are given by
$E_i=\{(v,a,\hat{v})\mid \text{there exists an edge } v\edge{\;a}\,\hat{v} \text{ where }v\in V_i,\,\hat{v}\in V_{i+1},\,a\in\F\}
 \subseteq V_i\times\F\times V_{i+1}$
for $i\in\cI$.
These spaces have also been called trellis sections~\cite{CFV99,FT93} or local constraints~\cite{Fo01}.

A {\sl cycle\/} in~$T$ is a closed path of length~$n$ in the trellis.
We always assume that the cycles start and end (at the same state) in~$V_0$.
If~$|V_0|=1$, the trellis is called {\sl conventional}.
We call the trellis {\sl reduced\/} if every state and every edge appear in at least one cycle.
The trellis is called {\sl biproper\/} if any two edges starting at the same vertex or ending at the same
vertex are labeled distinctly.

The trellis~$T$ is {\sl linear\/} if each state space~$V_i$ is a vector space over~$\F$ and
the {\sl label code\/}
\begin{equation}\label{e-ST}
   \cS(T)\!=\!\{(v_0,\ldots,v_{n-1},c)\in V_0\times\ldots\times V_{n-1}\times\F^n\!\mid\!
   v_0\edge{c_0}\!v_1\edge{c_1}\!\ldots\!v_{n-1}\edge{\!\!\!\!c_{n-1}}\!\!\!v_0\text{ is a cycle in~$T$}\}
\end{equation}
is a subspace of $V_0\times\ldots\times V_{n-1}\times\F^n$ and if the transition spaces~$E_i$ are linear
subspaces of $V_i\times\F\times V_{i+1}$.
If the trellis is reduced, then the linearity of~$E_i$ follows from the linearity of $\cS(T)$.
We say that~$T$ {\sl represents the code\/}~$\cC\subseteq\F^n$ if~$\cC$ equals its {\sl edge-label code}, that is,
$\cC=\{(c_0,\ldots,c_{n-1})\in\F^n\mid
\text{there exists a cycle }v_0\edge{c_0}\!v_1\edge{c_1}\!\ldots\!\edge{\!\!\!\!c_{n-1}}\!\!v_0\text{ in~$T$}\}$.

Note that if~$T$ is linear, then the represented block code is linear.
We will only deal with linear block codes and linear trellis representations.
The trellis~$T$ is called {\sl one-to-one\/} if distinct cycles in~$T$ have distinct edge-label sequences.
The {\sl state complexity profile (SCP)\/} and  {\sl edge complexity profile (ECP)\/} of a linear trellis~$T=(V,E)$ are
defined as $\text{SCP}(T):=(s_0,\ldots,s_{n-1})$, where $s_i=\dim V_i$, and
$\text{ECP}(T):=(e_0,\ldots,e_{n-1})$, where $e_i=\dim E_i$.
Throughout this paper, the notion of {\sl minimality\/} for tail-biting trellises refers to any of the orderings
discussed by Koetter/Vardy in \cite[Sec.~III]{KoVa03}.
While for conventional trellises all these minimality notions coincide, this is not the case for tail-biting trellises.
In~\cite[Thm.~5.5, Thm.~5.6]{KoVa03} and \cite[Thm.~III.15]{GLW10} it has been shown that a minimal trellis (with respect to any of those orderings)
is a KV-trellis in the sense of our Definition~\ref{D-CharMat}.
In this paper we will be concerned with KV-trellises, and a specific notion of minimality will not be needed.

Linear trellises $T=(V,E)$ and $T'=(V',E')$ are called {\sl isomorphic\/} if there exists a bijection
$\phi:V\longrightarrow V'$ such that $\phi(V_i)=V'_i$ and $\phi|_{\TS V_i}:V_i\longrightarrow V'_i$ is an
isomorphism for all $i\in\cI$ and $(v,\,a,\,w)\in E_i$ if and only if $(\phi(v)\,\,a,\,\phi(w))\in E'_i$.
Obviously, isomorphic trellises represent the same code.


Finally, we fix the following notation pertaining to the code under consideration and its representation.
Throughout, let
\begin{equation}\label{e-Cdata}
   \cC=\im G=\ker H\T\subseteq\F^n\text{ be a $k$-dimensional code with support } \cI=\{0,\ldots,n-1\},
\end{equation}
where the latter means that for each $j\in\cI$ there exists a codeword~$(c_0,\ldots,c_{n-1})\in\cC$ such that
$c_j\not=0$.
Here, $\im M:=\{\alpha M\!\mid\! \alpha\in\F^m\}$ and $\ker M:=\{\alpha\in\F^m\!\mid\! \alpha M=0\}$ denote the
row space and left kernel of the matrix~$M\in\F^{m\times n}$, respectively.
We assume $G\in\F^{r\times n}$,  hence $\rk G=k\leq r$, and will explicitly state when $r=k$ and thus~$G$
is a full row rank encoder matrix.
Throughout, $H\in\F^{(n-k)\times n}$  is a full row rank parity check matrix.
Furthermore, we fix the notation
\begin{equation}\label{e-Gdata}
    G=(g_{lj})_{l=1,\ldots,r\ \;\atop j=0,\ldots,n-1}
     =\begin{pmatrix}G_0^{\sf T}&\ldots&G_{n-1}^{\sf T}\end{pmatrix}\in\F^{r\times n}\;\text{ and }\;
    H\T=\begin{pmatrix}H_0\\\vdots\\ H_{n-1}\end{pmatrix}\in\F^{n\times(n-k)}.
\end{equation}
Hence $G_j^{\sf T}\in\F^r$ and $H_j^{\sf T}\in\F^{n-k}$ are the columns of~$G$ and~$H$, respectively.
Finally, in order to avoid extreme cases, we will also assume that~$\cC^{\perp}$ has support~$\cI$.
As for the matrices~$G$ and~$H$ above, we will use the notation $M_j^{\sf T}$ for the $j$-th column
of the matrix~$M$ and we will employ the (Maple) notation $\row(M,l)$ for the $l$-th row of~$M$.

\section{KV-Trellises and the BCJR-Construction}\label{S-Prelim}
\setcounter{equation}{0}
We will begin by briefly recalling the main results about products of (tail-biting) elementary trellises.
Thereafter we turn to KV-trellises, the product trellises obtained by choosing $k$ linearly independent characteristic
generators of the code as introduced by Koetter/Vardy in~\cite{KoVa03}.
Finally, we will discuss the BCJR-construction of trellises and recall some results from~\cite{GLW10} pertaining to the
relation between the product- and the BCJR-construction.

Due to the cyclic structure of the time axis~$\cI$, the following interval
notation has proven to be very convenient.
For $a,\,b\in\cI$ we define
$[a,\,b]:=\{a, a+1,\ldots,b\}$ if $a\leq b$ and
$[a,\,b]:=\{a,a+1,\ldots,n-1,0,1,\ldots,b\}$ if $a> b$.
Moreover, we set $(a,\,b]:=[a,b]\backslash\{a\}$.
We call the intervals $(a,\,b]$ and $[a,\,b]$ {\sl conventional\/} if $a\leq b$ and {\sl circular\/} otherwise.
Notice that $(a,a]=\emptyset$.
It is easy to see that $\cI\,\backslash\,(a,\,b]=(b,\,a]$ for all $a\not=b$.
Hence the complement of a nonempty conventional interval is circular and vice versa.
The following notion from \cite[p.~2089]{KoVa03} will be crucial for this paper.

\begin{defi}
\label{D-vectorspan}
For a vector $c=(c_0,\ldots,c_{n-1})\in\F^n\backslash\{0\}$ we call any half-open interval $(a,b]$ a {\sl span\/}
      of~$c$ if $c_a\not=0\not=c_b$ and if the closed interval $[a,b]$ contains the support of~$c$.
\end{defi}
Excluding the starting point~$a$ from the span does not seem to be intuitive but will be very convenient for our purposes.
\footnote{It would be more accurate to distinguish between a time axis for the symbols and a time axis for the
states, see~\cite{Fo09}.
Then the span $(a,b]$ of a vector is its active state interval
(that is, it is the interval of nonzero states of its corresponding cycle in the elementary trellis),
while its active symbol interval is given by $[a,b]$.}

Let~$(a,b]$ be a span of the nonzero vector $c=(c_0,\ldots,c_{n-1})\in\F^n$.
The {\sl elementary trellis for the pair\/} $(c,(a,b])$ is defined as $T_{c,(a,b]}:=(V,E)$ with
the state spaces and transition spaces given by $V_j=\im (\mu_j)\subseteq\F$ and
$E_j=\im(\mu_j,c_j,\mu_{j+1})\subseteq V_j\times\F\times V_{j+1}$, respectively, where $\mu_j=1$ for $j\in(a,b]$
and $\mu_j=0$ otherwise.
Thus~$V_j=\F$ for $j\in(a,b]$ and $V_j=\{0\}$ otherwise.
The trellis  $T_{c,(a,b]}$ is linear, reduced, biproper,  one-to-one, and represents the 1-dimensional code
in~$\F^n$ generated by~$c$.
The trellis is conventional if and only if $(a,b]$ is a conventional span.
Obviously, the SCP and ECP are given by $(s_0,\ldots,s_{n-1})$ and $(e_0,\ldots,e_{n-1})$, respectively,
where $s_j=1$ if $j\in(a,b]$ and $s_j=0$ if $j\not\in(a,b]$, while $e_j=1$
if $j\in[a,b]$ and $e_j=0$ if $j\not\in[a,b]$.

For the following notion of product trellises recall the definition and basic properties of trellis products
$T_1\times T_2$; see, for instance,~\cite[p.~2089]{KoVa03} and \cite[Prop.~III.4]{GLW10}).

\begin{defi}\label{D-Prodtrellis}
Let $\cC=\im G$, where~$G\in\F^{r\times n}$ has no zero rows.
Denote the rows of~$G$ by $g_1,\ldots,g_r\in\F^n$ and let
$\cS:=[(a_l,b_l],\,l=1,\ldots,r]$ be a {\sl span list\/} for~$G$, that is,~$(a_l,b_l]$ is a span
(conventional or circular) for the row~$g_l,\,l=1,\ldots,r$.
The {\sl product trellis\/} $T_{G,\cS}$ is defined as the trellis $T_{g_1,(a_1,b_1]}\times\ldots\times T_{g_r,(a_r,b_r]}$.
In other words, the state and transition spaces of $T_{G,\cS}$ are given by $V_j=\im M_j$ and
$E_j=\im (M_j,G_j^{\sf T},M_{j+1})$, where, as before,~$G_j^{\sf T}$ denotes the $j$-th column of~$G$ and
\[
   M_j=\begin{pmatrix}\mu^1_{j}& & \\ &\ddots& \\& &\mu^r_{j}\end{pmatrix}
        \in\F^{r\times r}, \text{ where }
   \mu^l_{j}=\left\{\begin{array}{ll}1,&\text{if }j\in (a_l,b_l],\\
                                           0,&\text{if }j\not\in(a_l,b_l].
                   \end{array}\right.
\]
\end{defi}
Let us briefly comment on the subtle differences in the naming of such trellises in the literature.
First of all, due to Definition~\ref{D-vectorspan}, we only consider spans such that
the vector is nonzero at the endpoints of that span (as opposed to general intervals containing the support of the vector).
This way, we immediately exclude certain (but not all) mergeable trellises; see also ~\cite[Lemma~4.3]{KoVa03}.
This makes our definition more restrictive than the one in~\cite{KoVa03}.
Second, in the paper~\cite{NoSh06}, product trellises of the type above (and
where~$G$ has full row rank) are called KV-trellises, and the matrix~$G$, along with its span list, is called a KV-product matrix.
In the present paper, we will reserve the name KV-trellises for a particular type of product trellis that has been introduced
by Koetter/Vardy~\cite{KoVa03}; see Definition~\ref{D-CharMat}.

The following properties of product trellises are easy to see.
Later on we will only deal with product trellises where the starting points (resp., ending points) of the spans are distinct,
and therefore we restrict ourselves to this case in~(d) below.
One could easily give the formulas for the more general case.

\begin{prop}\label{P-formulas}
Let the data be as in the previous definition and put $T:=T_{G,\cS}$.
Then
\begin{alphalist}
\item $T$ is a linear and reduced trellis.
\item $T$ is one-to-one if and only if $\rk G=r$.
\item $T$ is biproper if and only if $a_1,\ldots,a_r$ are distinct and $b_1,\ldots,b_r$ are distinct.
\item Let $a_1,\ldots,a_r$ be distinct and $b_1,\ldots,b_r$ be distinct.
      Then the SCP and ECP of~$T$ are given by $(s_0,\ldots,s_{n-1})$ and $(e_0,\ldots,e_{n-1})$, respectively, where
      $s_j=|\{l=1,\ldots,r\mid j\in(a_l,b_l]\}|$ and
      \[
        e_j=\left\{\begin{array}{ll}s_j+1,&\text{if }j\in\{a_1,\ldots,a_r\}\\s_j,&\text{if }j\not\in\{a_1,\ldots,a_r\}\end{array}\right\}
        =\left\{\begin{array}{ll}s_{j+1}+1,&\text{if }j\in\{b_1,\ldots,b_r\}\\s_{j+1},&\text{if }j\not\in\{b_1,\ldots,b_r\}.\end{array}\right\}
      \]
\end{alphalist}
\end{prop}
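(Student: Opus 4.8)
The plan is to verify each of the four items by unwinding the definitions of product trellis, linearity, reducedness, one-to-one, and biproperness, together with the explicit description of the state and transition matrices $M_j$ from Definition~\ref{D-Prodtrellis}. Since $T_{G,\cS}=T_{g_1,(a_1,b_1]}\times\cdots\times T_{g_r,(a_r,b_r]}$ is a product of elementary trellises, and each elementary trellis is linear, reduced, biproper, and one-to-one (as recalled in the paragraph following Definition~\ref{D-vectorspan}), most of the work is to check that the relevant properties are inherited by products, and to compute the profiles. For (a), linearity is immediate because $V_j=\im M_j$ and $E_j=\im(M_j,G_j^{\sf T},M_{j+1})$ are images of linear maps, hence subspaces; reducedness follows since for each row $g_l$ the cycle through the elementary trellis $T_{g_l,(a_l,b_l]}$ uses every state and edge of that factor, and running all factors simultaneously along the codeword $g_l$ (with the other coordinates zero) exhibits every basis state $e_l\mu_j^l$ and every transition as lying on a cycle; one then argues that an arbitrary state is a sum of such states lying on cycles, which for a reduced linear trellis suffices by a standard support argument. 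I would cite \cite[Prop.~III.4]{GLW10} here to shortcut this.

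For (b), the trellis $T$ is one-to-one precisely when the map sending a cycle to its edge-label sequence is injective. A cycle in $T_{G,\cS}$ corresponds to a choice $\alpha\in\F^r$ with edge-label sequence $\alpha G$; distinct cycles $\alpha,\alpha'$ have the same label sequence iff $(\alpha-\alpha')G=0$, i.e. iff $\ker G\neq\{0\}$, which fails exactly when $\rk G=r$. The subtlety is that one must also check distinct $\alpha$ give distinct cycles (not merely distinct labels) — but that is automatic since the state at time $j$ is $\alpha M_j$ and the $\mu_j^l$ are such that each row of $G$ has a coordinate in its span, so $\alpha\mapsto(\alpha M_j)_j$ is injective; this needs that no span is empty, which holds because a span of a nonzero vector is nonempty.

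For (c), biproperness asks that two edges out of (or into) a common state carry distinct labels. Two edges out of the state $v=\alpha M_j\in V_j$ are $(\alpha M_j,\alpha G_j^{\sf T},\alpha M_{j+1})$ and $(\alpha M_j,\beta G_j^{\sf T},\beta M_{j+1})$ with $\alpha M_j=\beta M_j$; biproperness fails iff one can have $\alpha M_j=\beta M_j$ but $\alpha M_{j+1}\neq\beta M_{j+1}$, i.e. the map $V_j\to V_{j+1}$ is not well-defined as a partial function. Writing $\gamma=\alpha-\beta$, this happens iff there is $\gamma\neq 0$ with $\gamma M_j=0$ but $\gamma M_{j+1}\neq 0$ — equivalently some index $l$ with $j\notin(a_l,b_l]$ but $j+1\in(a_l,b_l]$, which means $j=a_l$; and the further requirement for a genuine failure (rather than just $\gamma\neq0$) forces these $a_l$ to coincide for the two conflicting edges, so biproperness at the left holds iff the $a_l$ are distinct. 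The ingoing case is symmetric with the $b_l$ and $j+1$, giving the distinctness of the $b_l$. I expect the bookkeeping of "$j$ enters or leaves the span $(a_l,b_l]$" to be the fiddliest part, because of the circular-interval convention.

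For (d), assuming the $a_l$ distinct and the $b_l$ distinct, $s_j=\dim V_j=\dim\im M_j=\rk M_j=|\{l:\mu_j^l=1\}|=|\{l:j\in(a_l,b_l]\}|$, directly from the diagonal form of $M_j$. For $e_j=\dim E_j=\rk(M_j,G_j^{\sf T},M_{j+1})$, observe $E_j$ is the image of the map $\alpha\mapsto(\alpha M_j,\alpha G_j^{\sf T},\alpha M_{j+1})$; its kernel consists of $\alpha$ with $\alpha M_j=0=\alpha M_{j+1}$ and $\alpha G_j^{\sf T}=0$, i.e. $\alpha$ supported on $\{l: j\notin(a_l,b_l]\text{ and }j+1\notin(a_l,b_l]\}$ with $\sum_l\alpha_l g_{lj}=0$. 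Comparing with $s_j=\rk M_j$, one finds $e_j=s_j$ unless there is exactly one extra degree of freedom, which occurs precisely when some index $l$ has $j=a_l$ (so $j\notin(a_l,b_l]$ but $j{+}1\in(a_l,b_l]$, contributing to $M_{j+1}$ but not $M_j$) — and the distinctness of the $a_l$ guarantees at most one such $l$, giving $e_j=s_j+1$ iff $j\in\{a_1,\ldots,a_r\}$. The second formula, with $s_{j+1}$ and the $b_l$, follows identically by looking at which $l$ have $j+1\in(a_l,b_l]$ but $j\notin(a_l,b_l]$ versus the reverse, i.e. $j=b_l$. The main obstacle throughout is handling the modular/circular interval arithmetic carefully; once the membership pattern of $j$ and $j+1$ in each $(a_l,b_l]$ is tabulated, the rank computations are routine.
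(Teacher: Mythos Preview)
Your treatment of (a), (b), and (d) is correct and matches the paper's approach, which simply defers to properties of elementary trellises and their products (citing \cite[Prop.~III.4]{GLW10}); you supply the details the paper omits, and they are sound. In particular, your kernel computation for~(d) is exactly right: the condition $\alpha G_j^{\sf T}=0$ is automatically satisfied once $\alpha M_j=\alpha M_{j+1}=0$, because $g_{lj}=0$ whenever $j\notin[a_l,b_l]$, so $e_j=|\{l:j\in[a_l,b_l]\}|=s_j+|\{l:a_l=j\}|$, and the distinctness hypothesis pins the last term to~$0$ or~$1$.

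Part~(c), however, has a genuine gap. You write that outgoing properness fails iff one can have $\alpha M_j=\beta M_j$ but $\alpha M_{j+1}\neq\beta M_{j+1}$. This is false: already for a single elementary trellis $T_{g,(a,b]}$ at time $j=a$ one has $M_j=0$ and $M_{j+1}\neq 0$, so $\gamma=1$ gives $\gamma M_j=0$, $\gamma M_{j+1}\neq 0$---yet the elementary trellis is biproper. The correct characterization of an outgoing-properness failure is the existence of~$\gamma$ with $\gamma M_j=0$, $\gamma G_j^{\sf T}=0$, and $\gamma M_{j+1}\neq 0$; the middle condition (equal edge labels) is precisely what you dropped. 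Once it is restored, one finds $\gamma G_j^{\sf T}=\sum_{l:\,a_l=j}\gamma_l\, g_{l,a_l}$ (the remaining terms vanish since $g_{lj}=0$ for $j\notin[a_l,b_l]$ and $\gamma_l=0$ for $j\in(a_l,b_l]$), and because each $g_{l,a_l}\neq 0$ by the definition of a span, a nontrivial $\gamma$ with $\gamma M_{j+1}\neq 0$ and $\gamma G_j^{\sf T}=0$ exists iff at least two of the~$a_l$ equal~$j$. Your phrase ``the further requirement for a genuine failure\ldots forces these $a_l$ to coincide'' gestures at this conclusion, but without naming and using the label condition the argument does not stand. The paper itself avoids the direct computation entirely, citing the proof of \cite[Cor.~4.5]{KoVa03} for ``$\Rightarrow$'' and \cite[Thm.~III.6]{GLW10} for the converse.
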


\begin{proof}
(a),~(b) as well as the formulas in~(d) follow easily from the properties of elementary trellises as well as
those of trellis products; see also \cite[Prop.~III.4]{GLW10}.
\\
(c) The implication ``$\Rightarrow$ '' follows from the proof of \cite[Cor.~4.5]{KoVa03}.
The converse has been shown in \cite[Thm.~III.6]{GLW10}.
\end{proof}

We now turn to a particular class of product trellises, introduced by Koetter/Vardy~\cite{KoVa03}.
A main result of~\cite{KoVa03} -- and a major breakthrough in the study of minimal tail-biting trellises --
is the construction of a list of characteristic generators, collected in a characteristic matrix,  from
which all minimal trellises can be derived.
In~\cite{KoVa03}, this matrix is defined as the outcome of a particular procedure.
We will follow the presentation in~\cite{GLW10} and define the characteristic matrix in terms of its
relevant properties.
A justification of this approach versus the one in~\cite{KoVa03} has been given in~\cite[Sec.~III]{GLW10}

\begin{defi}\label{D-CharMat}
Let $\cC\subseteq\F^n$ be as in\eqnref{e-Cdata}.
A {\sl characteristic pair\/} of~$\cC$ is defined to be a pair $(X,\cT)$, where
\begin{equation}\label{e-XY}
  X=\begin{pmatrix}x_1\\ \vdots\\ x_n\end{pmatrix}\in\F^{n\times n}\text{ and }
  \cT=\big[(a_l,b_l],\,l=1,\ldots,n\big]
\end{equation}
have the following properties.
\begin{romanlist}
\item $\im X=\cC$, that is, $\{x_1,\ldots,x_n\}$ forms a generating set of~$\cC$.
\item $(a_l,\,b_l]$ is a span of~$x_l$ for $l=1,\ldots,n$.
\item $a_1,\ldots,a_n$ are distinct and $b_1,\ldots,b_n$ are distinct.
\item For all $j\in\cI$, there exist exactly $n-k$ row indices, $l_1,\ldots,l_{n-k}$, such that
      $j\in(a_{l_i},\,b_{l_i}]$ for $i=1,\ldots,n-k$.
\end{romanlist}
We call~$X$ a {\sl characteristic matrix\/} of~$\cC$ and~$\cT$ the {\sl characteristic span list}.
The rows of~$X$ are also called {\sl characteristic generators}.
A trellis $T_{G,\cS}$ is called a {\sl KV$_{(X,\cT)}$-trellis\/} of~$\cC$ if $G\in\F^{k\times n}$
consists of~$k$ distinct linearly independent rows of~$X$ and $\cS$ consists of the corresponding~$k$ spans in~$\cT$.
A trellis is called a {\sl KV-trellis\/} of~$\cC$ if it is a KV$_{(X,\cT)}$-trellis for some
characteristic pair $(X,\cT)$ of~$\cC$.
\end{defi}

The relevance of characteristic pairs and KV-trellises becomes apparent in part~(b) below:
the class of KV-trellises contains all minimal trellises.
As a consequence, one may restrict the study of tail-biting trellises to the important class of KV-trellises.
For the following results it is crucial that~$\cC$ has support~$\cI$.
\begin{theo}[\mbox{\cite[Sec.~V]{KoVa03}, \cite[Thm.~III.15]{GLW10}}]\label{T-KVresults}
\begin{alphalist}
\item The code~$\cC$ has a characteristic pair, and the characteristic span list is, up to ordering,
      uniquely determined by~$\cC$.
\item For every minimal trellis~$T$ of~$\cC$ there exists a characteristic pair $(X,\cT)$ such that~$T$ is a KV$_{(X,\cT)}$-trellis.
\end{alphalist}
\end{theo}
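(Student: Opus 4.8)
The plan is to prove the two parts of Theorem~\ref{T-KVresults} by combining the known structure theory of minimal tail-biting trellises with the span-list uniqueness, both of which are essentially already available in the cited literature; the task is to assemble them in the present framework. Since the paper explicitly says it follows the presentation of~\cite{GLW10} and defines the characteristic matrix by its properties (i)--(iv) rather than by Koetter/Vardy's procedure, the proof should really be a \emph{pointer proof}: verify that the object produced by~\cite[Sec.~V]{KoVa03} satisfies (i)--(iv), and that the structure theorem for minimal trellises says what part~(b) claims.

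For part~(a), I would proceed as follows. First I would invoke the construction in~\cite[Sec.~V]{KoVa03}: for a code $\cC$ of length $n$ with full support, Koetter/Vardy produce a list of $n$ vectors in $\cC$, each with a designated span, and show that at every time index $j$ exactly $n-k$ of these spans are ``active'' (contain $j$), that the $n$ left endpoints are distinct and the $n$ right endpoints are distinct, and that the $n$ vectors span $\cC$. Translating their conclusions into the language of Definition~\ref{D-CharMat}, this is precisely a characteristic pair $(X,\cT)$, so existence follows. For uniqueness of the span list up to ordering, I would cite~\cite[Thm.~5.12]{KoVa03} (or the corresponding statement in~\cite[Thm.~III.15]{GLW10}): the multiset of characteristic spans is an invariant of the code. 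One clean way to see the invariance intrinsically is to count, for each pair $(j, j')$, how many spans equal a given interval, using condition~(iv) together with the distinctness condition~(iii) — the number of spans with left endpoint $a$ is $1$ for every $a\in\cI$, and then (iv) pins down, inductively along the time axis, how many of those have right endpoint at or before any given point; this forces the multiset of spans. I would state this but lean on the cited theorem rather than redo the combinatorics in full.

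For part~(b), the key input is the structure theorem already quoted in the Introduction: \cite[Thm.~5.5, Thm.~5.6]{KoVa03} together with \cite[Prop.~III.14, Thm.~III.15]{GLW10} state that every minimal tail-biting trellis (with respect to any of the orderings in~\cite[Sec.~III]{KoVa03}) is structurally isomorphic to a product $T_{g_{l_1},(a_{l_1},b_{l_1}]}\times\cdots\times T_{g_{l_k},(a_{l_k},b_{l_k}]}$ of elementary trellises built from $k$ linearly independent rows of \emph{some} characteristic matrix, and moreover — by the refinement in~\cite[Thm.~III.15]{GLW10} — that one can choose the characteristic pair so that the given minimal trellis is genuinely isomorphic (edge labels included), not merely structurally isomorphic, to that product. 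By Definition~\ref{D-CharMat} such a product is exactly a KV$_{(X,\cT)}$-trellis, so part~(b) follows. I would spell out the one point that needs care: the ordering conventions and the adjustments flagged in~\cite{GLW10} (our Definition~\ref{D-vectorspan} requires the vector to be nonzero at both endpoints of its span, which is slightly more restrictive than Koetter/Vardy's original intervals), so I would note that the characteristic pairs produced there do meet our stricter span convention — this is guaranteed because characteristic generators are shortest vectors in $\cC$ for their respective support class (cf. Lemma~\ref{L-ShortestGen}, Remark~\ref{R-greedy}), and a shortest vector is automatically nonzero at the endpoints of its minimal-length span.

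The main obstacle is not mathematical depth but bookkeeping across three sources with slightly different conventions: reconciling the KV procedure-based definition, the property-based definition of~\cite{GLW10}, and our Definition~\ref{D-CharMat}, and in particular making sure the distinctness clauses~(iii) and the exactly-$n-k$ clause~(iv) are exactly what the cited theorems deliver (they are, after the adjustments in~\cite{GLW10}). I expect to handle this by stating the translation explicitly once and then treating both parts as corollaries of \cite[Sec.~V, Thm.~5.12]{KoVa03} and \cite[Thm.~III.15]{GLW10}, rather than reproving anything; a full self-contained proof would essentially duplicate Section~III of~\cite{GLW10}, which is why the theorem is stated here with a citation rather than an original argument.
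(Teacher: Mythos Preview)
Your proposal is correct and matches the paper's approach: the paper gives no proof at all for this theorem, treating it entirely as a cited result from \cite[Sec.~V]{KoVa03} and \cite[Thm.~III.15]{GLW10}, and your ``pointer proof'' simply unpacks what those citations say and checks that the conventions line up with Definition~\ref{D-CharMat}. There is nothing to compare beyond this, since the paper's own ``proof'' is the citation itself.
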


The following property shows that characteristic spans  are ``shortest spans'' for each given starting point.
This property not only makes it algorithmically easy to find characteristic generators for a given code, see Remark~\ref{R-greedy}
below, but will also be crucial later on in order to derive strong properties for KV-trellises.

\begin{lemma}\label{L-ShortestGen}
Let $c\in\cC$ be a nonzero codeword with (conventional or circular) span
$(a,\,b]$.
Then $(a,\,\hat{b}]\subseteq(a,\,b]$, where $(a,\,\hat{b}]$ is the unique characteristic span starting at~$a$.
\end{lemma}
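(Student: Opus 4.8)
The plan is to show that if $c$ is a nonzero codeword and $(a,\hat b]$ is the characteristic span starting at~$a$ (which exists and is unique by Theorem~\ref{T-KVresults}(a), since~$a$ is one of the $n$ distinct starting points $a_1,\dots,a_n$), then adding~$c$ to the corresponding characteristic generator $x$ (the one with span $(a,\hat b]$) cannot increase the span beyond $(a,b]$, and in fact must produce a span still starting at~$a$ and contained in $(a,b]$; comparing with uniqueness of the characteristic span at~$a$ then forces $(a,\hat b]\subseteq(a,b]$. Concretely, let $(X,\cT)$ be a characteristic pair with $\cT=[(a_l,b_l]\mid l=1,\dots,n]$, and suppose $a=a_m$, so $x_m$ has span $(a,\hat b]$ with $\hat b=b_m$. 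Since $x_m\in\cC$ and $c\in\cC$, any $\F$-linear combination $\alpha x_m+c$ lies in~$\cC$. Because $x_m$ is nonzero exactly on the closed interval from the first nonzero coordinate after~$a$ through~$\hat b$, and $c$ is nonzero exactly on a set contained in $[a,b]$ with $c_a\neq 0$, the support of $\alpha x_m + c$ is contained in $[a,b]\cup[a,\hat b]$; I want to argue this union is itself one of the two closed intervals, i.e. that $[a,\hat b]\subseteq[a,b]$.

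The key step is the following \emph{minimality/interval-nesting} argument. Suppose for contradiction $\hat b\notin[a,b]$, i.e. the characteristic span $(a,\hat b]$ is \emph{not} contained in $(a,b]$. Since $(a,b]$ and $(a,\hat b]$ are two half-open intervals with the \emph{same} left endpoint~$a$, on the cyclic time axis this means $b\in(a,\hat b]$ strictly, so $(a,b]\subsetneq(a,\hat b]$. Now consider the codeword $c':=x_m-\lambda c$ where $\lambda=c_{\hat b}^{-1}\cdot 0$… more carefully: I will build from $c$ and $x_m$ a nonzero codeword whose support is contained in the strictly smaller closed interval $[a,b]$ and which is still nonzero at~$a$ — for instance, if $\hat b\neq b$ then $x_m$ is nonzero at $\hat b\notin[a,b]$ while $c$ is zero there, so no combination can kill position~$\hat b$ unless we take $x_m$ with coefficient~$0$; but then the combination is just a multiple of~$c$, with span starting at~$a$ and contained in $(a,b]$. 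This is already a contradiction: $c$ itself is a nonzero codeword with a span of the form $(a,\beta]$ for some $\beta$ with $(a,\beta]\subseteq(a,b]\subsetneq(a,\hat b]$, yet by Theorem~\ref{T-KVresults}(a) there is exactly one characteristic span at~$a$, namely $(a,\hat b]$, and by the defining property~(iv) of a characteristic pair every span at a given starting point that occurs for a codeword must dominate the characteristic one — this last implication is exactly what needs to be pinned down.

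So the real content, and the main obstacle, is to establish that the characteristic span $(a,\hat b]$ is contained in \emph{every} span $(a,b]$ of \emph{any} nonzero codeword starting at~$a$; equivalently, that $\hat b$ is the ``closest'' possible right endpoint to~$a$ among all codeword spans with left endpoint~$a$. I would derive this from property~(iv) (the state-count condition): fix the starting point $a$ and look at how the state-space dimensions $s_j=|\{l: j\in(a_l,b_l]\}|$ must behave; property~(iv) pins all $s_j$ to the constant $n-k$, and a standard counting/exchange argument (as in \cite[Sec.~V]{KoVa03}) shows that replacing a characteristic generator $x_m$ by any codeword $c$ with span $(a,b]$ and re-sorting endpoints yields another span list satisfying (i)--(iv); by the uniqueness in Theorem~\ref{T-KVresults}(a) the span at~$a$ is unchanged, forcing $(a,\hat b]=(a,b')\subseteq(a,b]$ where $(a,b']$ is the span of $c$ obtained in the new list and hence $b'=\hat b$. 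The routine but slightly fiddly verification that the swapped list still satisfies (iii) and (iv) — the distinctness of endpoints and the exact count $n-k$ — is where I expect to spend most of the effort; everything else is immediate from Definition~\ref{D-CharMat} and Theorem~\ref{T-KVresults}.
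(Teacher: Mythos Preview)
Your proposal has a genuine gap in the exchange argument, precisely at the step you flag as ``routine but slightly fiddly.'' After replacing $x_m$ (with span $(a,\hat b]$) by $c$ with span $(a,b]$, property~(iii) of Definition~\ref{D-CharMat} will in general \emph{fail}: since the $n$ ending points $b_1,\ldots,b_n$ already exhaust all of~$\cI$, the new ending point~$b$ necessarily coincides with some $b_l$, $l\neq m$, unless $b=\hat b$. Likewise property~(iv) changes: for every $j\in(b,\hat b]$ the count drops from $n-k$ to $n-k-1$. So the swapped list is not a characteristic pair, and you cannot invoke uniqueness. Worse, if the swap \emph{did} yield a characteristic pair, uniqueness of the span list would force the span of $c$ at~$a$ to equal $(a,\hat b]$; but the only span you can legitimately assign to~$c$ starting at~$a$ and contained in $(a,b]$ is $(a,b]$ itself (since $c_b\neq0$ and $\mathrm{supp}(c)\subseteq[a,b]$), so you would conclude $b=\hat b$ rather than $(a,\hat b]\subseteq(a,b]$ --- too strong, and false in general. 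Repairing (iii) by further combining $c$ with the generator $x_l$ ending at~$b$ is possible, but this is an inductive reduction that amounts to rebuilding the MSGM theory, not a routine check.

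The paper avoids all of this by a cyclic shift reducing to $a=0$, and then appealing to the classical conventional theory: the~$k$ characteristic generators with conventional spans (those not containing~$0$) form an MSGM of~$\cC$, hence have minimal total span length among all generator matrices. If $(0,\hat b]\not\subseteq(0,b]$, then $(0,b]\subsetneq(0,\hat b]$; replacing the MSGM row with span $(0,\hat b]$ by~$c$ still yields a generator matrix (the other $k-1$ rows vanish at position~$0$ while $c_0\neq0$, so independence is preserved) but with strictly smaller total span length --- a contradiction. This sidesteps the endpoint-distinctness issue entirely because the MSGM minimality criterion does not require distinct endpoints.
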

\begin{proof}
Notice that by Def.~\ref{D-CharMat}(iii), there does indeed exist a characteristic span $(a,\hat{b}]$ starting at~$a$.
Let us first consider the case where $a=0$.
By~(iv) of the same definition there exist exactly~$k$ characteristic spans not containing~$0$, thus~$k$ conventional spans.
One of these spans is $(0,\hat{b}]$.
Definition~\ref{D-CharMat}(iii) tells us that generators with these spans form an MSGM of~$\cC$ in the sense of
\cite[Def.~6.2, Thm.~6.11]{McE96}.
As a consequence, it has minimal span length.
This implies that $(0,\hat{b}]\subseteq(0,b]$, for otherwise we could
replace the generator with span $(0,\hat{b}]$ by~$c$ and obtain a generator matrix with shorter span length.
\\
If $a\not=0$, we may use the cyclic left shift~$\sigma^a$ by~$a$ units in~$\F^n$.
From Definition~\ref{D-CharMat} we obtain immediately that if
$\cT=[(a_1,b_1],\ldots,(a_n,b_n]]$ is the characteristic span list of~$\cC$ with generators $x_1,\ldots,x_n$, then
$[(a_1-a,b_1-a],\ldots,(a_n-a,b_n-a]]$ is the characteristic span list of the code~$\sigma^a(\cC)$ with generators
$\sigma^a(x_1),\ldots,\sigma^a(x_n)$; see also Remark~\ref{R-shift} further down.
Since~$\sigma^a(c)\in\sigma^a(\cC)$ has span $(0,b-a]$, we may now make use of the first case.
This leads to the desired result.
\end{proof}

\begin{rem}\label{R-greedy}
The above result allows us to set up a greedy algorithm for finding the characteristic spans of a given code (with support~$\cI$).
Namely, for $a=0,\ldots,n-1$, let $b_a\in\cI$ be such that $(a,b_a]$ is the shortest span among all possible spans starting at time~$a$
of the  (nonzero) codewords in~$\cC$.
Then the resulting list $[(a,b_a],\,a=0,\ldots,n-1]$ is the characteristic span list of~$\cC$.
The same can be done by screening the spans by their ending points.
Of course, the greedy algorithm will also produce, at the same time, a list of characteristic generators, that is,
a characteristic matrix.
This generalizes the ``shortest basis approach'' described by Forney in \cite{Fo09} for very general (conventional) realizations
to the tail-biting case.
\end{rem}

The following construction of tail-biting trellises has been introduced by Nori/Shankar in~\cite[Sec.~III]{NoSh06} and
studied in detail in~\cite[Sec.~IV]{GLW10}.
We will generalize this slightly by also allowing generator matrices that do not have full rank.
For~(b) recall that $\row(D,l)$ denotes the $l$-th row of the matrix~$D$.

\begin{defi}\label{D-NS}
Let the code~$\cC$ and the matrices~$G\in\F^{r\times n},\,H\in\F^{(n-k)\times n}$ be as in\eqnref{e-Cdata} and\eqnref{e-Gdata}.
\begin{alphalist}
\item Let $D\in\F^{r\times(n-k)}$ be any matrix.
      For $i\in\cI$ define the matrices
      \begin{equation}\label{e-NMat}
        N_0=D\text{ and }N_i=N_{i-1}+G_{i-1}^{\sf T}H_{i-1}\text{ for }i>0.
      \end{equation}
      Then $N_n=N_0$.
      We define~$T_{(G,H,D)}$ to be the trellis with state spaces $V_i:=\im N_i\subseteq\F^{n-k}$ and transition spaces
      $E_i=\im (N_i,G_i^{\sf T},N_{i+1})$.
      It is easy to see that~$T_{(G,H,D)}$ is a linear, reduced, and biproper trellis representing the code~$\cC$.
      We call~$D$ the displacement matrix for the trellis~$T_{(G,H,D)}$.
\item Let $\cS:=[(a_l,b_l],l=1,\ldots,r]$ be a span list of~$G$.
      Then the trellis $T_{(G,H,\cS)}$ is defined as $T_{(G,H,D)}$, where
      \begin{equation}\label{e-DMat}
         D
         \in\F^{r\times(n-k)}\text{ is such that }\row(D,l)=\sum_{j=a_l}^{n-1}g_{lj} H_j\text{ for }j=1,\ldots,r.
      \end{equation}
      The trellis $T_{(G,H,\cS)}$ is called a {\sl (tail-biting) BCJR-trellis\/} of~$\cC$.
\end{alphalist}
\end{defi}
Notice that if $D=0$, then the trellis $T_{(G,H,D)}$ is conventional.
If in addition, $r=k=\rk G$, then $T_{(G,H,D)}$ is, in fact, the classical conventional
BCJR-trellis of~$\cC$, hence minimal, see \cite[Sec.~IV]{McE96} and~\cite{BCJR74}.
In general, the rows of the displacement matrix~$D$ may be interpreted as the (circular) past of the generators in~$G$.
As always in state space realizations, this past is captured in the state at time~$0$ through which the associated
trajectory (cycle) passes.
Consequently, the $l$-th rows of the matrices~$N_j$ form the sequence of states through which the cycle induced by the $l$-th
generator passes.
In this sense, the BCJR-trellis is, just like the product trellis, based on~$r$ individual generators  with spans.
The state spaces arise as the sum of the state spaces of each generator (while for product trellises they are the direct sum of
the state spaces).
As a consequence, BCJR-trellises are not always one-to-one, even if the generator matrix~$G$ has full row rank.

The following results have been proven in~\cite{GLW10} for the case where $G\in\F^{k\times n}$ has rank~$k$.
One can easily verify that the same proofs apply to $G\in\F^{r\times n}$ with $r>k=\rk G$.
For the notion of mergeability, we refer to \cite[Sec.~II.B]{KoVa03} or \cite[Sec.~II]{GLW10}.

\begin{theo}[\mbox{\cite[Cor.~IV.7, Thm.~IV.9, Rem.~IV.13, Thm.~IV.11]{GLW10}}]\label{T-BCJR}
Let~$G,\,H$, and the span list~$\cS$ be as in Definition~\ref{D-NS}.
\begin{alphalist}
\item The BCJR-trellis $T_{(G,H,\cS)}$ is non-mergeable.
\item If the product trellis $T_{G,\cS}$ is non-mergeable, then $T_{G,\cS}$ is isomorphic to $T_{(G,H,\cS)}$.
\item If the product trellis $T_{G,\cS}$  has the same SCP as the BCJR-trellis
      $T_{(G,H,\cS)}$, then these two trellises are isomorphic.
      In this case, the starting points of the spans in~$\cS$ are distinct and so are the ending points.
\end{alphalist}
\end{theo}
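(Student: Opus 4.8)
The plan is to treat the three parts in order, leaning on the corresponding statements already established in \cite{GLW10} and on Proposition~\ref{P-formulas}. Since the cited results in \cite{GLW10} are proven there under the hypothesis that $G\in\F^{k\times n}$ has full row rank~$k$, the real content here is to check that none of those proofs actually use $r=k$, so that they carry over verbatim to a general $G\in\F^{r\times n}$ with $\rk G=k\le r$. For part~(a), I would recall the argument that a BCJR-trellis is non-mergeable: the state at time~$0$ of the cycle induced by the $l$-th generator encodes the (circular) past $\sum_{j=a_l}^{n-1}g_{lj}H_j$ determined by the span, and two distinct states at any time~$i$ are separated by some future or past edge-label behaviour because the $N_i$ are images of matrices built additively from the rank-one updates $G_{i-1}^{\sf T}H_{i-1}$. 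The mergeability criterion of \cite[Sec.~II]{GLW10} then fails; crucially this reasoning is about the internal structure of the $N_i$ and the spans, and is insensitive to whether the rows of~$G$ are linearly independent, so the proof of \cite[Cor.~IV.7]{GLW10} applies unchanged.

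For part~(b), the key is \cite[Thm.~IV.9]{GLW10}: the product trellis $T_{G,\cS}$ can be merged onto the BCJR-trellis $T_{(G,H,\cS)}$ by taking suitable quotients of its state spaces, via the natural surjection $\im M_i\twoheadrightarrow \im N_i$ sending the product-trellis state to the BCJR state it realizes. If $T_{G,\cS}$ is already non-mergeable, this merging map must be an isomorphism at every time~$i$, and one checks it is compatible with the transition spaces $E_i=\im(M_i,G_i^{\sf T},M_{i+1})$ versus $\im(N_i,G_i^{\sf T},N_{i+1})$; hence $T_{G,\cS}\cong T_{(G,H,\cS)}$ in the sense of the isomorphism defined in Section~\ref{S-Intro}. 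Again the merging construction of \cite[Thm.~IV.9]{GLW10} makes no use of $\rk G=r$, so I would simply invoke it.

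For part~(c), I would argue as follows. The merging map from (b) gives $s_i=\dim\im M_i\ge\dim\im N_i$ at every time, with equality at all~$i$ precisely when the map is an isomorphism everywhere, i.e.\ when the product and BCJR trellises have the same SCP. In that case the merging map is a trellis isomorphism by the same compatibility check as in (b), giving the first assertion. For the second assertion, suppose the SCPs agree; I want to conclude the starting points $a_1,\dots,a_r$ of the spans are distinct and likewise the $b_l$. Here is where I expect the main obstacle: by Proposition~\ref{P-formulas}(c), $T_{G,\cS}$ is biproper iff the $a_l$ are distinct and the $b_l$ are distinct, and part~(a) tells us $T_{(G,H,\cS)}$ is always biproper (non-mergeable BCJR-trellises are biproper); if the $a_l$ were \emph{not} distinct, then at the coinciding starting point the product trellis would have two edges from a common state with the same label, and I must show this property cannot survive the merging isomorphism — equivalently, that the isomorphism forces biproperness of $T_{G,\cS}$, contradiction. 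So the crux is to show: if the merging map is an isomorphism, then $T_{G,\cS}$ is biproper. This should follow because an isomorphism preserves the ``two edges out of the same vertex with the same label'' configuration, and $T_{(G,H,\cS)}$ has no such configuration; one then reads off from Proposition~\ref{P-formulas}(c) that the $a_l$ and $b_l$ are distinct. I would organize the write-up so that (c) cites \cite[Thm.~IV.11, Rem.~IV.13]{GLW10} for the SCP-equality-implies-isomorphism part and then adds this short biproperness argument, noting once more that the $\rk G=k<r$ case is handled by the identical reasoning.
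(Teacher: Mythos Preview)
Your proposal is correct and matches the paper's treatment: the paper gives no proof for this theorem at all, merely citing \cite[Cor.~IV.7, Thm.~IV.9, Rem.~IV.13, Thm.~IV.11]{GLW10} and remarking in the preceding sentence that ``one can easily verify that the same proofs apply to $G\in\F^{r\times n}$ with $r>k=\rk G$.'' Your write-up simply fleshes out that remark, and the biproperness argument you give for the second assertion of~(c) is exactly right (one small correction: biproperness of $T_{(G,H,D)}$ is asserted directly in Definition~\ref{D-NS}(a), not via non-mergeability from part~(a), so you can cite it from there).
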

It is worth noting that every product trellis $T_{G,\cS}$  can be merged to the corresponding
BCJR-trellis $T_{(G,H,\cS)}$, see \cite[Thm.~IV.9]{GLW10}.

\begin{exa}\label{E-selfdualBCJR}
\begin{alphalist}
\item This example appeared already in \cite[Ex.~IV.12]{GLW10}.
      It will be revisited later again when discussing dualization techniques.
      Consider the code $\cC=\im G=\ker H\T\subseteq\F_2^5$, where
       \[
           G=\begin{pmatrix}0&1&1&1&0\\1&0&0&1&0\\0&1&1&0&1\end{pmatrix},\quad
           H=\begin{pmatrix}1&0&1&1&1\\0&1&1&0&0\end{pmatrix}.
       \]
       Then $\cS:=[(1,3],\, (3,0],\,(2,1]]$ is a span list for~$G$.
       Making use of the greedy algorithm in Remark~\ref{R-greedy} and checking all~$8$ codewords, it is easy to
       see that $(1,3]$ and $(2,1]$ are characteristic
       spans of~$\cC$, whereas the codeword $(1,0,0,0,1)\in\cC$ with span $(4,0]$ shows that
       $(3,0]$ is not a characteristic span.
       The BCJR-trellis $T:=T_{(G,H,\cS)}$ has state and transition spaces given by
       $V_j=\im N_j$ and $E_j=\im(N_j,G_j^{\sf T},N_{j+1})$, where~$N_j$ are given in the matrix
       \begin{equation}\label{e-Niexample}
       \begin{split}
          &(N_0|G_0^{\sf T}|N_1|G_1^{\sf T}|N_2|G_2^{\sf T}|N_3|G_3^{\sf T}|N_4|G_4^{\sf T}|N_0)=\mbox{}\hspace*{10em}\\
          &\mbox{}\hspace*{6em}\left(\!\!\begin{array}{cc|c|cc|c|cc|c|cc|c|cc|c|cc}
            0\!&\!0\!&0&  \!0\!&\!0\!&1& \!0\!&\!1\!&1& \!1\!&\!0\!&1& \!0\!&\!0\!&\!0\!&0\!&\!0\\
            1\!&\!0\!&1&  \!0\!&\!0\!&0& \!0\!&\!0\!&0& \!0\!&\!0\!&1& \!1\!&\!0\!&\!0\!&1\!&\!0\\
            0\!&\!1\!&0&  \!0\!&\!1\!&1& \!0\!&\!0\!&1& \!1\!&\!1\!&0& \!1\!&\!1\!&\!1\!&0\!&\!1\end{array}\!\!\right).
       \end{split}
       \end{equation}
       In this matrix we stagger~$N_j$ with the columns of~$G$ in order to easily read off the transition spaces~$E_j$.
       The trellis~$T$ is shown in the figure below (here and in all other trellises we will denote edges with label one as
       solid lines and those with zero label as dashed lines).
       It is straightforward to see that the corresponding product trellis $T_{G,\cS}$ has the same SCP as~$T$ and therefore
       is isomorphic to~$T$ according to Theorem~\ref{T-BCJR}.
       As a consequence,~$T$ is one-to-one, which we can see also directly from the trellis.
       Notice that by the above,~$T$ is not a KV-trellis.
       \\[1ex]
       \mbox{}\hspace*{4cm}\includegraphics[height=3.5cm]{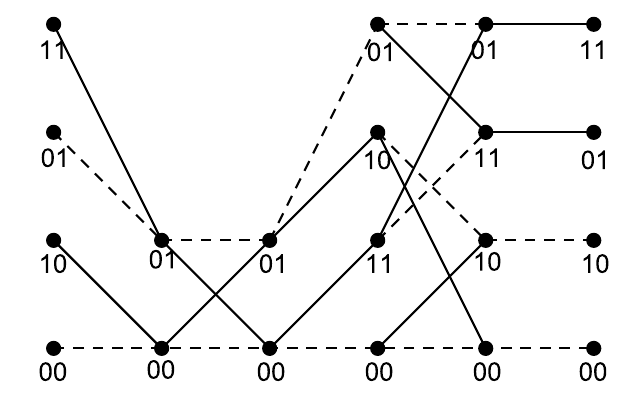}
       \\
       \mbox{}\hspace*{5.5cm}{\footnotesize [Trellis $T$]}
\item Consider the self-dual code $\cC\subseteq\F_2^4$ with $\cC=\im G=\ker G\T$, where
      \[
        G=\begin{pmatrix}1&1&1&1\\0&1&1&0\end{pmatrix}.
      \]
      A characteristic pair is given by
      \begin{equation}\label{e-charmatSD}
        X=\begin{pmatrix}1&0&0&1\\0&1&1&0\\0&1&1&0\\1&1&1&1\end{pmatrix},\
        \cT=\big[(3,0],(2,1],(1,2],(0,3]\big].
      \end{equation}
      This can easily be seen by applying Remark~\ref{R-greedy} to the three nonzero codewords in~$\cC$.
      Let us consider the BCJR-trellis $T_{(X,G,\cT)}$.
      The matrix
      \[
        S=(N_0|X_0^{\sf T}|N_1|X_1^{\sf T}|N_2|X_2^{\sf T}|N_3|X_3^{\sf T}|N_0)=\left(\!\!\begin{array}{cc|c|cc|c|cc|c|cc|c|cc}
            1&0&1&0&0&0&0&0&0&0&0&1&1&0\\1&1&0&1&1&1&0&0&1&1&1&0&1&1\\
            0&0&0&0&0&1&1&1&1&0&0&0&0&0\\0&0&1&1&0&1&0&1&1&1&0&1&0&0\end{array}\!\!\right)
      \]
      lists the state space matrices~$N_j$ staggered with the columns of~$X$.
      We will not display this trellis as it is not a very useful trellis by itself.
      For instance, it is not one-to-one.
      However, we can easily read off the state space matrices and transition spaces for all
      KV$_{(X,\cT)}$-trellises by simply taking the submatrix of~$S$ consisting of any two rows for which the corresponding
      two rows in~$X$ are linearly independent.
      This results in five KV$_{(X,\cT)}$-trellises.
      It is an immediate consequence of Proposition~\ref{P-formulas}(d), that no two of these trellises have the same SCP and ECP.
      In particular, they are pairwise non-isomorphic.
      The trellises $T_{(X,G,\cT)}$, based on an entire characteristic matrix, will be used in Section~\ref{S-DualProc} as the
      starting point of the dualization procedure.
\item Let $\cC=\im G=\ker H\T\subseteq\F_3^4$, where
       \[
           G=\begin{pmatrix}1&2&0&0\\0&0&1&1\end{pmatrix},\ H=\begin{pmatrix}1&1&0&0\\0&0&1&2\end{pmatrix}.
       \]
       The pair $(X,\cT)$, where
       \[
           X=\begin{pmatrix}1&2&0&0\\2&1&0&0\\0&0&1&1\\1&2&1&1\end{pmatrix},\ \cT=\big[(0,1],\,(1,0],\,(2,3],\,(3,2]\big],
       \]
       is a characteristic pair of~$\cC$. The matrix~$X$ is normalized, that is, each characteristic generator has coordinate~$1$ at the
       starting point of its span.
       It can easily be seen that~$\cC$ has~$9$ different normalized characteristic matrices with span list~$\cT$.
\end{alphalist}
\end{exa}

The product construction as well as the BCJR-construction behave nicely under the cyclic shift.
This is described in the following remark, of which we will make frequent use.
Recall that we compute with indices modulo~$n$, which, of course, also applies to the span lists.

\begin{rem}\label{R-shift}
Denote the rows of~$G$ by $g_1,\ldots,g_r\in\F^n$ and let the span list~$\cS$ be as in Definition~\ref{D-NS}(b).
Let~$\sigma$ denote the cyclic left shift on~$\F^n$ and let $G^*\in\F^{r\times n}$ be the matrix consisting
of the shifted rows $\sigma(g_l),\,l=1,\ldots,r$.
Then $\cS^*=[(a_l-1,\, b_l-1],l=1,\ldots,r]$ forms a span list for~$G^*$.
The state and transition spaces of the product trellis $T_{G^*,\cS^*}$ are given by
$V_i^*=\im M_{i+1}$ and $E_i^*=\im(M_{i+1},G_{i+1}^{\sf T},M_{i+2})$, where~$M_i$ is as in Definition~\ref{D-Prodtrellis}.
Similarly, if $(X,\cT)$ as in\eqnref{e-XY} is a characteristic pair for~$\cC$, then
$\big(X^*,\cT^*\big)$ is a characteristic pair of~$\sigma(\cC)$, where
\[
   X^*:=\begin{pmatrix}\sigma(x_1)\\ \vdots\\ \sigma(x_n)\end{pmatrix},\
   \cT^*:=\big[(a_l-1,\, b_l-1],\,l=1,\ldots,n\big].
\]
Finally, if~$D$ is as in\eqnref{e-DMat} and~$N_i$ are the state space matrices of $T_{(G,H,\cS)}$ as in\eqnref{e-NMat},
then the state space matrices for the BCJR-trellis $T_{(G^*,H^*,\cS^*)}$ are given by $N_i^*=N_{i+1}$ for $i\in\cI$.
\end{rem}

In this paper, we will mainly consider BCJR-trellises based on characteristic generators.
The following theorem will be crucial later.

\begin{theo}\label{T-KVBCJR}
Let $(X,\cT)$ be a characteristic pair of~$\cC$.
Let $G\in\F^{r\times n}$ be a selection of~$r$ rows of~$X$ and $\cS:=[(a_l,b_l],l=1,\ldots,r]$ be the
corresponding selection of characteristic spans.
Thus,~$(a_l,b_l]$ is the span of the $l$-th row of~$G$.
Consider the BCJR-trellis $T_{(G,H,\cS)}$, and let $N_j\in\F^{r\times(n-k)},\,j\in\cI$, be its state space matrices.
Then for every $j\in\cI$
\begin{arabiclist}
\item $\row(N_j,l)=0$ for all~$l$ such that $j\not\in(a_l,b_l]$,
\item the set $\{\row(N_j,l)\mid l\text{ such that }j\in(a_l,b_l]\}$ is linearly independent.
\end{arabiclist}
As a consequence,~$T_{(G,H,\cS)}$ is isomorphic to the corresponding product trellis~$T_{G,\cS}$.
Therefore it makes sense to call $T_{(G,H,\cS)}$ a KV-trellis in the case where $r=k=\rk G$.
\end{theo}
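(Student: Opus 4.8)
The statement claims two things about the state-space matrices $N_j$ of the BCJR-trellis built from characteristic generators: a support property (1), and a linear independence property (2). The consequence about the isomorphism with the product trellis then follows from Theorem~\ref{T-BCJR}(c), since (1) and (2) together pin down $\dim V_j = \dim \im N_j$ to be exactly $|\{l : j \in (a_l,b_l]\}|$, which by Proposition~\ref{P-formulas}(d) is precisely the SCP of $T_{G,\cS}$. So the whole theorem reduces to proving (1) and (2) for each fixed $j$. By Remark~\ref{R-shift}, applying a cyclic shift $\sigma^j$ permutes everything consistently (characteristic pairs go to characteristic pairs of $\sigma^j(\cC)$, and $N_i^* = N_{i+j}$), so without loss of generality I would reduce to the case $j = 0$, which is the cleanest: I want to show $\row(N_0,l) = 0$ whenever $0 \notin (a_l,b_l]$, i.e.\ whenever $(a_l,b_l]$ is a conventional span, and that the remaining rows (those with circular span, equivalently $0 \in (a_l,b_l]$) are linearly independent.

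\textbf{Step 1: the support property (1).} Recall $\row(N_0,l) = \sum_{t=a_l}^{n-1} g_{lt} H_t$ by \eqnref{e-DMat}. If $(a_l,b_l]$ is conventional, then $a_l \le b_l$ and $\mathrm{supp}(g_l) \subseteq [a_l,b_l] \subseteq [a_l, n-1]$, so $\row(N_0,l) = \sum_{t=a_l}^{b_l} g_{lt} H_t = \sum_{t} g_{lt} H_t = (g_l H)\T$ — wait, more precisely $\sum_{t=0}^{n-1} g_{lt} H_t$, which is the $l$-th row of $G H\T = 0$ since $g_l \in \cC = \ker H\T$. Hence $\row(N_0,l) = 0$. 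For general $j$, after the shift reduction this becomes: the rows indexed by conventional spans (relative to the shifted picture, i.e.\ those $l$ with $j \notin (a_l,b_l]$) vanish in $N_j$. This step is routine once the shift reduction is in place; the mild care needed is tracking how the recursion \eqnref{e-NMat} propagates: $\row(N_j, l) = \row(D,l) + \sum_{t=0}^{j-1} g_{lt} H_t$, and one checks this telescopes to $\sum_{t \in (a_l, b_l]\cap\{0,\dots,j-1\} \cup \dots}$ — easier to just say $\row(N_j,l) = \sum_{t \notin (a_l,b_l],\ t \ge a_l \text{ cyclically}}$... The clean formulation: $\row(N_j,l) = \sum_{t=a_l}^{j-1} g_{lt} H_t$ read cyclically when $a_l \ne j$; when $j \notin (a_l,b_l]$ the index set $[a_l, j-1]$ (cyclic) contains $[a_l,b_l] \supseteq \mathrm{supp}(g_l)$, so again the sum is $\sum_{\text{all }t} g_{lt}H_t = 0$.

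\textbf{Step 2: the linear independence (2) — the main obstacle.} After the shift, I must show that $\{\row(N_0,l) : 0 \in (a_l,b_l]\}$ is linearly independent in $\F^{n-k}$. Suppose $\sum_{l \in L} \lambda_l \row(N_0,l) = 0$ where $L = \{l : (a_l,b_l] \text{ circular}\}$. Writing $\row(N_0,l) = \big(\sum_{t=a_l}^{n-1} g_{lt} H_t\big)$, the relation says $\big(\sum_{l\in L}\lambda_l \sum_{t=a_l}^{n-1} g_{lt} e_t\big) H\T = 0$ where $e_t$ are standard basis vectors — i.e., the vector $w := \sum_{l \in L} \lambda_l (g_l)|_{[?, n-1]}$ obtained by truncating each $g_l$ to coordinates $a_l, \dots, n-1$ lies in $\cC = \ker H\T$. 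Now $w$ is supported on $\bigcup_{l\in L}[a_l, n-1]$; since each $(a_l,b_l]$ is circular, $b_l < a_l$, and the support of $g_l$ within $[a_l,n-1]$ is $[a_l, n-1] \cap [a_l,b_l]$, i.e.\ the "tail" part $[a_l,n-1]$. The key structural input is Definition~\ref{D-CharMat}(iii)--(iv): the $a_l$ are distinct, and $0$ lies in exactly $n-k$ of the spans, so $|L| = n-k$, and the distinct starting points $a_l$, $l\in L$, are exactly the $n-k$ circular-span starting points. I would argue that the truncated vectors $w_l := \sum_{t=a_l}^{n-1} g_{lt} e_t$ have the property that $w_l$ has its leading nonzero coordinate at position $a_l$ (since $g_{l,a_l} \ne 0$ by the span definition), and these leading positions are distinct — so the $w_l$ are linearly independent as vectors in $\F^n$. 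But I need more: that the only linear combination of them landing in $\cC$ is the zero combination. This is where the "shortest span" Lemma~\ref{L-ShortestGen} enters. If $w := \sum_{l\in L}\lambda_l w_l \in \cC$ is nonzero, let $a$ be the smallest starting index $a_l$ with $\lambda_l \ne 0$; then $w$ has leading coordinate at $a$ (nonzero), and $\mathrm{supp}(w) \subseteq [a, n-1] \subseteq [a, a-1]$ cyclically is... hmm, actually $\mathrm{supp}(w) \subseteq [\min a_l, n-1]$, and reading cyclically this is contained in a span $(a', 0']$ for suitable endpoints — the point is $w$ is a nonzero codeword whose support sits in $[a, n-1]$, hence $w$ has a span of the form $(a-1, b]$ with $b \le n-1 < a$ cyclically, a circular span. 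But then by Lemma~\ref{L-ShortestGen} the characteristic span starting at $a-1$... this isn't quite matching up the indices. Let me instead say: $w$ nonzero with $\mathrm{supp}(w)\subseteq[a,n-1]$ means $w$ has a span $(\tilde a, \tilde b]$ with $\tilde a + 1 = a$ (or $\tilde a = a-1$) and $\tilde b = \max\mathrm{supp}(w) \le n-1$. Comparing with the characteristic span $(a', b']$ starting at $a' = a_{l_0}$ where $l_0$ realizes the min — since $w_{l_0}$ and hence potentially $w$ is nonzero at $a_{l_0}$ — we'd get a contradiction with Definition~\ref{D-CharMat}(iv)'s count or with the MSGM/minimality property underlying Lemma~\ref{L-ShortestGen}: a codeword with support in $[a_{l_0}, n-1]$ strictly inside would force a shorter characteristic span at some starting point, contradicting (iii)/(iv). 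The honest statement: the vectors $\{g_l : l \in L\}$ together with the $k$ generators having conventional span form a generating set, and the "triangular" structure of the truncations $w_l$ (distinct leading positions $a_l$, all $\ge$ some point, none reaching "around" past $0$) combined with the fact that $0$ is covered exactly $n-k$ times forces independence modulo $\cC$; I expect the cleanest route is to observe that $\{w_l : l \in L\} \cup \{g_l : l \notin L\}$ has $n$ elements with a triangular leading-coordinate structure making it a basis of $\F^n$, whence no nontrivial combination of the $w_l$ alone can lie in the $k$-dimensional complement-ish space... this needs care. I anticipate Step 2, specifically nailing down why no nonzero combination of the truncated tails $w_l$ lands back in $\cC$, to be the genuinely technical heart of the proof, and I would lean on Lemma~\ref{L-ShortestGen} plus the exact-covering condition Definition~\ref{D-CharMat}(iv) to close it, possibly by an induction on $|L|$ peeling off the smallest $a_l$.

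\textbf{Step 3: conclusion.} Once (1) and (2) hold for every $j$, $\dim V_j = |\{l : j \in (a_l,b_l]\}| = s_j$, matching the SCP of $T_{G,\cS}$ from Proposition~\ref{P-formulas}(d). Both trellises are linear, reduced (Prop.~\ref{P-formulas}(a), Def.~\ref{D-NS}(a)) and represent $\cC$; by Theorem~\ref{T-BCJR}(c), equality of SCPs forces $T_{(G,H,\cS)} \cong T_{G,\cS}$. In the case $r = k = \rk G$ the product trellis $T_{G,\cS}$ is exactly a KV$_{(X,\cT)}$-trellis by Definition~\ref{D-CharMat}, so calling $T_{(G,H,\cS)}$ a KV-trellis is justified. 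This last step is immediate given the machinery already in the excerpt.
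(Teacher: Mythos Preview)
Your overall architecture matches the paper's exactly: reduce to $j=0$ via the cyclic shift (Remark~\ref{R-shift}), establish~(1) and~(2) there, then read off the SCP and invoke Theorem~\ref{T-BCJR}(c). The paper itself does not spell out~(1) and~(2) but simply cites \cite[Prop.~IV.6]{GLW10} and the proof of \cite[Thm.~IV.11]{GLW10}; your direct arguments are in the same spirit as those references.

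Your Step~2 is on target, and the index confusion is easily repaired. With $L=\{l:(a_l,b_l]\text{ circular}\}$ and $w=\sum_{l\in L}\lambda_l w_l\in\cC$, suppose $w\neq0$ and set $a:=\min\{a_l:\lambda_l\neq0\}$, attained at $l_0\in L$. Since the $a_l$ are distinct, the coordinate of $w$ at position~$a$ is $\lambda_{l_0}g_{l_0,a}\neq0$, and $\mathrm{supp}(w)\subseteq[a,n-1]$. By Definition~\ref{D-vectorspan} this means $w$ has the \emph{conventional} span $(a,b]$ for some $b\leq n-1$ (the span starts at~$a$, not $a-1$: the defining condition is $c_a\neq0$). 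But the characteristic span starting at~$a$ is $(a,b_{l_0}]$, which is circular because $l_0\in L$; Lemma~\ref{L-ShortestGen} forces $(a,b_{l_0}]\subseteq(a,b]$, impossible since the former contains~$0$ (note $a\geq1$ for any circular span) and the latter does not. Hence $w=0$, and then the distinct leading positions $a_l$ give $\lambda_l=0$ for all $l$. So your instinct to close the argument with Lemma~\ref{L-ShortestGen} was correct and the proof is shorter than you feared; no induction is needed.
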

\begin{proof}
The statement in~(1) has been proven in \cite[Prop.~IV.6]{GLW10} for BCJR-trellises
$T_{(G,H,\cS)}$, where $G\in\F^{k\times n}$ has rank~$k$.
It can easily be seen that the same proof applies to the general case for~$G$.
As for the second statement, first consider $j=0$.
Then $j\in(a_l,b_l]$ is equivalent to $(a_l,b_l]$ being a circular span.
In the proof of~\cite[Thm.~IV.11]{GLW10} it has been shown that the set
$\{\row(N_0,l)\mid (a_l,b_l]\text{ circular}\}$ is linearly independent
(these rows appear in the matrix~$Z\in\F^{(n-k)\times(n-k)}$ in~\cite[Eq.~(IV.5)]{GLW10}, which is
non-singular due to the same proof in~\cite{GLW10}).
This proves~(2) for $j=0$.
Applying a cyclic left shift by~$j$ steps (see Remark~\ref{R-shift}), we obtain the
desired result for arbitrary~$N_j$, and this completes the proof of~(2).
\\
Finally, the previous results tell us that $\rk N_j=|\{l\mid j\in(a_l,b_l]\}|$.
Furthermore, due to Definition~\ref{D-CharMat}, the starting points of distinct characteristic spans are distinct,
and the same is true for the ending points.
Therefore we may apply Proposition~\ref{P-formulas}(d) and conclude that
$T_{(G,H,\cS)}$ and~$T_{G,\cS}$ have the same SCP.
Thus, these trellises are isomorphic due to Theorem~\ref{T-BCJR}(c).
\end{proof}

We close this section with the following technical results pertaining to the BCJR-presenta\-tion of KV-trellises,
which will be needed later on.

\begin{lemma}\label{L-HN}
Let $T=T_{(G,H,\cS)}$ be a KV-trellis of~$\cC$, that is,
$G=(g_{lj})\in\F^{k\times n}$ has rank~$k$ and its span list~$\cS=[(a_l,b_l]\mid l=1,\ldots,k]$ consists of
characteristic spans of~$\cC$.
Let~$N_j,\,j\in\cI$, be the state space matrices of~$T$.
Then
\begin{romanlist}
\item If $j=b_l$ for some~$l$, then $H_j=-g_{lj}^{-1}\row(N_j,l)$
\item If $j\not\in\{b_1,\ldots,b_k\}$, then $H_j\not\in\im N_j$.
\end{romanlist}
\end{lemma}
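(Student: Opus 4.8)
The plan is to work directly with the recursion $N_{j+1}=N_j+G_j^{\sf T}H_j$ from\eqnref{e-NMat}, combined with the structural information about KV-trellises provided by Theorem~\ref{T-KVBCJR}. The key observation is that $\row(N_{j+1},l)=\row(N_j,l)+g_{lj}H_j$ for each $l$, so the $l$-th row of $N_{j+1}$ differs from that of $N_j$ exactly by the scalar multiple $g_{lj}H_j$ of the parity check column. For part~(i), suppose $j=b_l$. Then $j\in(a_l,b_l]$ but $j+1\notin(a_l,b_l]$, so Theorem~\ref{T-KVBCJR}(1) gives $\row(N_{j+1},l)=0$ while $\row(N_j,l)$ need not vanish; applying the recursion in the $l$-th row yields $0=\row(N_j,l)+g_{lj}H_j$. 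Since $j=b_l$ is the endpoint of the span of the $l$-th row, Definition~\ref{D-vectorspan} guarantees $g_{lj}\neq 0$, so we may solve for $H_j=-g_{lj}^{-1}\row(N_j,l)$, which is exactly~(i).

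For part~(ii), assume $j\notin\{b_1,\ldots,b_k\}$ and suppose toward a contradiction that $H_j\in\im N_j$, say $H_j=\sum_{l}\lambda_l\row(N_j,l)$. By Theorem~\ref{T-KVBCJR}(1) only the rows with $j\in(a_l,b_l]$ contribute, and by~(2) of the same theorem those rows are linearly independent, so the $\lambda_l$ for such $l$ are uniquely determined. Now examine $N_{j+1}$: for $l$ with $j\in(a_l,b_l]$ and $j+1\in(a_l,b_l]$ (that is, $j\neq b_l$, which holds for all relevant $l$ by our assumption on $j$), we have $\row(N_{j+1},l)=\row(N_j,l)+g_{lj}H_j$, and the set of such rows is linearly independent by Theorem~\ref{T-KVBCJR}(2) applied at time $j+1$ — noting that no span ends at $j$ and a span starting at $j$ (i.e.\ $a_l=j$) has $\row(N_j,l)=0$, so the index set of nonzero rows at time $j$ with $j\ne b_l$ for all $l$ matches the index set at time $j+1$ minus any span with $a_l=j$. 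The contradiction will come from combining $H_j=\sum\lambda_l\row(N_j,l)$ with the fact that $H_j$, being a nonzero column of the full-rank parity check matrix $H$ (recall $\cC^\perp$ has support $\cI$, so $H_j\neq 0$), forces a nontrivial linear dependence once we substitute back into the rows of $N_{j+1}$ and use that $G$ has rank $k$ so that the coefficient vector $(g_{1j},\ldots,g_{kj})$ interacts with the $\lambda_l$'s.

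The main obstacle I anticipate is part~(ii): making the contradiction precise requires carefully tracking which rows of $N_j$ are zero versus nonzero as we pass from time $j$ to time $j+1$, and exploiting the rank-$k$ property of $G$ at the single coordinate $j$ together with the linear independence statements of Theorem~\ref{T-KVBCJR}(2) at two consecutive times. A cleaner route may be to invoke Remark~\ref{R-shift} to reduce to the case $j=0$ — then $j\notin\{b_1,\ldots,b_k\}$ means no span is conventional with endpoint $0$, i.e.\ among the spans containing $0$ the behavior at the "seam" is governed by the displacement matrix $D$ of\eqnref{e-DMat} — and then argue via the non-singularity of the matrix $Z$ from \cite[Eq.~(IV.5)]{GLW10} that was already used in the proof of Theorem~\ref{T-KVBCJR}. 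I would pursue the shift-to-$j=0$ reduction first, since it lets me reuse the linear-algebraic machinery of \cite{GLW10} rather than redoing the bookkeeping by hand, and fall back on the direct recursion argument only if the seam analysis becomes awkward.
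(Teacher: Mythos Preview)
Your argument for part~(i) is correct and in fact slightly more direct than the paper's: you use the single recursion step $\row(N_{j+1},l)=\row(N_j,l)+g_{lj}H_j$ together with $\row(N_{j+1},l)=0$ from Theorem~\ref{T-KVBCJR}(1), whereas the paper shifts to $j=0$, invokes the explicit formula\eqnref{e-DMat} for $N_0$, and then uses $GH\T=0$. Both are fine.

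Part~(ii), however, has a genuine gap. The tools you invoke --- the recursion\eqnref{e-NMat}, Theorem~\ref{T-KVBCJR}(1) and~(2), $\rk G=k$, and $H_j\neq 0$ --- are \emph{not sufficient} to force a contradiction. All of these properties hold in Example~\ref{E-selfdualBCJR}(a), where $G$ has full rank, the product and BCJR trellises are isomorphic (so the nonzero rows of each $N_j$ are linearly independent), and yet $H_4=(1,0)\in\im N_4$ even though $4\notin\{b_1,b_2,b_3\}=\{3,0,1\}$. This is precisely why the paper remarks, before its proof, that ``only~(ii) needs the particular properties of characteristic spans.'' Your local argument, if made precise, yields a contradiction only when some $a_l=j$ (since then $|L_{j+1}|>|L_j|$ while all rows of $N_{j+1}$ indexed by $L_{j+1}$ lie in the span of rows of $N_j$ indexed by $L_j$); when no span starts at~$j$ the step gives nothing. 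Iterating does not help, and the non-singularity of the matrix~$Z$ from \cite[Eq.~(IV.5)]{GLW10} is already encoded in Theorem~\ref{T-KVBCJR}(2), so that fallback adds no new information.

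The missing ingredient is Lemma~\ref{L-ShortestGen}. The paper's proof shifts to $j=0$, writes $H_0=\beta N_0$, and uses the explicit formula\eqnref{e-DMat} to rewrite this as $H_0=\big(\sum_{l\in\cL}\beta_l\hat{g}_l\big)H\T$, where $\hat{g}_l$ is the ``tail'' of the $l$-th generator supported on $[a_l,n-1]$. This produces a codeword $c=\sum_{l\in\cL}\beta_l\hat{g}_l-e_0\in\cC$ with span $(a_s,0]$, where $a_s=\min\{a_l:l\in\cL\}$. Lemma~\ref{L-ShortestGen} then forces the characteristic span $(a_s,b_s]\subseteq(a_s,0]$, hence $b_s=0$, contradicting $0\notin\{b_1,\ldots,b_k\}$. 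You should replace your local recursion strategy for~(ii) with this codeword construction; the ``shortest span'' property is what distinguishes characteristic spans from arbitrary spans and is indispensable here.
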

As the proof will show, part~(i) is true for general BCJR-trellises, and only~(ii) needs the particular
properties of characteristic spans.

\begin{proof}
Using a cyclic shift, see Remark~\ref{R-shift}, we may assume without loss of generality that
$j=0$.
\\
(i) Let $0=b_l$.
Then\eqnref{e-DMat} implies $\row(N_0,l)=\sum_{i=a_l}^{n-1} g_{li}H_i=-g_{l0}H_0$, where the last identity
follows from the identity $GH\T=0$ along with the fact that
$(a_l,b_l]=(a_l,0]$ is the span of the $l$-th row $(g_{l0},\ldots,g_{l,n-1})$ of~$G$.
This establishes~(i).
\\
(ii)
Let $0\not\in\{b_1,\ldots,b_k\}$ and assume $H_0=\beta N_0$ for some $\beta\in\F^k$.
Using the definition of~$N_0$ in\eqnref{e-DMat}, this becomes
$H_0=\sum_{l=1}^k\beta_l\sum_{j=a_l}^{n-1}g_{lj}H_j$.
Notice that $H_0\not=0$, due to our general assumption that the dual code $\cC^{\perp}$ has support~$\cI$.
Therefore with the aid of Theorem~\ref{T-KVBCJR}(1), the above may be written as
$H_0=\sum_{l\in\cL}\beta_l\sum_{j=a_l}^{n-1}g_{lj}H_j$, where
$\cL:=\{l\mid 0\in(a_l,b_l],\,\beta_l\not=0\}$ and $\cL\not=\emptyset$.
Let $s\in\cL$ be such that $a_s=\min\{a_l\mid l\in\cL\}$.
Then $a_s>0$ because the condition $0\in(a_l,b_l]$ implies that the span is circular.
Define the vectors
\[
  \hat{g}_l=(\hat{g}_{l0},\ldots,\hat{g}_{l,n-1}),\text{ where }
  \hat{g}_{lj}=\left\{\begin{array}{ll}g_{lj},&\text{if }j\geq a_l\\ 0,&\text{if }j<a_l.\end{array}\right.
\]
Then we obtain $H_0=\sum_{l\in\cL}\beta_l\hat{g}_lH\T$.
As a consequence, $c:=\sum_{l\in\cL}\beta_l\hat{g}_l-e_0\in\ker H\T=\cC$, where $e_0\in\F^n$ is
the first standard basis vector.
Now the definition of~$\hat{g}_l$ shows that the codeword~$c$ has span $(a_s,0]$, and thus
Lemma~\ref{L-ShortestGen} implies $(a_s,b_s]\subseteq(a_s,0]$.
Since $(a_s,b_s]$ is circular, this in turn yields $b_s=0$, contradicting our assumption
that~$0\not\in\{b_1,\ldots,b_k\}$.
This proves~(ii).
\end{proof}

\begin{theo}\label{T-bigcapkerNi}
Let~$T_{(G,H,\cS)}$ and~$N_j$ be as in Lemma~\ref{L-HN}. Then $\bigcap_{j=0}^{n-1}\im N_j=\{0\}$.
\end{theo}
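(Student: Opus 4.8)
The plan is to exploit the two structural facts we have just established: the ``staircase'' vanishing pattern of the rows of $N_j$ (Theorem~\ref{T-KVBCJR}(1)) together with their linear independence on the active indices (Theorem~\ref{T-KVBCJR}(2)), and the position information contained in Lemma~\ref{L-HN}. Suppose $w\in\bigcap_{j=0}^{n-1}\im N_j$ and $w\neq 0$. For each $j$, write $w=\beta^{(j)}N_j$ for some $\beta^{(j)}\in\F^k$; by Theorem~\ref{T-KVBCJR}(1) we may assume $\beta^{(j)}_l=0$ whenever $j\notin(a_l,b_l]$, and by Theorem~\ref{T-KVBCJR}(2) this representation of $w$ as a combination of the active rows $\row(N_j,l)$ is \emph{unique}. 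The idea is to trace $w$ around the time axis and show that the ``support'' of the index set $\{l : \beta^{(j)}_l\neq 0\}$ can never shrink to the empty set, which it would have to do at an ending point $b_l$, producing a contradiction --- or, alternatively, to directly produce a nonzero codeword with an impossibly short span, contradicting Lemma~\ref{L-ShortestGen}.

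Here is the more concrete route I would pursue. The recursion $N_{j+1}=N_j+G_j\T H_j$ says that, as a set of rows, $N_{j+1}$ differs from $N_j$ only by adding the same vector $g_{lj}H_j$ to \emph{every} row $l$. Fix $j$ and suppose $w=\beta N_j$ with $\beta$ supported on $\cL_j:=\{l:j\in(a_l,b_l]\}$. Then $w=\beta N_{j+1}-\big(\sum_{l}\beta_l g_{lj}\big)H_j$. If $j\notin\{b_1,\ldots,b_k\}$, then $\cL_{j+1}$ contains $\cL_j$ together possibly with the index $l$ for which $a_j$ would be a starting point; in any case one checks that $w\in\im N_{j+1}$ forces, via the uniqueness from Theorem~\ref{T-KVBCJR}(2), a compatible relation between $\beta^{(j)}$ and $\beta^{(j+1)}$. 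The critical transitions are at $j=b_l$: there, row $l$ leaves the active set, and Lemma~\ref{L-HN}(i) tells us $H_j=-g_{lj}^{-1}\row(N_j,l)$, so the vector being subtracted is a scalar multiple of the very row that is disappearing. The plan is to run this bookkeeping all the way around the cycle: starting from $w\neq 0$, at the last ending point we encounter we would be forced to express a nonzero $w$ using an empty set of active rows, which is absurd.

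The cleanest packaging, which I would actually write up, converts this into a codeword statement in the spirit of the proof of Lemma~\ref{L-HN}(ii). Pick any $j_0$ with $w=\beta N_{j_0}$, $\beta$ supported on $\cL_{j_0}$, and let $s\in\cL_{j_0}$ realize the ``oldest'' active span, i.e.\ choose $s$ so that, reading backwards from $j_0$, $a_s$ is reached last among $\{a_l : l\in\cL_{j_0}\}$. Unfolding $N_{j_0}$ via\eqnref{e-NMat} down to its defining displacement rows\eqnref{e-DMat}, one gets $w=\sum_{l\in\cL_{j_0}}\beta_l\,\hat g_l H\T$ for suitable truncations $\hat g_l$ of the rows of $G$, exactly as in Lemma~\ref{L-HN}(ii); but $w\in\im N_j$ for \emph{all} $j$, and in particular for a $j$ lying just after some ending point, where one of the $\hat g_l H\T$ terms must match $\row(N_j,\cdot)$ patterns that Theorem~\ref{T-KVBCJR}(1) forbids unless the coefficient vanishes. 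Iterating, all $\beta_l$ vanish, so $w=0$. I expect the main obstacle to be the careful choice of the reference time $j_0$ (or the induction order on ending points) so that the vanishing pattern of Theorem~\ref{T-KVBCJR}(1) can actually be invoked at each step; the underlying linear algebra is routine once the right index is fixed, but getting the cyclic order of spans right --- and handling the circular spans that contain $0$ --- is where the argument needs care, and a cyclic shift (Remark~\ref{R-shift}) to move the relevant ending point to a convenient position will likely be the key simplifying device.
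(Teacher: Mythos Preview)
Your setup and the computation you begin in the second paragraph are exactly what the paper does: from $w=\beta^{(j)}N_j=\beta^{(j+1)}N_{j+1}$ and the recursion $N_{j+1}=N_j+G_j\T H_j$ one obtains $(\beta^{(j)}-\beta^{(j+1)})N_j=(\beta^{(j+1)}G_j\T)H_j$, and Lemma~\ref{L-HN} together with Theorem~\ref{T-KVBCJR}(2) indeed forces a compatible relation between $\beta^{(j)}$ and $\beta^{(j+1)}$ on the active indices. But the conclusion you expect from running this around the cycle is not right. The active set $\cL_j$ has cardinality $\dim V_j$ and is never empty, nor does the support of $\beta^{(j)}$ shrink monotonically: at an ending point one index leaves $\cL_j$, but at starting points new indices enter, and nothing you have written prevents their coefficients from being nonzero. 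So the ``empty set of active rows'' contradiction does not materialize.

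What the bookkeeping actually yields is that, for each fixed $l$, the coefficient $\beta^{(j)}_l$ is \emph{constant} as $j$ ranges over $(a_l,b_l]$. Indeed, if $j\notin\{b_1,\ldots,b_k\}$ then Lemma~\ref{L-HN}(ii) forces the scalar $\beta^{(j+1)}G_j\T$ to vanish (else $H_j\in\im N_j$), so $\beta^{(j)}-\beta^{(j+1)}\in\ker N_j$ and the two vectors agree on all of $\cL_j$; if $j=b_m$ then Lemma~\ref{L-HN}(i) says $H_j$ is a scalar multiple of $\row(N_j,m)$, so they agree on $\cL_j\setminus\{m\}$, and $m$ is leaving the active set anyway. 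Hence there is a \emph{single} $\alpha\in\F^k$ with $w=\alpha N_j$ for every $j$. The finish, which your outline never reaches, is then one line: $0=\alpha(N_{j+1}-N_j)=(\alpha G_j\T)H_j$ for all $j$, and since $H_j\neq 0$ this gives $\alpha G=0$, hence $\alpha=0$ by full row rank of $G$, so $w=0$. Your alternative route in the third paragraph is too vague to supply this missing step; in particular, $w$ lies in $\F^{n-k}$ rather than in $\F^n$, so there is no span of $w$ to contradict, and the claimed ``patterns that Theorem~\ref{T-KVBCJR}(1) forbids'' are not identified.
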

\begin{proof}
Let $w\in\bigcap_{j=0}^{n-1}\im N_j$, say $w=\alpha_j N_j$ for some
$\alpha_j=(\alpha_{j,1},\ldots,\alpha_{j,k})\in\F^k,\,j\in\cI$.
We will first show that there exists some common $\alpha\in\F^k$ such that $w=\alpha N_j$ for all $j\in\cI$.
\\
Note that $0=\alpha_{j+1} N_{j+1}-\alpha_j N_j=\alpha_{j+1}(N_j+G_j^{\sf T}H_j)-\alpha_j N_j$, hence
$\alpha_{j+1}G_j^{\sf T}H_j=(\alpha_{j}-\alpha_{j+1})N_j$.
Observe that $\alpha_{j+1}G_j^{\sf T}\in\F$, thus a scalar.
\\
i) If $\alpha_{j+1}G_j^{\sf T}=0$, then $\alpha_{j}-\alpha_{j+1}\in\ker N_j$, and Theorem~\ref{T-KVBCJR}
implies $\alpha_{j,l}=\alpha_{j+1,l}$ for all $l$ such that $j\in(a_l,b_l]$.
\\
ii)~If $\alpha_{j+1}G_j^{\sf T}\not=0$, then the above yields $H_j=(\alpha_{j+1}G_j^{\sf T})^{-1}(\alpha_{j}-\alpha_{j+1})N_j$.
In this case Lemma~\ref{L-HN} implies that $j=b_m$ for some $m\in\{1,\ldots,k\}$.
Furthermore, part~(i) of that Lemma along with Theorem~\ref{T-KVBCJR}(1) and~(2)
shows that
$\alpha_{j,l}=\alpha_{j+1,l}$ for all~$l\not=m$ such that $j\in(a_l,b_l]$.
\\
But then~i) and~ii) together yield
\begin{equation}\label{e-alphaj}
   \alpha_{j,l}=\alpha_{b_l,l} \text{ for all $l$ such that } j\in(a_l,b_l].
\end{equation}
Define now $\alpha:=(\alpha_{b_1,1},\ldots,\alpha_{b_k,k})\in\F^k$.
Using Theorem~\ref{T-KVBCJR}(1) once more, along with\eqnref{e-alphaj}, it is straightforward to see that
$\alpha N_j=\alpha_j N_j=w$ for all $j\in\cI$.
\\
Now we have $0=w-w=\alpha(N_{j+1}-N_j)=\alpha G_j^{\sf T}H_j$ for all $j\in\cI$.
Since $H_j\not=0$ for all $j\in\cI$, this leads to $\alpha G_j^{\sf T}=0$ for all~$j\in\cI$ and hence $\alpha G=0$.
As a consequence, the full row rank of~$G$ implies $\alpha=0$, and we arrived at $w=0$, as desired.
\end{proof}

It is worth noting that the last result is not true for general BCJR-trellises $T_{(G,H,\cS)}$,
even if~$G\in\F^{k\times n}$ has rank~$k$ and the trellis is isomorphic to the corresponding product trellis.
An example is given in Example~\ref{E-selfdualBCJR}(a).
The trellis~$T$ displayed in that graph satisfies $T_{(G,H,\cS)}\cong T_{G,\cS}$
and~$G\in\F_2^{3\times 5}$ has rank~$3$, but $(0,1)\in\bigcap_{j=0}^4 \im N_j$, where~$N_j$ are as in\eqnref{e-Niexample}.

\section{Dualizing Trellises}\label{S-Dual}
\setcounter{equation}{0}

In this section we will investigate two methods of dualizing a given trellis in order to
obtain a trellis representing the dual code.
One method is what we call local dualization, and it amounts to taking duals of the transition spaces along with a sign inverter.
This very elegant and profound method has been introduced by Forney in~\cite{Fo01} and applies to all linear (even group) realizations;
see also the presentation based on graphical models in~\cite{Fo11}.
Theorem~\ref{T-LD} below is simply a special case of it.
The second method comes naturally with the BCJR-construction and has been introduced by Nori/Shankar in~\cite{NoSh06}.
We will show that, in general, the BCJR-dual is a proper subtrellis of the local dual and that for KV-trellises the two coincide.
There is yet another notion of trellis duality, introduced by Koetter/Vardy in
\cite[Ch.~VII]{KoVa03}, based on what they call the intersection product.
As it turns out via straightforward computation, this notion is identical to the local dual.

We begin with the local dualization, which in our particular case of tail-biting trellises looks as follows.

\begin{theo}\label{T-LD}
Let $T=(V,E)$ be a linear trellis representing the code $\cC\subseteq\F^n$.
Let $\hat{V}_j,\,j\in\cI$, be vector spaces such that $\dim V_j=\dim\hat{V}_j$ for all $j\in\cI$, and fix
non-degenerate bilinear forms $\inner{\,\cdot\,,\,\cdot\,}$ on $V_j\times\hat{V}_j,\,j\in\cI$.
For each transition space $E_j\subseteq V_j\times\F\times V_{j+1}$, define $(E_j)^{\circ}$
as the dual space with respect to the bilinear form
\[
    (V_j\times\F\times V_{j+1})\times(\hat{V}_j\times\F\times\hat{V}_{j+1})\longrightarrow \F,\quad
      \big((v,a,w),\,(\hat{v},b,\hat{w})\big)\longmapsto \inner{v,\hat{v}}+ab-\inner{w,\hat{w}},
\]
that is,
\begin{equation}\label{e-Eicirc}
  (E_j)^{\circ}\!\!:=\!\big\{(\hat{v},b,\hat{w})\in\hat{V}_j\times\F\times\hat{V}_{j+1}\,\big|\,
       \inner{v,\hat{v}}+ab-\inner{w,\hat{w}}=0\text{ for all }(v,a,w)\in E_j\big\}.
\end{equation}
Then the trellis $T^{\circ}=(\hat{V},E^{\circ})$, where $\hat{V}=\bigcup_{j=0}^{n-1}\hat{V}_j$ and
$E^{\circ}=\bigcup_{j=0}^{n-1}(E_j)^{\circ}$, is linear and represents~$\cC^{\perp}$.
Furthermore, $\text{SCP}(T^{\circ})=\text{SCP}(T):=(s_0,\ldots,s_{n-1})$ and
\begin{equation}\label{e-dimElocal}
  \dim(E_j)^{\circ}=s_j+s_{j+1}+1-\dim E_j\text{ for }j\in\cI.
\end{equation}
We call~$T^{\circ}$ the local dual of~$T$.
\end{theo}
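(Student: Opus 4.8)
The plan is to reduce everything to a single well-known fact about dual behaviors: if $\cB \subseteq W_1 \times \cdots \times W_m$ is a subspace and $\cB^{\perp}$ is its orthogonal complement with respect to the sum of non-degenerate bilinear forms on the $W_i$, then $\dim\cB + \dim\cB^{\perp} = \sum \dim W_i$, and the ``projection-kernel'' dualities intertwine: the projection of $\cB$ onto a subset of coordinates is dual to the ``kernel'' (or restriction) of $\cB^{\perp}$ on the complementary coordinates, in the appropriate quotient/subspace sense. Applying this to the trellis and its label code $\cS(T)$ will give all the claims. I would not reprove this from scratch since it is exactly Forney's local dualization theorem in \cite{Fo01}; rather I would check that the hypotheses specialize correctly and that the bookkeeping on a cyclic time axis goes through.

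\textbf{Step 1: Set up the dual label code.} Equip $V_0\times\cdots\times V_{n-1}\times\F^n$ with the bilinear form $\sum_{j}\inner{v_j,\hat v_j} + \sum_j c_j d_j$ (the $\F^n$ part using the standard inner product, the state parts using the fixed forms), and similarly on the $\hat V$-side. Define $\cS(T)^{\perp}$ to be the orthogonal complement of $\cS(T)$. The key observation, which I would verify by a direct computation with the cyclic transition-space equations defining $\cS(T)$, is that $\cS(T^{\circ}) = \cS(T)^{\perp}$. This is where the \emph{sign inverter} enters: because a cycle in $T$ reuses the state $v_0$ at both ends (time $0$ and time $n \bmod n$), the contribution of $V_0$ to the pairing appears once as $+\inner{v_0,\hat v_0}$ from $(E_{n-1})^{\circ}$ and once as $-\inner{v_0,\hat v_0}$ from $(E_0)^{\circ}$; summing the defining relations of $(E_j)^{\circ}$ over $j\in\cI$ telescopes to exactly the orthogonality condition $\sum_j c_j d_j = 0$ on the edge-label parts. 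So membership of $(\hat v, d)$ in $\cS(T^{\circ})$ forces $d\perp\cS(T)|_{\F^n} = \cC$, and conversely.

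\textbf{Step 2: Derive the represented code and the SCP.} From $\cS(T^{\circ})=\cS(T)^{\perp}$ it follows that the edge-label code of $T^{\circ}$ is the projection of $\cS(T)^{\perp}$ onto the $\F^n$-block; by the projection-kernel duality this projection equals $(\cC)^{\perp}=\cC^{\perp}$ — here one uses that $T$ represents $\cC$, i.e.\ the $\F^n$-projection of $\cS(T)$ is $\cC$ (no reducedness needed for this direction). For the state complexity profile, I would note that $\dim\hat V_j = \dim V_j = s_j$ is assumed, so $\mathrm{SCP}(T^{\circ})=\mathrm{SCP}(T)$ is immediate from the definition of the $\hat V_j$; linearity of $T^{\circ}$ is clear since each $(E_j)^{\circ}$ is an orthogonal complement, hence a subspace, and $\cS(T^{\circ})=\cS(T)^{\perp}$ is a subspace.

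\textbf{Step 3: The edge-space dimension formula.} Equation \eqnref{e-dimElocal} is just the rank-nullity / dual-space count applied coordinate-block-wise: $(E_j)^{\circ}$ is the orthogonal complement of $E_j$ inside $\hat V_j\times\F\times\hat V_{j+1}$ with respect to the non-degenerate form $\inner{v,\hat v}+ab-\inner{w,\hat w}$ (non-degenerate because each $\inner{\cdot,\cdot}$ is and the middle pairing $ab$ on $\F\times\F$ is), so $\dim(E_j)^{\circ} = \dim(\hat V_j\times\F\times\hat V_{j+1}) - \dim E_j = s_j + 1 + s_{j+1} - \dim E_j$. The sign in front of $\inner{w,\hat w}$ is irrelevant for the dimension count.

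\textbf{Main obstacle.} The only genuinely delicate point is Step 1 — verifying $\cS(T^{\circ})=\cS(T)^{\perp}$, and in particular checking that the asymmetric sign convention in \eqnref{e-Eicirc} (a $+\inner{v,\hat v}$ at the tail, $-\inner{w,\hat w}$ at the head) is precisely what makes the per-section relations telescope around the cyclic time axis to the global orthogonality relation, with no leftover boundary term. Everything else is routine linear algebra about orthogonal complements and projections of behaviors.
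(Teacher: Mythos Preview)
Your Step~3 and the SCP claim are fine and match the paper. The gap is in Step~1: the identity $\cS(T^{\circ})=\cS(T)^{\perp}$ is \emph{false} for the bilinear form you wrote down. Your own telescoping computation shows why. Summing the per-section relations over $j\in\cI$ yields $\sum_j c_jd_j=0$, precisely because the state contributions $\sum_j\inner{v_j,\hat v_j}$ and $\sum_j\inner{v_{j+1},\hat v_{j+1}}$ cancel. But then for $(v,c)\in\cS(T)$ and $(\hat v,d)\in\cS(T^{\circ})$ the full pairing $\sum_j\inner{v_j,\hat v_j}+\sum_j c_jd_j$ equals $\sum_j\inner{v_j,\hat v_j}$, which has no reason to vanish. (A one-step example: $n=1$, $V_0=\F$, $E_0=\im(1,1,1)$; then $\cS(T)=\im(1,1)$ while $\cS(T^{\circ})=\im(1,0)$, and these are not orthogonal.) So what your telescoping actually proves is only the easy inclusion: the edge-label code of $T^{\circ}$ lies in $\cC^{\perp}$. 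The ``and conversely'' is the whole point, and your projection--kernel argument in Step~2 no longer applies once $\cS(T^{\circ})=\cS(T)^{\perp}$ fails. Even if it held, the projection of $\cS(T)^{\perp}$ onto $\F^n$ is dual to the \emph{cross-section} of $\cS(T)$ at zero states (the codewords passing through $0$ at every time), which for a tail-biting trellis is a proper subcode of~$\cC$ in general, so you would get a supercode of~$\cC^{\perp}$.

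The paper's proof repairs exactly this: instead of pairing against $\cS(T)$ directly, it \emph{replicates} each state space, writing the label code as $(\cP\mid\cR)$ where $\cP=E_0\times\cdots\times E_{n-1}$ lives in the doubled space $V_0\times V_1\times V_1\times\cdots\times V_{n-1}\times V_{n-1}\times V_0$ and $\cR$ is the diagonal replication constraint. Then Forney's conditioned-code duality $[\pi(\cP\mid\cR)]^{\perp}=\pi(\cP^{\perp}\mid\cR^{\perp})$ applies cleanly; $\cP^{\perp}$ is the product of the $E_j^{\perp}$, and $\cR^{\perp}$ is the \emph{anti}-diagonal $(w_0,-w_1,w_1,\ldots)$, which is where the sign inverter re-enters and turns $E_j^{\perp}$ into $(E_j)^{\circ}$. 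The replication is not a convenience but the missing idea: it separates the two roles of each state (head of one edge, tail of the next) so that the standard projection/cross-section duality can be invoked.
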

One should notice that, via the non-degenerate bilinear form, the space~$\hat{V}_j$ is isomorphic to the
linear algebra dual of~$V_j$ consisting of all linear functionals on~$V_j$.
Conversely, the linear algebra dual naturally gives rise to a non-degenerate bilinear form and therefore may
serve as dual state space~$\hat{V}_j$.
Later on we will make specific choices for~$\hat{V}_j$ and the bilinear form, justifying our
setting in Theorem~\ref{T-LD}.

It should be noted that the isomorphism class of the trellis~$T^{\circ}$ does not depend on the choice of the
spaces~$\hat{V}_i$ and the non-degenerate bilinear forms.
Indeed, if $\inner{\,\cdot\,,\,\cdot\,}_1$ and $\inner{\,\cdot\,,\,\cdot\,}_2$ are two such forms on~$V_j\times\hat{V}_j$
and $V_j\times\tilde{V}_j$, then there exists an isomorphism $\phi_j:\,\hat{V}_j\rightarrow\tilde{V}_j$ such that
$\inner{v,\,w}_1=\inner{v,\phi_j(w)}_2$ for all $v\in V_j,\,w\in\hat{V}_j$.
As a consequence, this isomorphism furnishes a trellis isomorphism between the two corresponding dual trellises.

The theorem is a special case of the local dualization procedure for normal graphs derived in~\cite[Sec.~VII]{Fo01}.
However, we think it is worth reproducing Forney's proof for this special case.
%

\begin{proof}
It is clear that the label code $\cS(T^{\circ})$, see\eqnref{e-ST}, as well as the transition spaces $(E_j)^{\circ}$ are linear spaces.
Hence~$T^{\circ}$ is a linear trellis.
Moreover, the bilinear form on
$(V_j\times\F\times V_{j+1})\times(\hat{V}_j\times\F\times\hat{V}_{j+1})$ is non-degenerate, and therefore
$\dim(E_j)^{\circ}=\dim(\hat{V}_j\times\F\times\hat{V}_{j+1})-\dim E_j$, proving\eqnref{e-dimElocal}.
\\
It remains to show that $T^{\circ}$ represents $\cC^{\perp}$.
In order to do so, define
\begin{align*}
  \cV&:=V_0\times V_1\times V_1\times V_2\times V_2\times\ldots\times V_{n-1}\times V_{n-1}\times V_0,\\
  \hat{\cV}&:=\hat{V}_0\times\hat{V}_1\times\hat{V}_1\times\hat{V}_2\times\hat{V}_2\times\ldots\times\hat{V}_{n-1}\times\hat{V}_{n-1}\times\hat{V}_0,
\end{align*}
(that is, we are replicating each state space).
The given bilinear forms on each~$V_j\times\hat{V}_j$ naturally give rise to a non-degenerate bilinear form $\inner{\,\cdot\,,\,\cdot\,}$
on~$\cV\times\hat{\cV}$ via
\[
  \inner{(\tilde{v}_0,v_1,\tilde{v}_1,\!\ldots,\!v_{n-1},\tilde{v}_{n-1},v_0),
         (\tilde{w}_0,w_1,\tilde{w}_1,\!\ldots,\!w_{n-1},\tilde{w}_{n-1},w_0)}
  =\!\sum_{j=0}^{n-1}\!\inner{\tilde{v}_j,\tilde{w}_j}+\sum_{j=0}^{n-1}\!\inner{v_j,w_j}.
\]
This further extends to a non-degenerate bilinear form
\begin{equation}\label{e-bif}
  (\cV\times\F^n)\times(\hat{\cV}\times\F^n)\longrightarrow\F,\quad
  \inner{(v,a),\,(w,b)}= \inner{v,w}+ab\T.
\end{equation}
Recall that the transition spaces $E_j$ are contained in $V_j\times\F\times V_{j+1}$.
Hence, the direct product $E_0\times\ldots\times E_{n-1}$ is in $\cV\times\F^n$, if we
sort the state and edge labels accordingly.
Denoting this obvious permutation by~$\rho$, we therefore have
\[
   \cP:=\rho(E_0\times E_1\times\ldots\times E_{n-1})\subseteq\cV\times\F^n.
\]
Now we use the replication space
\[
  \cR:=\{(v_0,v_1,v_1,v_2,v_2,\ldots,v_{n-1},v_{n-1},v_0)\mid v_j\in V_j\}\subseteq\cV
\]
in order to relate edges in the various transition spaces~$E_j$ with each other by checking whether
ending and starting points coincide.
Indeed, the label code $\cS(T)$ defined in\eqnref{e-ST} is
given by the conditioned space
\[
  (\cP\mid\cR):=\{(v,a)\in\cP\mid v\in\cR\}
\]
(to be precise, we also have to eliminate for~$v\in\cR$ one copy of each state to obtain~$\cS(T)$).
As a consequence, the code~$\cC$ is the projection
\[
  \cC=\pi(\cP\mid\cR)=\{a\in\F^n\mid \exists\; v\in\cV:\; (v,a)\in(\cP\mid\cR)\},
\]
where~$\pi$ denotes the natural projection of $\cV\times\F^n$ onto~$\F^n$ (this is the identity $\cC=(\cP\mid\cR)|_{I_{\cA}}$ in
\cite[p.~540]{Fo01}).
Now we may take duals with respect to $\inner{\,\cdot\,,\,\cdot\,}$.
Then Forney's Theorem on Conditioned Code Duality
\cite[Thm.~7.2]{Fo01} (or straightforward verification) tells us that
$\cC^{\perp}=\big[\pi(\cP\mid\cR)\big]^{\perp}=\pi(\cP^{\perp}\mid\cR^{\perp})$.
Here $\cdot^{\;\perp}$ refers to the bilinear form in\eqnref{e-bif} or a restriction of it to the appropriate subspaces.
It remains to clarify the relation between the latter space and the trellis~$T^{\circ}$.
From the very definition of~$\cP$ and~$\cR$ and the bilinear forms involved we obtain
\begin{align*}
  &\cP^{\perp}=\rho(E_0^{\perp}\times\ldots\times E_{n-1}^{\perp}),\\
  &\cR^{\perp}=\{(w_0,-w_1,w_1,-w_2,w_2,\ldots,-w_{n-1},w_{n-1},-w_0)\mid w_j\in\hat{V}_j\},
\end{align*}
where
$E_j^{\perp}=\{(w,b,\tilde{w})\in\hat{V}_j\times\F\times\hat{V}_{j+1}\mid \inner{v,w}+ab+\inner{\tilde{v},\tilde{w}}=0
 \text{ for all }(v,a,\tilde{v})\in E_j\}$.
Observing that $(w_j,b_j,-w_{j+1})\in E_j^{\perp}\Longleftrightarrow (w_j,b_j,w_{j+1})\in(E_j)^{\circ}$, we see that
\[
  \pi(\cP^{\perp}\mid\cR^{\perp})=\{b\in\F^n\mid \text{there exists a cycle in~$T^{\circ}$ with edge-label sequence }b\}.
\]
This proves that $T^{\circ}$ represents $\cC^{\perp}$, as desired.
\end{proof}

The following two examples illustrate that in specific cases local dualization may lead to undesirable trellises.
While in the first example this is not surprising because the primary trellis is not even proper, the second example
is more unexpected.
It shows a BCJR-trellis that is isomorphic to the corresponding product trellis and thus non-mergeable (hence biproper)
and one-to-one, and yet the local dual is not reduced.
In Theorem~\ref{T-ProdBCJRdual} below we will see that this does not happen for KV-trellises.
It will be shown that the local dual of a KV-trellis is isomorphic to the BCJR-dual and thus reduced.
Furthermore, in Section~\ref{S-DualProc} we will show that this dual is even a KV-trellis of the dual code.

\begin{exa}\label{E-Localdual1}
\begin{arabiclist}
\item Consider the 2-dimensional code
      \[
          \cC=\im\begin{pmatrix}0&1&1\\1&0&1\end{pmatrix}\subseteq\F_2^3
      \]
      and choose the span list $\cS=[(1,2],(0,2]]$.
      Then the corresponding product trellis~$T=T_{G,\cS}$ has SCP $(0,1,2)$ and ECP $(1,2,2)$ and is shown in the figure below.
      Notice that~$T$ is a conventional trellis, but not proper (and thus not minimal).
      The transition spaces~$E_j$ of~$T$ can be read off from the matrix
      \[
        (M_0\,|\,G_0^{\sf T}\,|\,M_1\,|\,G_1^{\sf T}\,|\,M_2\,|\,G_2^{\sf T}\,|\,M_0)=\left(\!\begin{array}{cc|c|cc|c|cc|c|cc}
            0&0&0&0&0&1&1&0&1&0&0\\0&0&1&0&1&0&0&1&1&0&0\end{array}\!\right);
      \]
      see Definition~\ref{D-Prodtrellis} for the state space matrices~$M_j$ of product trellises.
      According to Theorem~\ref{T-LD}, the local dual~$T^{\circ}$ has SCP $(0,1,2)$ and ECP $(1,2,1)$.
      In order to compute~$T^{\circ}$, we observe that the standard bilinear form on $\F_2^2$ induces a
      non-degenerate form on each~$V_j=\im M_j$, and thus may be used for the computation of the dual transition spaces~$(E_j)^{\circ}$.
      In particular, we will use~$V_j$ for the dual state spaces as well.
      Then we compute
      \[
        (E_0)^{\circ}=\im\!\!\left(\!\!\begin{array}{cc|c|cc}0\!&\!0\!&\!1\!&\!0\!&\!1\end{array}\!\!\right),\
        (E_1)^{\circ}=\im\!\!\left(\!\!\begin{array}{cc|c|cc}0\!&\!0\!&\!1\!&\!1\!&\!0\\0\!&\!1\!&\!0\!&\!0\!&\!1\end{array}\!\!\right),\
        (E_2)^{\circ}=\im\!\!\left(\!\!\begin{array}{cc|c|cc}1\!&\!1\!&\!1\!&\!0\!&\!0\end{array}\!\!\right).
      \]
      This leads to the following trellis~$T^{\circ}$
      \\[1ex]
      \mbox{}\hspace*{2cm}\includegraphics[height=4cm]{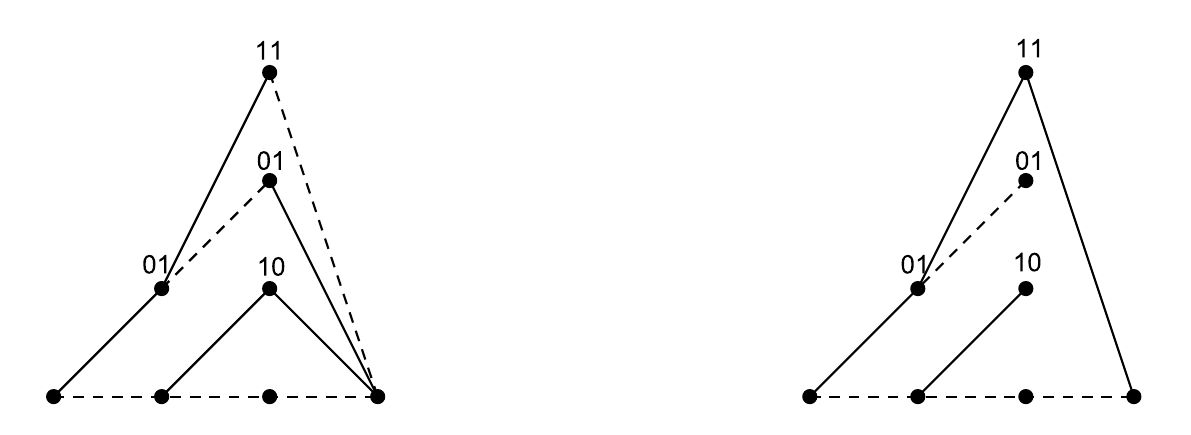}
      \\
      \mbox{}\hspace*{3.5cm}{\footnotesize [Trellis $T$]}\hspace*{5.5cm}{\footnotesize [Trellis $T^{\circ}$]}
      \\[1ex]
      Obviously, not every vertex appears in a cycle and thus the trellis~$T^{\circ}$ is not reduced.
      As a consequence,~$T^{\circ}$ is not a product trellis in the sense of Definition~\ref{D-Prodtrellis}.
\item Consider Example~\ref{E-selfdualBCJR}(a).
        The BCJR-trellis $T:=T_{(G,H,\cS)}$ given in that example has state and transition spaces
        $V_j=\im N_j$ and $E_j=\im(N_j,G_j^{\sf T},N_{j+1})$, with all matrices displayed in the matrix\eqnref{e-Niexample}.
        In order to compute the local dual~$T^{\circ}$, we may again use the standard bilinear form on~$\F_2^2$ and
        thus let~$V_j$ be the dual state space as well.
        Then
        \[
         \begin{array}[t]{l}
           (E_0)^{\circ}=\im\!\left(\!\begin{array}{cc|c|cc}1&0&1&0&0\\0&1&0&0&1\end{array}\!\right),\\[2.3ex]
           (E_1)^{\circ}=\im\!\left(\!\begin{array}{cc|c|cc}0&1&1&0&1\end{array}\!\right),\\[.7ex]
           (E_2)^{\circ}=\im\!\left(\!\begin{array}{cc|c|cc}0&0&1&1&0\\0&1&0&1&1\end{array}\!\right),\\
         \end{array}
         \quad
         \begin{array}[t]{l}
           (E_3)^{\circ}=\im\!\left(\!\begin{array}{cc|c|cc}1&0&1&1&0\\0&1&0&0&1\end{array}\!\right),\\[2.3ex]
           (E_4)^{\circ}=\im\!\left(\!\begin{array}{cc|c|cc}1&0&0&1&1\\0&1&0&0&1\\0&1&1&0&0\end{array}\!\right).
         \end{array}
        \]
        This leads to the first trellis in the following figure.
        \\[1ex]
        \mbox{}\hspace*{1cm}\includegraphics[height=3.5cm]{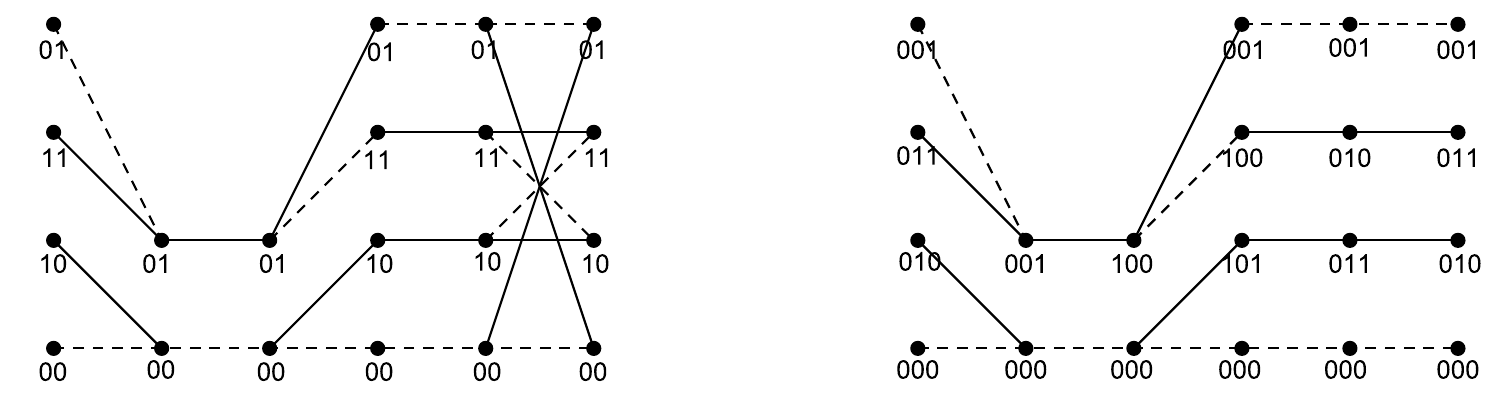}
        \\
        \mbox{}\hspace*{2.7cm}{\footnotesize [Trellis $T^{\circ}$]}\hspace*{6cm}{\footnotesize [Trellis $T^{\perp}$]}
        \\[1ex]
        The trellis~$T^{\circ}$ is not reduced because not every edge appears in a cycle.
        Indeed, the four diagonals in $(E_4)^{\circ}$, the last section of the trellis,  are not part of any cycle in~$T^{\circ}$.
        If we remove these~$4$ edges, then we obtain an isomorphic copy of the trellis on the right-hand side, which
        still represents~$\cC^{\perp}$.
        In the next result we will discuss the dualization leading to~$T^{\perp}$.
        Isomorphic versions of the trellises~$T^{\circ}$ and~$T^{\perp}$ appeared already in \cite[Ex.~IV.12, Rem.~V.4]{GLW10}.
\end{arabiclist}
\end{exa}

We now continue with a very simple and natural way of dualizing the trellises defined in Definition~\ref{D-NS}(a).
This has been introduced by Nori/Shankar in~\cite{NoSh06}.

\begin{prop}[\mbox{\cite[Def.~11]{NoSh06}}]\label{P-NSdual}
Let $T=T_{(G,H,D)}$ be as in Definition~\ref{D-NS}(a) and suppose~$T$ represents the code $\cC$.
Then the trellis $T_{(H,G,D^{\sf T})}$ represents the dual code~$\cC^{\perp}$.
We call $T_{(H,G,D^{\sf T})}$ the BCJR-dual of~the trellis~$T$, denoted by~$T^{\perp}$.
\end{prop}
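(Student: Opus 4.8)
The plan is to verify the defining property of the BCJR construction directly: I must show that the trellis $T_{(H,G,D^{\sf T})}$, built via Definition~\ref{D-NS}(a) with the roles of $G$ and $H$ swapped and with displacement matrix $D^{\sf T}$, indeed has edge-label code equal to $\cC^{\perp}$. First I would set up the notation in parallel to~\eqnref{e-NMat}: writing $\tilde N_0 = D^{\sf T} \in \F^{(n-k)\times r}$ and $\tilde N_i = \tilde N_{i-1} + H_{i-1}^{\sf T} G_{i-1}$ for $i>0$, so that $\tilde N_i$ are the state space matrices of $T^{\perp}$, with state spaces $\tilde V_i = \im \tilde N_i \subseteq \F^r$ and transition spaces $\tilde E_i = \im(\tilde N_i, H_i^{\sf T}, \tilde N_{i+1})$. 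The key algebraic observation is the ``transpose'' relation between the two families: from $N_i = D + \sum_{j=0}^{i-1} G_j^{\sf T} H_j$ and $\tilde N_i = D^{\sf T} + \sum_{j=0}^{i-1} H_j^{\sf T} G_j$ one sees immediately that $\tilde N_i = N_i^{\sf T}$ for every $i\in\cI$ (and in particular $\tilde N_n = \tilde N_0$, so the construction closes up). This single identity is what drives everything.

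Next I would characterize the cycles in each trellis. A sequence $(a_0,\ldots,a_{n-1})\in\F^n$ is an edge-label sequence of a cycle in $T_{(G,H,D)}$ precisely when there is a state sequence $(v_0,\ldots,v_{n-1})$ with $v_0\in V_0$, $v_{i+1}=v_i + ? $ — more precisely, tracing through Definition~\ref{D-NS}(a), a cycle corresponds to a choice of coefficient vector: the $\ell$-th generator contributes, and the state at time $i$ along the cycle indexed by $\beta\in\F^r$ is $\beta N_i$, with edge label $\beta G_i^{\sf T}$ at step $i$, yielding edge-label sequence $\beta G = c \in \cC$. Thus $T$ represents $\cC = \im G$ exactly because the cycles are indexed by $\beta\in\F^r$ with labels $\beta G$. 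Applying the same description to $T^{\perp} = T_{(H,G,D^{\sf T})}$: its cycles are indexed by $\gamma \in \F^{n-k}$ with state at time $i$ equal to $\gamma \tilde N_i = \gamma N_i^{\sf T}$ and edge label $\gamma H_i^{\sf T}$ at step $i$, so the edge-label sequence is $\gamma H \in \im H = \cC^{\perp}$. Since $H$ has full row rank $n-k$, every element of $\cC^{\perp}$ arises, and distinct... (well, one-to-oneness is not claimed), so the edge-label code of $T^{\perp}$ is exactly $\cC^{\perp}$.

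The one point that needs care — and which I expect to be the main (though mild) obstacle — is confirming that the generic ``$\beta N_i$ / $\beta G_i^{\sf T}$'' description of cycles is correct for $T_{(G,H,D)}$ in the first place, i.e.\ that every cycle of the trellis $T_{(G,H,D)}$ arises this way and conversely, so that no spurious cycles appear and none are missed. This is essentially the content of the assertion in Definition~\ref{D-NS}(a) that $T_{(G,H,D)}$ is linear, reduced and represents $\cC$; I would either cite that directly or give the short inductive argument that $(v,a,w)\in E_i$ iff $v=\beta N_i$, $a=\beta G_i^{\sf T}$, $w=\beta N_{i+1}$ for some $\beta$ (using $N_{i+1}=N_i+G_i^{\sf T}H_i$ so that the transition is consistent), and that the closure condition $N_n=N_0$ is exactly what makes such a $\beta$-indexed path a genuine cycle. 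Once that description is in hand, the proof is just the substitution $\tilde N_i = N_i^{\sf T}$ together with $\im H = \cC^{\perp}$, and the symmetry $D \leftrightarrow D^{\sf T}$ makes the whole statement essentially a formal transpose of the defining property of the BCJR construction.
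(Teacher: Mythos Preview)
Your proposal is correct. The paper itself gives no proof of this proposition; it is cited from Nori/Shankar~\cite{NoSh06} and stated as a definition/fact. Your argument is essentially the natural verification: once one accepts the assertion in Definition~\ref{D-NS}(a) that $T_{(G,H,D)}$ represents $\im G$, the result follows by applying that same assertion to the triple $(H,G,D^{\sf T})$, since $\im H=\cC^{\perp}$ and $HG^{\sf T}=0$.

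Two small remarks. First, the transpose identity $\tilde N_i=N_i^{\sf T}$ that you highlight is correct and is exactly what the paper exploits later in the proof of Proposition~\ref{P-NSLD}, but it is not needed for the present statement; the bare fact that $T_{(H,G,D^{\sf T})}$ is an instance of Definition~\ref{D-NS}(a) already suffices. Second, your description ``cycles are indexed by a single $\beta\in\F^r$'' is slightly imprecise: an arbitrary cycle need not come from one global~$\beta$ (the coefficient vectors at consecutive times may differ by an element of $\ker N_{i+1}$). What one actually uses is that any edge $(v,a,w)\in E_i$ satisfies $w=v+aH_i$, so telescoping over a cycle gives $\sum_j c_jH_j=0$ and hence $c\in\ker H^{\sf T}=\cC$; the converse inclusion comes from the global-$\beta$ cycles. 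For the dual trellis the same telescoping yields $cG^{\sf T}=0$, i.e.\ $c\in(\im G)^{\perp}=\cC^{\perp}$, and this works even though $G$ need not have full row rank. Your suggestion to ``cite Definition~\ref{D-NS}(a) directly'' is thus the cleanest route.
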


One should bear in mind that even if the trellis~$T$ is a BCJR-trellis in the sense of Definition~\ref{D-NS}(b),
that is, its displacement matrix is based on a span list, then~$T^{\perp}$ is not necessarily a
BCJR-trellis in that sense, but only a trellis of the type defined in part~(a) of that definition.
This can be seen from Example~\ref{E-Localdual1}(2) above.
The trellis~$T$ given therein is a BCJR-trellis, and one can easily check that the displayed trellis~$T^{\perp}$
is indeed its BCJR-dual.
Obviously the trellis~$T^{\perp}$ is mergeable (merging the states $(000)$ and $(011)$ in $V_0$ does not create any new cycles),
and thus~$T^{\perp}$ is not a BCJR-trellis due to Theorem~\ref{T-BCJR}(a).

For trellises of the form $T=T_{(G,H,D)}$, we now have two ways of dualizing them, both of which result
in trellises representing the dual code.
By construction, the dual trellises $T^{\circ}$ and $T^{\perp}$ have the same SCP.
In general, however, these trellises are not isomorphic, as we have seen already in Example~\ref{E-Localdual1}(2) above.
Next we will show that, just like in the above example,~$T^{\perp}$ is a subtrellis of~$T^{\circ}$.

\begin{prop}\label{P-NSLD}
Let $T=T_{(G,H,D)}$ be as in Definition~\ref{D-NS}(a).
Let $\hat{E}_j$ and $(E_j)^{\circ}$ be the transition spaces of the duals~$T^{\perp}$ and $T^{\circ}$, respectively.
Then $\hat{E}_j\subseteq(E_j)^{\circ}$, up to trellis isomorphism.
\end{prop}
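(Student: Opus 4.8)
The plan is to build an explicit model of the local dual $T^{\circ}$ whose state spaces literally coincide with those of the BCJR-dual $T^{\perp}=T_{(H,G,D^{\sf T})}$, and then to check the defining orthogonality relation \eqnref{e-Eicirc} directly on spanning edges of $\hat E_j$ and $E_j$. By the observation following Theorem~\ref{T-LD} the isomorphism class of $T^{\circ}$ does not depend on the chosen dual state spaces and bilinear forms, so it suffices to establish $\hat E_j\subseteq (E_j)^{\circ}$ for one convenient model; the stated inclusion then holds up to trellis isomorphism.

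First I would identify the state space matrices. If $\hat N_j$ denotes the state space matrix of $T^{\perp}=T_{(H,G,D^{\sf T})}$ at time~$j$, then Definition~\ref{D-NS}(a), with the roles of $G$ and $H$ interchanged, gives $\hat N_0=D^{\sf T}$ and $\hat N_j=\hat N_{j-1}+H_{j-1}^{\sf T}G_{j-1}$; transposing the recursion \eqnref{e-NMat} shows $\hat N_j=N_j^{\sf T}$ for every $j$. Hence the state spaces of $T^{\perp}$ are $\hat V_j:=\im N_j^{\sf T}\subseteq\F^r$, and since $\dim\hat V_j=\rk N_j=\dim V_j$ these are admissible dual state spaces in the sense of Theorem~\ref{T-LD}. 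On $V_j\times\hat V_j$ I would use the pairing induced by $(\alpha,\gamma)\mapsto\alpha N_j\gamma^{\sf T}$ on $\F^r\times\F^{n-k}$: its left radical is $\{\alpha\mid\alpha N_j=0\}$ and its right radical is $\{\gamma\mid\gamma N_j^{\sf T}=0\}$, so it descends to a non-degenerate bilinear form on $V_j\times\hat V_j$ with $\langle\alpha N_j,\gamma N_j^{\sf T}\rangle=\alpha N_j\gamma^{\sf T}$. I would then compute $T^{\circ}$ using the spaces $\hat V_j$ and these forms.

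The verification of \eqnref{e-Eicirc} is then a one-line computation. A spanning edge of $E_j$ has the form $(v,a,w)=(\alpha N_j,\alpha G_j^{\sf T},\alpha N_{j+1})$ with $\alpha\in\F^r$, a spanning edge of $\hat E_j$ has the form $(\hat v,\hat b,\hat w)=(\beta N_j^{\sf T},\beta H_j^{\sf T},\beta N_{j+1}^{\sf T})$ with $\beta\in\F^{n-k}$, and both triples lie in $\hat V_j\times\F\times\hat V_{j+1}$. Using $\alpha G_j^{\sf T}\cdot\beta H_j^{\sf T}=\alpha(G_j^{\sf T}H_j)\beta^{\sf T}$ and $N_{j+1}=N_j+G_j^{\sf T}H_j$ one obtains
\[
  \langle v,\hat v\rangle+a\hat b-\langle w,\hat w\rangle
  =\alpha N_j\beta^{\sf T}+\alpha(G_j^{\sf T}H_j)\beta^{\sf T}-\alpha N_{j+1}\beta^{\sf T}=0 .
\]
Thus every spanning edge of $\hat E_j$ annihilates all of $E_j$, so $(\hat v,\hat b,\hat w)\in (E_j)^{\circ}$ and $\hat E_j\subseteq (E_j)^{\circ}$.

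The one genuinely delicate point, and the place I would spend the most care, is the choice of bilinear form: one must realize $\hat V_j$ concretely as the column space of $N_j$, i.e.\ as $\im N_j^{\sf T}$, rather than as an abstract linear-algebra dual of $V_j$, because it is precisely the interplay between the row- and column-space descriptions of the single matrix $N_j$ that makes $\langle\alpha N_j,\gamma N_j^{\sf T}\rangle=\alpha N_j\gamma^{\sf T}$ both well defined (independent of the representatives $\alpha$ and $\gamma$) and non-degenerate. With the transposes organized this way and the recursion \eqnref{e-NMat} in hand, the rest is routine bookkeeping; that the inclusion $\hat E_j\subseteq (E_j)^{\circ}$ can be proper is already illustrated by Example~\ref{E-Localdual1}(2).
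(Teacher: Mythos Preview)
Your proof is correct and follows essentially the same approach as the paper: you choose $\hat V_j=\im N_j^{\sf T}$ with the bilinear form $\langle\alpha N_j,\beta N_j^{\sf T}\rangle=\alpha N_j\beta^{\sf T}$, and then verify $\hat E_j\subseteq (E_j)^{\circ}$ via the one-line computation $\alpha(N_j+G_j^{\sf T}H_j-N_{j+1})\beta^{\sf T}=0$. Your discussion of why the pairing is well defined and non-degenerate (via the left and right radicals) is slightly more explicit than the paper's, but the argument is the same.
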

In the proof we will construct the local dual based on a suitable choice of dual state space and inner form, which
will then make~$T^{\perp}$ a true subtrellis of~$T^{\circ}$ and not just an isomorphic copy.

\begin{proof}
Let~$V_j=\im N_j$ and $E_j=\im(N_j,\,G_j^{\sf T},\,N_{j+1})$ be the state spaces and transition spaces of~$T$,
where the matrices $N_j$ are defined as in\eqnref{e-NMat}.
By the very definition of the BCJR-dual, the state spaces of~$T^{\perp}$ are given by $\hat{V}_j=\im\hat{N}_j$,
where $\hat{N}_j=N_j^{\sf T}$.
Notice that the bilinear form $V_j\times\hat{V}_j\longrightarrow\F$, defined as
$\inner{\alpha N_j,\beta\hat{N}_j}:=\alpha N_j\beta\T$, is well-defined and non-degenerate.
So we may construct the local dual~$T^{\circ}$ based on this form.
Obviously, $\dim\hat{V}_j=\dim V_j$ for all $j\in\cI$, and the transition spaces of~$T^{\circ}$ are
\[
  (E_j)^{\circ}=\bigg\{(\beta\hat{N}_j,b,\tilde{\beta}\hat{N}_{j+1})\in\hat{V}_j\times\F\times\hat{V}_{j+1}\,\bigg|\,
       \begin{array}{l}\alpha N_j\beta\T+\alpha G_j^{\sf T} b-\alpha N_{j+1}\tilde{\beta}\T=0\\
                    \text{for all }\alpha(N_j,G_j^{\sf T},N_{j+1})\in E_j\end{array}\bigg\}.
\]
Now we see that
$\hat{E}_j=\im(\hat{N}_j,\,H_j^{\sf T},\hat{N}_{j+1})\subseteq(E_j)^{\circ}$ since for
all $\beta(\hat{N}_j,\,H_j^{\sf T},\hat{N}_{j+1})\in\hat{E}_j$ and $\alpha\in\F^k$ we have
$\alpha N_j\beta\T+\alpha G_j^{\sf T}H_j\beta\T-\alpha N_{j+1}\beta\T=\alpha(N_j+G_j^{\sf T}H_j-N_{j+1})\beta\T=0$ , due to\eqnref{e-NMat}.
\end{proof}

Now we are ready to show the main result of this section.

\begin{theo}\label{T-ProdBCJRdual}
If $T$ is a KV-trellis of~$\cC$, then~$T^{\perp}$ is isomorphic to~$T^{\circ}$.
\end{theo}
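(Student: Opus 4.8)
The plan is to combine the inclusion $\hat{E}_j\subseteq(E_j)^{\circ}$ from Proposition~\ref{P-NSLD} with a dimension count. Recall that in the proof of Proposition~\ref{P-NSLD} the local dual $T^{\circ}$ is realized using the dual state spaces $\hat{V}_j=\im N_j^{\sf T}$ together with the bilinear form $\inner{\alpha N_j,\,\beta N_j^{\sf T}}=\alpha N_j\beta\T$; these are precisely the state spaces $\hat{V}_j=\im\hat{N}_j$ of the BCJR-dual $T^{\perp}$, since $\hat{N}_j=N_j^{\sf T}$. With respect to this realization, Proposition~\ref{P-NSLD} gives $\hat{E}_j\subseteq(E_j)^{\circ}$ as genuine subspaces of $\hat{V}_j\times\F\times\hat{V}_{j+1}$. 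Hence it suffices to prove $\dim\hat{E}_j=\dim(E_j)^{\circ}$ for every $j\in\cI$: then $\hat{E}_j=(E_j)^{\circ}$, so the identity on the (common) state spaces is a trellis isomorphism $T^{\perp}\to T^{\circ}$, and since by the remark following Theorem~\ref{T-LD} the isomorphism class of $T^{\circ}$ is independent of the chosen spaces and forms, we conclude $T^{\perp}\cong T^{\circ}$.

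First I would evaluate $\dim(E_j)^{\circ}$. By Theorem~\ref{T-LD}, $\dim(E_j)^{\circ}=s_j+s_{j+1}+1-\dim E_j$, where $(s_0,\ldots,s_{n-1})$ is the SCP of $T$. Since $T$ is a KV-trellis, Theorem~\ref{T-KVBCJR} gives $T\cong T_{G,\cS}$, so Proposition~\ref{P-formulas}(d) applies and yields $\dim E_j=s_{j+1}+1$ if $j\in\{b_1,\ldots,b_k\}$ and $\dim E_j=s_{j+1}$ otherwise. Substituting, $\dim(E_j)^{\circ}=s_j$ when $j$ is the endpoint of one of the characteristic spans in $\cS$, and $\dim(E_j)^{\circ}=s_j+1$ otherwise.

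Next I would evaluate $\dim\hat{E}_j$. Applying the recursion\eqnref{e-NMat} to $T^{\perp}=T_{(H,G,D^{\sf T})}$ gives $\hat{N}_{j+1}=\hat{N}_j+H_j^{\sf T}G_j$, so every state triple $(\beta\hat{N}_j,\,\beta H_j^{\sf T},\,\beta\hat{N}_{j+1})$ of $T^{\perp}$ is determined by its first two entries; therefore $\dim\hat{E}_j=\rk(\hat{N}_j\mid H_j^{\sf T})=\rk\Smalltwomat{N_j}{H_j}$, using $\hat{N}_j=N_j^{\sf T}$. Now $\rk N_j=s_j$ by Theorem~\ref{T-KVBCJR}, and Lemma~\ref{L-HN} decides whether adjoining the row $H_j$ raises the rank: if $j=b_l$ for some $l$, part~(i) shows $H_j\in\im N_j$, so $\rk\Smalltwomat{N_j}{H_j}=s_j$; if $j\notin\{b_1,\ldots,b_k\}$, part~(ii) shows $H_j\notin\im N_j$, so $\rk\Smalltwomat{N_j}{H_j}=s_j+1$. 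Comparing with the preceding paragraph, $\dim\hat{E}_j=\dim(E_j)^{\circ}$ in every case, which finishes the proof.

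Beyond Theorems~\ref{T-LD} and~\ref{T-KVBCJR}, Propositions~\ref{P-formulas} and~\ref{P-NSLD}, and Lemma~\ref{L-HN}, the argument requires no new ingredient, so there is no genuine obstacle. The one point deserving attention is that the two case distinctions --- ``$j$ is an endpoint of a span in $\cS$'', governing $\dim E_j$ via Proposition~\ref{P-formulas}(d), and ``$H_j\in\im N_j$'', governing the rank jump via Lemma~\ref{L-HN} --- must agree. This is precisely where the hypothesis that $\cS$ consists of characteristic spans enters, through Lemma~\ref{L-HN}(ii); for a general BCJR-trellis this coincidence can fail, consistent with the fact (Example~\ref{E-Localdual1}(2)) that there $T^{\perp}$ is only a proper subtrellis of $T^{\circ}$.
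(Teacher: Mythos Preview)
Your proof is correct and follows essentially the same route as the paper: reduce to the dimension equality $\dim\hat{E}_j=\dim(E_j)^{\circ}$ via Proposition~\ref{P-NSLD}, compute $\dim(E_j)^{\circ}$ from\eqnref{e-dimElocal} together with Proposition~\ref{P-formulas}(d) (available since $T\cong T_{G,\cS}$ by Theorem~\ref{T-KVBCJR}), compute $\dim\hat{E}_j=\rk(\hat{N}_j,H_j^{\sf T})$ from the recursion, and invoke Lemma~\ref{L-HN} to match the two case distinctions. Your closing remark pinpointing Lemma~\ref{L-HN}(ii) as the place where the KV hypothesis is used is exactly the paper's point as well.
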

One should keep in mind that at this point it is not clear whether $T^{\perp}$, hence~$T^{\circ}$, is a KV-trellis
of the dual code.
This is indeed the case, as we will prove in Section~\ref{S-DualProc}.

\begin{proof}
By Theorem~\ref{T-KVBCJR} we may represent~$T$ as a BCJR-trellis
$T=T_{(G,H,\cS)}$, where  $G\in\F^{k\times n}$ has rank~$k$ and $\cS=[(a_l,b_l],l=1,\ldots,k]$ is a span
list of~$G$ consisting of characteristic spans of~$\cC$.
Put $\cA=\{a_1,\ldots,a_k\}$ and $\cB=\{b_1,\ldots,b_k\}$.
Let $N_0=D$ and~$N_j$ be as in\eqnref{e-DMat},\eqnref{e-NMat}.
By Proposition~\ref{P-NSdual}, the trellis $T^{\perp}$ represents~$\cC^{\perp}$ and has
state and transition spaces $\hat{V}_j:=\im\hat{N}_j$ and $\hat{E}_j:=\im(\hat{N}_j,H_j^{\sf T},\hat{N}_{j+1})$,
where $\hat{N}_j=N_j^{\sf T}$.
In light of Proposition~\ref{P-NSLD} it suffices to show that $\dim(E_j)^{\circ}=\dim\hat{E}_j$ for all $j\in\cI$.
\\
Denote the SCP and ECP of~$T$ by $(s_0,\ldots,s_{n-1})$ and $(e_0,\ldots,e_{n-1})$, respectively,
and let $e_j^{\circ}=\dim(E_j)^{\circ}$.
Since~$T$ is isomorphic to the corresponding product trellis~$T_{G,\cS}$, the formulas in
Proposition~\ref{P-formulas}(d) apply.
Thus we have $e_j=s_{j+1}$ if $j\not\in\cB$ and $e_j=s_{j+1}+1$ if $j\in\cB$.
Using\eqnref{e-dimElocal} this leads to
\begin{equation}\label{e-ecirc}
  e_j^{\circ}=s_j+1\text{ if }j\not\in\cB\ \text{ and }\ e_j^{\circ}=s_j\text{ if }j\in\cB.
\end{equation}
On the other hand, the recursion in\eqnref{e-NMat} implies that
\begin{equation}\label{e-ehatformulas}
   \hat{e}_j:=\dim(\hat{E}_j)=\rk(\hat{N}_j,H_j^{\sf T},\hat{N}_{j+1})=\rk(\hat{N}_j,H_j^{\sf T}).
\end{equation}
Using that $\hat{N}_j=N_j^{\sf T}$ has rank~$s_j$, we conclude that $\hat{e}_j=s_j$ iff
$H_j\in\im N_j$ and $\hat{e}_j=s_j+1$ otherwise.
But then Lemma~\ref{L-HN} shows that
$\hat{e}_j=s_j$ iff $j\in\cB$ and $\hat{e}_j=s_j+1$ iff $j\not\in\cB$.
A comparison with\eqnref{e-ecirc} establishes $\dim(E_j)^{\circ}=\dim\hat{E}_j$ for all $j\in\cI$.
So the trellises are isomorphic.
\end{proof}

The following example shows that $T^{\perp}\cong T^{\circ}$  may be true even if~$T$ is
not a KV-trellis.

\begin{exa}\label{E-BCJRlocaldual}
Let $\cC=\im G=\ker H\T\subseteq\F_2^6$, where
\[
   G=\begin{pmatrix}0&1&1&1&1&1\\0&0&1&1&1&0\\1&1&0&1&0&1\end{pmatrix},\
   H=\begin{pmatrix}0&0&1&0&1&0\\1&0&0&1&1&0\\1&1&1&1&0&1\end{pmatrix}.
\]
Consider the span list $\cS=[(1,5],\,(2,4],\,(3,1]]$ for~$G$.
One can easily verify that $(1,5]$ and $(2,4]$ are characteristic spans of~$\cC$, but $(3,1]$ is not
(there exists a codeword with span $(3,0]$).
Hence the product trellis $T_{G,\cS}$ is not a KV-trellis.
By straightforwardly computing the data for $T_{G,\cS}$ and the corresponding BCJR-trellis
$T:=T_{(G,H,\cS)}$, one obtains that both trellises have SCP $(1,1,1,2,3,2)$.
Hence they are isomorphic due to Theorem~\ref{T-BCJR}(c).
Their ECP is $(1,2,2,3,3,2)$.
The displacement matrix of  $T_{(G,H,\cS)}$ is given by
\[
    N_0=\begin{pmatrix}0&0&0\\0&0&0\\0&1&0\end{pmatrix}.
\]
Let us now consider the BCJR-trellis $T_{(H,G,\hat{\cS})}$ of~$\cC^{\perp}$,
where $\hat{\cS}=[(2,4],\,(3,0],\,(0,5]]$ is the chosen span list for the rows of~$H$.
Its displacement matrix turns out to be $N_0^{\sf T}$.
As a consequence,
$T^{\perp}=T_{(G,H,\cS)}\,^{\!\!\perp}=T_{(G,H,N_0)}\,^{\!\!\perp}=T_{(H,G,N_0^{\sf T})}=T_{(H,G,\hat{\cS})}$.
One can also easily verify that $T_{(H,G,\hat{\cS})}\cong T_{H,\hat{\cS}}$.
Now it is easy to check that both~$T^{\perp}$ and~$T^{\circ}$ have ECP $(2,1,2,3,3,2)$.
Thus Proposition~\ref{P-NSLD} yields $T^{\perp}\cong T^{\circ}$.
\end{exa}

\section{A Duality for Characteristic Matrices} \label{S-DualProc}
\setcounter{equation}{0}
In this section we will restrict ourselves to KV-trellises.
Recall from Theorem~\ref{T-ProdBCJRdual} that for these trellises the local dual and the BCJR-dual coincide, and
hence we may use these dualizations interchangeably.
In~\cite[p.~2097]{KoVa03}, Koetter/Vardy formulated the conjecture that if~$(X,\cT)$ and~$(Y,\,\hat{\cT})$ are characteristic pairs
of~$\cC$ and~$\cC^{\perp}$, respectively, then for every a KV$_{(X,\cT)}$-trellis there is a
KV$_{(Y,\hat{\cT})}$-trellis with the same state complexity profile.
They actually restricted the characteristic matrices to a specific choice, the lexicographically first ones, and also made
the correspondence more precise.
In \cite[Ex.~III.13, Ex.~V.2]{GLW10}, however, it has been shown that the conjecture is not true in this generality, and in
particular not for the lexicographically first characteristic matrices of~$\cC$ and~$\cC^{\perp}$.
In this section we will prove the following reformulated, but much stronger, version of the Koetter/Vardy conjecture:
for each characteristic pair $(X,\cT)$ of~$\cC$ there exists a characteristic pair $(Y,\hat{\cT})$ of~$\cC^{\perp}$
such that the dual of a KV$_{(X,\cT)}$-trellis is a KV$_{(Y,\hat{\cT})}$-trellis.
Since, due to Theorem~\ref{T-LD}, the local dual and the primary trellis have the same SCP, this result covers indeed
Koetter/Vardy's original conjecture.
We will also explicitly construct the dual characteristic matrix~$Y$.
The details of our result also extend a theorem in~\cite[Thm.~V.3]{GLW10}, which showed how the characteristic
span list of the dual of a minimal trellis of~$\cC$ looks (recall from Theorem~\ref{T-KVresults}(b) that minimal trellises are KV-trellises).

A first step toward formulating the duality conjecture had been done in~\cite{KoVa03}.
Our general assumption that both codes~$\cC$ and~$\cC^{\perp}$ have support~$\cI$ is crucial for this result
and the rest of this section.
\begin{theo}[\mbox{\cite[Thm.~5.12]{KoVa03}}]\label{T-KVreversed}
If $\cT=[(a_l,b_l],l=1,\ldots,n]$ is the characteristic span list of~$\cC$, then the characteristic
span list of~$\cC^{\perp}$ is given by $[(b_l,a_l],l=1,\ldots,n]$.
\end{theo}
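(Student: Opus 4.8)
The plan is to reduce the claim to the two-sided greedy description of characteristic span lists in Remark~\ref{R-greedy}. Since by Definition~\ref{D-CharMat}(iii) the endpoints $b_1,\dots,b_n$ are distinct, the assignment $l\mapsto b_l$ is a bijection from $\{1,\dots,n\}$ onto~$\cI$. Hence, applying Remark~\ref{R-greedy} to~$\cC^{\perp}$ (which has support~$\cI$ by our standing assumption, so that Theorem~\ref{T-KVresults}(a) applies), it suffices to show that for each~$l$ the shortest span of a nonzero word of~$\cC^{\perp}$ starting at~$b_l$ is precisely $(b_l,a_l]$: the list of these shortest spans is then the characteristic span list of~$\cC^{\perp}$, and it equals $[(b_l,a_l],\,l=1,\dots,n]$. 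I would split this into two sub-claims: \textbf{(A)} $\cC^{\perp}$ contains a word with span exactly $(b_l,a_l]$; and \textbf{(B)} no word of~$\cC^{\perp}$ has a span strictly contained in $(b_l,a_l]$. Throughout, the key input about~$\cC$ is that $(a_l,b_l]$ is a characteristic span, hence --- again by Remark~\ref{R-greedy} --- it is simultaneously the shortest $\cC$-span starting at~$a_l$ and the shortest $\cC$-span ending at~$b_l$. (Conventional and circular cases can be handled uniformly; alternatively one can move~$a_l$ to~$0$ by a cyclic shift, Remark~\ref{R-shift}, and use the conventional picture.)

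Sub-claim~(B) is a short orthogonality argument. Suppose some nonzero $d\in\cC^{\perp}$ had span $(b_l,c]\subsetneq(b_l,a_l]$. Then $c\neq a_l$ lies strictly before~$a_l$ on the arc running from~$b_l$, so $\mathrm{supp}(d)\subseteq[b_l,c]$, while $\mathrm{supp}(x_l)\subseteq[a_l,b_l]$; since $[b_l,a_l]$ and $[a_l,b_l]$ are complementary arcs meeting only in $\{a_l,b_l\}$ and $a_l\notin[b_l,c]$, the two supports meet only in $\{b_l\}$. As $x_l$ and~$d$ are both nonzero at~$b_l$, this forces $\inner{x_l,d}\neq 0$, contradicting $x_l\in\cC$, $d\in\cC^{\perp}$.

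Sub-claim~(A) is the crux, and the part I expect to take the most care. Set $S:=[b_l,a_l]$; note $\cI\setminus S$ is exactly the interior $[a_l,b_l]\setminus\{a_l,b_l\}$. Writing $\cC|_S\subseteq\F^S$ for the projection of~$\cC$ onto the coordinates in~$S$ and $(\cC^{\perp})_S\subseteq\F^S$ for the subcode of words of~$\cC^{\perp}$ supported within~$S$, the standard puncturing/shortening duality $(\cC^{\perp})_S=(\cC|_S)^{\perp}$ (inside~$\F^S$) lets me look for the desired word as a vector of $(\cC|_S)^{\perp}$ that is nonzero at the two coordinates~$a_l$ and~$b_l$. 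Now a subspace $U^{\perp}\subseteq\F^S$ has a vector nonzero at two prescribed coordinates $p\neq q$ iff neither standard basis vector $e_p$ nor $e_q$ lies in~$U$ (an elementary fact over any field, since a subspace contained in a union of two hyperplanes lies in one of them). So I must show $e_{a_l}\notin\cC|_S$ and $e_{b_l}\notin\cC|_S$. But $e_{a_l}\in\cC|_S$ would mean there is a $\cC$-word supported in $[a_l,b_l]\setminus\{b_l\}$ and nonzero at~$a_l$, i.e.\ a $\cC$-span starting at~$a_l$ and strictly shorter than $(a_l,b_l]$ --- impossible; symmetrically, $e_{b_l}\in\cC|_S$ would produce a $\cC$-span ending at~$b_l$ strictly shorter than $(a_l,b_l]$ --- again impossible. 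Extending the resulting vector of $(\cC^{\perp})_S$ by zeros outside~$S$ yields a word of~$\cC^{\perp}$ with support in $[b_l,a_l]$ and nonzero at both endpoints, i.e.\ with span exactly $(b_l,a_l]$.

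The decisive point --- and the main obstacle to a quick proof --- is that sub-claim~(A) genuinely needs \emph{both} extremal properties of a characteristic span: minimality among spans with the given start point rules out $e_{a_l}\in\cC|_S$, while minimality among spans with the given end point rules out $e_{b_l}\in\cC|_S$. This is exactly why the two-sided greedy description in Remark~\ref{R-greedy} is needed, and the whole argument is a streamlined rendering of Koetter/Vardy's reasoning in~\cite[Thm.~5.12]{KoVa03}.
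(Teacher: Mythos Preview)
The paper does not prove this theorem; it is quoted verbatim from Koetter/Vardy \cite[Thm.~5.12]{KoVa03} and used as a black box. Your proof is correct and self-contained: sub-claim~(B) is the standard single-overlap orthogonality argument, and sub-claim~(A) cleanly reduces existence to the two shortest-span properties of $(a_l,b_l]$ via puncturing/shortening duality and the elementary fact that a vector space is never the union of two proper subspaces. The only point worth flagging is that the weight-one edge case (where the putative shorter $\cC$-span would degenerate) is silently handled by the standing assumption that $\cC^{\perp}$ has full support, which forces $e_j\notin\cC$ for all~$j$; you might make that explicit. Otherwise your argument is exactly the streamlined version of the Koetter/Vardy proof you say it is, and there is nothing in the present paper to compare it against.
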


As a consequence, one easily derives from Proposition~\ref{P-formulas}(d) and Definition~\ref{D-CharMat}(iv)
that if one selects~$k$ linearly
independent characteristic generators of~$\cC$ with spans, say, $[(a_l,b_l],l=1,\ldots,k]$ and $n-k$
characteristic generators of~$\cC^{\perp}$ that do {\sl not\/} have spans $(b_l,a_l],l=1,\ldots,k$, then
the resulting product trellises have the same SCP; this has been proven in \cite[Prop.~5.13]{KoVa03}.
However, it is not guaranteed that those $n-k$ dual characteristic generators are linearly independent and
thus generate~$\cC^{\perp}$.
Indeed, this is not the case in general; see \cite[Ex.~III.13, Ex.~V.2]{GLW10}.
Furthermore, even if those dual generators are linearly independent, they may not give rise to a KV-trellis
dual to the KV-trellis of~$\cC$; see Example~\ref{E-F3manyCharMat} below.
Despite these obstacles, the result in \cite[Prop.~5.13]{KoVa03} indicates how the dual should look:
if the dual of a KV-trellis of~$\cC$ is a KV-trellis of~$\cC^{\perp}$, then its span list has to be given by the
reversed complementary spans in the above sense.
This is indeed the only option for the dual characteristic span list because in \cite[Prop.~III.14]{GLW10}
it has been shown that different selections of~$n-k$ characteristic spans lead to non-isomorphic trellises.
All of this leads to the following useful notation.

\begin{defi}\label{D-dualselection}
Let $(X,\cT)$ and $(Y,\hat{\cT})$ be characteristic matrices of the codes $\cC$ and~$\cC^{\perp}$, respectively.
A pair $(\tilde{X},\,\tilde{Y})$ is called a {\sl dual selection\/}
of~$(X,\,Y)$ if~$\tilde{X}\in\F^{k\times n}$ is a submatrix of~$X$ and~$\tilde{Y}\in\F^{(n-k)\times n}$ is a submatrix of~$Y$
such that their corresponding span lists $\cS\subset\cT$ and $\hat{\cS}\subset\hat{\cT}$ satisfy
$\hat{\cS}=\hat{\cT}\,\backslash\,[(b,a]\mid (a,b]\in\cS]$.
\end{defi}

In this section we will prove the following.
\begin{theo}\label{T-XY}
Let~$(X,\cT)$ be a characteristic pair of~$\cC$, and let~$\hat{\cT}$ be the characteristic span list of~$\cC^{\perp}$.
Then there exists a characteristic matrix~$Y$ of~$\cC^{\perp}$ such that
each dual selection $(\tilde{X},\tilde{Y})$ of $(X,Y)$ satisfies the following properties.
\begin{arabiclist}
\item $\rk\tilde{X}=k\Longleftrightarrow\rk\tilde{Y}=n-k$.
\item Let $\rk\tilde{X}=k$ and let~$\cS\subset\cT$ and~$\hat{\cS}\subset\hat{\cT}$ be the characteristic span lists
      of~$\tilde{X}$ and~$\tilde{Y}$, respectively.
      Then the KV-trellises $T_{\tilde{X},\cS}$ and $T_{\tilde{Y},\hat{\cS}}$ are dual to each other.
\end{arabiclist}
\end{theo}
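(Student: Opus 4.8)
The goal is to construct, starting from the characteristic pair $(X,\cT)$ of $\cC$, an explicit characteristic matrix $Y$ of $\cC^{\perp}$ with the two stated properties. The natural source of $Y$ is the dualization procedure already advertised in the introduction: form the BCJR-trellis $T_{(X,H,\cT)}$ of $\cC$ based on the \emph{entire} characteristic matrix $X$ (with $H$ a parity check matrix), and for each index $m\in\{1,\ldots,n\}$ delete the $m$-th generator, obtaining a subtrellis that still represents $\cC$. By Theorem~\ref{T-KVBCJR} and the structure results of Section~\ref{S-Dual}, the local dual of this subtrellis represents $\cC^{\perp}$; the main claim to be established along the way is that it contains a cycle whose edge-label sequence $y_m\in\cC^{\perp}$ has span exactly $(b_m,a_m]$ (the reversal of the omitted span $(a_m,b_m]$), which is forced by Theorem~\ref{T-KVreversed} together with the shortest-span property Lemma~\ref{L-ShortestGen}. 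Collecting $Y:=(y_1,\ldots,y_n)^{\sf T}$ with span list $\hat{\cT}=[(b_l,a_l],\,l=1,\ldots,n]$, I would first verify that $(Y,\hat{\cT})$ satisfies Definition~\ref{D-CharMat}: properties (ii)--(iv) are immediate from Theorem~\ref{T-KVreversed} and the fact that $\cT$ already satisfies (iii)--(iv), so the real content is (i), i.e.\ that $y_1,\ldots,y_n$ \emph{generate} $\cC^{\perp}$ — equivalently that the $y_m$'s are not all trapped in a proper subspace. Here I expect to use Theorem~\ref{T-bigcapkerNi} ($\bigcap_j \im N_j=\{0\}$ for KV-trellises) applied to the various subtrellises, which is exactly the kind of linear-independence input that guarantees the dual cycles span enough of $\cC^{\perp}$.

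\textbf{Proving property (1).} Fix a dual selection $(\tilde{X},\tilde{Y})$, with $\tilde{X}$ consisting of rows indexed by $J\subseteq\{1,\ldots,n\}$, $|J|=k$, and $\tilde{Y}$ consisting of the dual rows indexed by the complement $J^{c}$, $|J^{c}|=n-k$, by the definition of dual selection. The implication $\rk\tilde X=k\Rightarrow\rk\tilde Y=n-k$ is the heart of the theorem. The clean way to get it is to show $\tilde Y\,\tilde X^{\sf T}$ (or rather a closely related pairing) is nonsingular, or more precisely to identify, for a fixed $\tilde X$ of rank $k$, the dual rows $y_m,\,m\in J^{c}$, as the edge-label sequences of cycles in the local dual $T^{\circ}$ of the KV-trellis $T_{\tilde X,\cS}$; since by Theorem~\ref{T-ProdBCJRdual} $T^{\circ}\cong T^{\perp}=T_{(H,\tilde G,\ldots)}$ (a BCJR-trellis on the rows of $H$), and since each $y_m$ with $m\in J^{c}$ sits on a distinct elementary component of that dual product structure, linear independence of the $y_m,\,m\in J^c$, follows from biproperness / Proposition~\ref{P-formulas}(b) once one checks the spans $(b_m,a_m],\,m\in J^{c}$, have distinct endpoints — which they do, being a sublist of $\hat\cT$. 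The converse implication follows by the symmetric argument with the roles of $\cC$ and $\cC^{\perp}$ interchanged (using that $\cC^{\perp\perp}=\cC$ and that reversing spans twice returns $\cT$), or by a rank count: $\rk\tilde X\le k$ and $\rk\tilde Y\le n-k$ always, and if both were strict one would get a contradiction with $\dim\cC+\dim\cC^{\perp}=n$ via the pairing.

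\textbf{Proving property (2).} Assume $\rk\tilde X=k$, so by (1) $\rk\tilde Y=n-k$ and $\tilde Y$ is a full-rank generator matrix of $\cC^{\perp}$ whose span list $\hat\cS$ consists of characteristic spans of $\cC^{\perp}$; hence $T_{\tilde Y,\hat\cS}$ is a genuine KV-trellis of $\cC^{\perp}$. It remains to see that $T_{\tilde Y,\hat\cS}$ is isomorphic to the dual of $T_{\tilde X,\cS}$. By Theorem~\ref{T-ProdBCJRdual}, the dual of the KV-trellis $T_{\tilde X,\cS}$ may be taken to be its BCJR-dual $T^{\perp}=T_{(H,\tilde G,D^{\sf T})}$; by Theorem~\ref{T-KVBCJR} applied to $\cC^{\perp}$, the KV-trellis $T_{\tilde Y,\hat\cS}$ is isomorphic to the BCJR-trellis $T_{(\tilde Y,H',\hat\cS)}$ built directly from the characteristic generators $\tilde Y$. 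So I would reduce (2) to showing these two BCJR-trellises coincide up to isomorphism, which by Theorem~\ref{T-BCJR}(c) reduces further to checking they have the same SCP: and the SCP of $T^{\perp}$ equals that of $T_{\tilde X,\cS}$ (by Theorem~\ref{T-LD}), while the SCP of $T_{\tilde Y,\hat\cS}$ is computed from the reversed complementary spans by Proposition~\ref{P-formulas}(d) — these agree by the computation behind \cite[Prop.~5.13]{KoVa03} recalled above.

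\textbf{Main obstacle.} The technical crux, as the authors themselves flag, is not the SCP bookkeeping in (2) but the \emph{linear independence} assertions: that the dual cycles extracted from the subtrellises really have the claimed minimal (reversed) spans and not shorter ones, and that a dual selection $\tilde Y$ inherits full rank exactly when $\tilde X$ does. Both hinge on feeding the shortest-span property (Lemma~\ref{L-ShortestGen}), the non-vanishing of the $H_j$ and $G_j^{\sf T}$ on the relevant intervals (Lemma~\ref{L-HN}), and the intersection result $\bigcap_j\im N_j=\{0\}$ (Theorem~\ref{T-bigcapkerNi}) into a careful tracking of which rows of the $N_j$ matrices are active at which times. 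I expect the argument to proceed index by index along the time axis, repeatedly invoking cyclic shifts (Remark~\ref{R-shift}) to reduce to the case of a span ending or starting at $0$, and the bulk of the work will be verifying that no unwanted linear dependence among the $y_m$ can arise — precisely the "certain linear independence conditions that need to be verified" mentioned in the introduction.
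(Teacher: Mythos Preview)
Your overall plan---build $Y$ by extracting, for each omitted generator index $m$, a dual cycle from the local dual of the subtrellis $T_m$---is exactly the paper's strategy, and your identification of Theorem~\ref{T-bigcapkerNi} as the key linear-independence input is correct. However, two of your reductions do not go through.

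\textbf{The argument for (1) is circular.} You write that the $y_m,\,m\in J^c$, ``sit on distinct elementary components of that dual product structure'' of $T^{\perp}$ and hence are linearly independent. But $T^{\perp}=T_{(H,\tilde{X},D^{\sf T})}$ is only a trellis of the type in Definition~\ref{D-NS}(a); it is \emph{not} known to be a BCJR-trellis in the sense of~(b), let alone a product trellis with elementary components (the paper explicitly warns of this after Proposition~\ref{P-NSdual}). So you are assuming the product decomposition you are trying to establish. The paper instead works with the auxiliary vectors $v_m\in\F^{n-k}$ (the ``dual states'' satisfying $(X^m)^{\sf T}_m=N^m_{m+1}v_m^{\sf T}$) and proves directly that $\{v_m:m\in J^c\}$ is linearly independent if and only if $\rk\tilde X=k$; since $y_m=v_mH$ and $\rk H=n-k$, this gives~(1). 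Both directions require real work: for ``$\Leftarrow$'' one shows a nontrivial $\alpha\in\ker\tilde X$ produces a nonzero $w\in\bigcap_j\im\tilde N_j$ annihilating all $v_m$, and for ``$\Rightarrow$'' one shows any $w$ orthogonal to all $v_m$ lies in $\bigcap_j\im\tilde N_j$, which vanishes by Theorem~\ref{T-bigcapkerNi}. Your proposed ``converse by symmetry'' does not work either, because the construction $X\mapsto Y$ is not a priori an involution (indeed Example~\ref{E-selfdual2} shows that even for a self-dual code the output $Y$ need not equal the input $X$).

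\textbf{The argument for (2) is false as stated.} You reduce the isomorphism $T^{\perp}\cong T_{\tilde Y,\hat\cS}$ to an SCP comparison via Theorem~\ref{T-BCJR}(c). But that theorem compares a product trellis and a BCJR-trellis built from the \emph{same} matrix and span list; it does not say two trellises with equal SCP are isomorphic. In fact Example~\ref{E-F3manyCharMat} exhibits two KV-trellises of $\cC^{\perp}$ with identical span lists (hence identical SCP and ECP) that are \emph{not} isomorphic, and only one of them is the dual of the given KV-trellis of $\cC$. The paper's proof of~(2) therefore cannot be a complexity-profile argument: one must build an explicit non-degenerate bilinear form on $V_j\times\hat V_j$ (with $\hat V_j$ spanned by the relevant $v_m$'s) and verify transition-space by transition-space that $\hat E_j=(E_j)^{\circ}$, using the identities in Proposition~\ref{P-MainConstr}(d). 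This is where the specific $Y$ produced by the construction, rather than an arbitrary characteristic matrix with the right span list, becomes essential.
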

Recall that in the situation~(2) the trellises $T_{\tilde{X},\cS}$ and $T_{\tilde{Y},\hat{\cS}}$ represent~$\cC$ and~$\cC^{\perp}$, respectively.
We will show this result by explicitly constructing the dual matrix~$Y$.
Here is an outline of the quite technical procedure.
Fix a characteristic matrix~$X$ of~$\cC=\ker H\T$ and consider the BCJR-trellis $T_{(X,H,\cT)}$ based on the entire matrix~$X$.
For each $m=1,\ldots,n$ consider the subtrellis~$T_m$ of~$T$ generated by all characteristic generators except
the $m$-th one.
Let the omitted generator have span $(a_m,b_m]$.
Then one can find a cycle in the local dual $(T_m)^{\circ}$ with span $(b_m,a_m]$, and hence the associated edge-label sequence
is a characteristic generator of~$\cC^{\perp}$.
This is carried out in Proposition~\ref{P-MainConstr} (with the special case where $b_m=0$ appearing in Lemma~\ref{L-DualCycle}).
Collecting all of these dual characteristic generators results in a characteristic matrix~$Y$ for~$\cC^{\perp}$.
This is the matrix~$Y$ that will satisfy Theorem~\ref{T-XY}.
Unfortunately, it is not a priori clear whether the pair $(X,Y)$ satisfies the dual rank condition in part~(1) of Theorem~\ref{T-XY}.
Its somewhat technical proof is given in Proposition~\ref{P-DualRank}.
For this step, the BCJR-representation of the trellises turns out to be crucial as it provides us
with a close link between states and dual codewords.
Once the dual rank condition is established, Theorem~\ref{T-XY}(2) is essentially a consequence of the construction.
Indeed, pick a subset $\cK\subset\cI$ such that $|\cK|=k$, and let~$\tilde{X}$ consist of the characteristic generators
in~$X$ with span list $\cS=[(a_l,b_l]\mid l\in\cK]$.
Then the KV-trellis $T_{(\tilde{X},H,\cS)}$ is a subtrellis of~$T_m$ for each $m\not\in\cK$, and therefore
each characteristic generator in~$Y$ with span in $[(b_m,a_m]\mid m\not\in\cK]$
appears in the local dual of $T_{(\tilde{X},H,\cS)}$.
As a consequence, Theorem~\ref{T-XY}(1) tells us that if $\rk\tilde{X}=k$, then these
$n-k$ generators give rise to the entire local dual of $T_{(\tilde{X},H,\cS)}$.
The only detail that needs attention is the choice of the dual state spaces and the bilinear form.
But this comes with the construction of~$Y$, and it is easy to verify that it is indeed non-degenerate,
see Proposition~\ref{P-LocalDual}.
Finally, in Theorem~\ref{T-BCJRdualKV} we will make the isomorphism, hidden in~(2) of Theorem~\ref{T-XY},
explicit by using suitable BCJR-representations.

Throughout this section we will use the following notation.
\begin{notation}\label{N-XTnotation}
Let $\cC=\ker H\T\subseteq\F^n$ be a $k$-dimensional code as in\eqnref{e-Cdata} and\eqnref{e-Gdata}.
Recall that we assume that both~$\cC$ and $\cC^{\perp}$ have support~$\cI$.
Furthermore, let $(X,\cT)$ be a fixed characteristic pair of~$\cC$, and write
$X=(x_{lj})_{l,j=0,\ldots,n-1}$, where the rows of~$X$ are sorted such that
$\cT=[(a_l,l],l=0,\ldots,n-1]$.
Hence we index the characteristic generators by $0,\ldots,n-1$.
Since~$\cC^{\perp}$ has support~$\cI$,
no characteristic span is empty and therefore $\cI\backslash(a_l,l]=(l,a_l]$ for all $l\in\cI$.
Let $X_0^{\sf T},\ldots,X_{n-1}^{\sf T}$ be the columns of~$X$ (we will always denote the $j$-th column
of a matrix~$M$ by $M_j^{\sf T}$).
Let $T=T_{(X,H,\cT)}$ be the associated BCJR-trellis.
The state space matrices of~$T$ will be denoted by~$N_j\in\F^{n\times(n-k)},\,j\in\cI$.
Recall that the BCJR-construction is completely row-wise, that is, the $l$-th rows of the matrices~$N_j$
solely depend on the generator with span $(a_l,l]$.
In the following, we will consider various submatrices of~$N_j$ and~$X$.
In all of these instances, we will index the rows of the submatrices by the ending point~$l$ of the
span $(a_l,l]$ they correspond to.
In this situation, Theorem~\ref{T-KVBCJR} reads as
\begin{arabiclist}
\item $\row(N_j,l)=0$ for all~$l$ such that $j\not\in(a_l,l]$,
\item the set $\{\!\row(N_j,l)\mid \text{ $l$ such that }j\in(a_l,l]\}$ is linearly independent for all $j\in\cI$.
\end{arabiclist}
\end{notation}

We start with the following technical lemma.
It results in a characteristic generator $c\in\cC^{\perp}$ with characteristic span starting at time~$0$
and which is normalized in such a way that~$c_0=1$.

\begin{lemma}\label{L-DualCycle}
Let $G\in\F^{(n-1)\times n}$ be the matrix consisting of all except the first row of~$X$ and let
$\cS=[(a_l,l],l=1,\ldots,n-1]$ be its span list.
Consider the BCJR-trellis $T_{(G,H,\cS)}$ with state space matrices $Q_j\in\F^{(n-1)\times(n-k)},\,j\in\cI$.
\\
Then there exists a unique vector $v\in\F^{n-k}$ such that $G_0^{\sf T}=Q_1 v\T$.
For $j\in\cI$ put
\[
    w_j=\left\{\begin{array}{ll}v,&\text{if }j\in(0,a_0],\\ 0,&\text{otherwise}\end{array}\right.
\]
and define the dual codeword $c=(c_0,\ldots,c_{n-1}):=v H\in\cC^{\perp}$.
Then
\begin{romanlist}
\item the dual codeword~$c$ has span $(0,a_0]$ and $c_0=1$. 
\item $Q_j v\T=0$ for all $j\not\in(0,a_0]$.
\item $Q_jw_j^{\sf T}+G_j^{\sf T}c_j-Q_{j+1}w_{j+1}^{\sf T}=0$ for all $j\in\cI$.
\end{romanlist}
\end{lemma}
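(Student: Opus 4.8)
The plan is to set up the BCJR-trellis $T_{(G,H,\cS)}$ concretely and track what the omitted first row of $X$ contributes. Recall that $X$ has the characteristic span $(a_0,0]$ attached to its first row $x_1$; since $\cC^\perp$ has support $\cI$, no characteristic span is empty, so $(a_0,0]$ is genuinely the circular span ending at $0$ with $a_0\neq 0$, and its complement is the conventional interval $(0,a_0]$. The matrix $G$ consists of the remaining rows $x_2,\ldots,x_n$ with spans $(a_l,l]$, $l=1,\ldots,n-1$. Write $Q_j$ for the state space matrices of $T_{(G,H,\cS)}$, so $Q_0=\tilde D$ (the displacement matrix built from $\cS$ via\eqnref{e-DMat}) and $Q_{j}=Q_{j-1}+G_{j-1}^{\sf T}H_{j-1}$. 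The key structural input is Theorem~\ref{T-KVBCJR}: since the spans in $\cS$ are characteristic spans of $\cC$, the trellis $T_{(G,H,\cS)}$ is (isomorphic to) a product trellis, $\row(Q_j,l)=0$ whenever $j\notin(a_l,l]$, and the nonzero rows of $Q_j$ are linearly independent for each $j$.

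First I would establish existence and uniqueness of $v$ with $G_0^{\sf T}=Q_1 v\T$. For uniqueness it suffices that $Q_1$ has full column rank $n-k$; this follows because $T_{(G,H,\cS)}$ realizes the $k$-dimensional code $\cC$ using $n-1\geq k$ generators, and more directly from Theorem~\ref{T-KVBCJR} applied to $Q_1$ together with Definition~\ref{D-CharMat}(iv): exactly $n-k$ of the spans $(a_l,l]$, $l=1,\ldots,n-1$, contain the index $1$ (all $n-k$ characteristic spans of $\cC$ containing $1$ survive the deletion of the first row, whose span $(a_0,0]$ need not contain $1$ — but here I must check that $(a_0,0]$ does \emph{not} contain $1$, equivalently $a_0\neq 0$, which is exactly the non-emptiness noted above; if $a_0=n-1$ then $1\notin(n-1,0]$ so this is fine, and in general $1\in(a_0,0]$ would force $a_0=0$), hence $\rk Q_1=n-k$. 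For existence, I would show $G_0^{\sf T}\in\im Q_1$: the point is that $Q_1=Q_0+G_0^{\sf T}H_0$ and the relation $GH\T=0$ forces a compatibility; more cleanly, one observes directly from the definitions that $(v_0,\ldots,v_{n-1})$ being the state sequence of the cycle in $T_{(X,H,\cT)}$ associated to the omitted generator $x_1$ — which passes through $N_j$-states that lie in $\im Q_j$ precisely on $(0,a_0]$ — realizes $G_0^{\sf T}$ as a combination of the columns of $Q_1$. I would make this precise by writing $v=\row(N_1,1)$ (the state at time $1$ of the first row of the full matrix $N$), and using $N_1=N_0+X_0^{\sf T}H_0$ together with the row-wise structure; the identity $G_0^{\sf T}=Q_1 v\T$ then becomes a straightforward matrix computation using $\row(N_j,l)=\row(Q_j,l)$ for $l\geq 2$ and the characteristic-span relations from\eqnref{e-DMat}.

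Next, for (i), set $c=vH$. That $c\in\cC^\perp=\im H$ is immediate. To pin down its span, I would use part (ii) — $Q_j v\T=0$ for $j\notin(0,a_0]$ — which I prove first, since (ii) is really the engine. The recursion $Q_{j+1}v\T=Q_j v\T+G_j^{\sf T}(H_j v\T)=Q_j v\T+G_j^{\sf T}c_j$ shows that $Q_j v\T$ is constant in $j$ off the support of $c$, and equals $G_0^{\sf T}$ at $j=1$. Combining this recursion with Theorem~\ref{T-KVBCJR}(1)–(2) for $Q_j$, I would argue coordinatewise: $\row(Q_j,l)v\T$ stays zero through any interval where $l\notin$ the relevant span, forcing $c_j=0$ there; and the endpoint conditions $c_0\neq 0$, $c_{a_0}\neq 0$ come out of the same bookkeeping — in particular $c_0$ is read off from $Q_1v\T-Q_0v\T=G_0^{\sf T}H_0v\T=G_0^{\sf T}c_0$, and since $G_0^{\sf T}=Q_1v\T\neq 0$ (as $1$ lies in $a_0+1,\ldots,0$? no — as the omitted generator $x_1$ is nonzero and $(a_0,0]$ its span, so $x_{1,0}\neq 0$, forcing $G_0^{\sf T}$... wait, $G_0^{\sf T}$ is the $0$th column of $G$, not of $X$) — I would instead get $c_0=1$ by the normalization built into how $v$ is chosen, or, if $X$ is not assumed normalized, reinterpret ``$c_0=1$'' as the assertion that after scaling $v$ appropriately we may take $c_0=1$; I expect the cleanest route is to define $v$ as the unique solution and then verify $c_0\neq 0$ from $G_0^{\sf T}=Q_1 v\T$, $G_0^{\sf T}\notin\im Q_0$ (an instance of Lemma~\ref{L-HN}(ii) applied to the dual side — $0\notin\{$ending points of $\cS\}$ since $0$ was the omitted one) — this gives $Q_1v\T\neq Q_0v\T$, hence $c_0\neq 0$, and rescaling $v$ normalizes it to $1$. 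Finally (iii) is a direct verification: with $w_j$ as defined, $Q_jw_j^{\sf T}+G_j^{\sf T}c_j-Q_{j+1}w_{j+1}^{\sf T}$ splits into the case $j,j+1\in(0,a_0]$ where it is $Q_jv\T+G_j^{\sf T}c_j-Q_{j+1}v\T=0$ by the recursion, the case $j,j+1\notin(0,a_0]$ where all three terms vanish by (ii) and $c_j=0$, and the two boundary cases $j=0$ and $j=a_0$ which again collapse using (ii), $c_0, c_{a_0}$ versus the vanishing of $Q_0v\T$ or $Q_{a_0+1}v\T$.

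The main obstacle I anticipate is item (ii) (and with it the span claim in (i)): it requires carefully marrying the scalar recursion $Q_{j+1}v\T=Q_jv\T+G_j^{\sf T}c_j$ with the delicate linear-independence/support structure of the $Q_j$ from Theorem~\ref{T-KVBCJR}, essentially reproving on the $G$-side a shortest-span statement in the spirit of Lemma~\ref{L-ShortestGen} — one must rule out that $c$ ``activates early'' before time $0$ or stays active past $a_0$, and this is exactly where the fact that the $(a_l,l]$ in $\cS$ are \emph{characteristic} (hence shortest) spans of $\cC$, not arbitrary spans, is indispensable. Everything else is bookkeeping with the recursion\eqnref{e-NMat} and column-rank counts via Definition~\ref{D-CharMat}(iv).
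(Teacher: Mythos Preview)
Your overall architecture matches the paper's: establish the rank structure of the $Q_j$, extract $v$ from $G_0^{\sf T}=Q_1v^{\sf T}$, derive $c_0=1$ and $Q_0v^{\sf T}=0$ simultaneously, then run a backward recursion to get $c_j=0$ and $Q_jv^{\sf T}=0$ for $j>a_0$; part~(iii) then falls out. But two of your concrete steps are off.

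First, your proposed explicit witness $v=\row(N_1,\cdot)$ (the time-$1$ state of the omitted generator) does not work: there is no reason the inner products $\row(N_1,l)\cdot v^{\sf T}$ should equal $x_{l,0}$, and the ``straightforward matrix computation'' you allude to does not go through. The paper instead argues existence purely by rank. Since $T_{(G,H,\cS)}$ is isomorphic to the product trellis $T_{G,\cS}$ and $0$ is the unique index not among the ending points of~$\cS$, Proposition~\ref{P-formulas}(d) yields $\rk(Q_0,G_0^{\sf T},Q_1)=\rk Q_1$, whence $G_0^{\sf T}$ lies in the column space of~$Q_1$. Your uniqueness argument via $\rk Q_1=n-k$ is correct.

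Second, you cannot ``rescale $v$'' to force $c_0=1$: you have just argued that $v$ is \emph{unique}. The paper's step is cleaner. From $G_0^{\sf T}=Q_1v^{\sf T}=(Q_0+G_0^{\sf T}H_0)v^{\sf T}$ one obtains $G_0^{\sf T}(1-c_0)=Q_0v^{\sf T}$; since $G_0^{\sf T}$ is \emph{not} in the column space of $Q_0$ (again by the rank identity, now using $0\neq a_0$), this forces simultaneously $c_0=1$ and $Q_0v^{\sf T}=0$. The latter is the seed for the backward recursion you need in~(ii), so getting it this way matters.

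For~(ii) and the support claim in~(i), your ``coordinatewise'' idea can be made precise and is essentially what the paper does: from $Q_0v^{\sf T}=0$ and the explicit displacement formula $\row(Q_0,l)=\sum_{j\ge a_l}x_{lj}H_j$, pick the row with $a_l=n-1$ to read off $H_{n-1}v^{\sf T}=0$, then the row with $a_l=n-2$ gives $H_{n-2}v^{\sf T}=0$, and so on down to $j=a_0+1$; the recursion\eqnref{e-NMat} then yields $Q_jv^{\sf T}=0$ on that range. One point you do not address: why is the span of $c$ \emph{exactly} $(0,a_0]$ rather than a proper subinterval? The paper closes this with Lemma~\ref{L-ShortestGen} applied to~$\cC^\perp$, since $(0,a_0]$ is a characteristic span of~$\cC^\perp$ by Theorem~\ref{T-KVreversed}.
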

Observe that due to Definition~\ref{D-CharMat}(iv), there exist~$k$ conventional characteristic spans
(that is, spans not containing~$0$).
Since $0\in(a_0,0]$, this implies that~$G$ contains all~$k$ characteristic generators with conventional spans.
In other words,~$G$ contains an MSGM of~$\cC$.
Therefore $\im G=\cC$, and the trellis $T_{(G,H,\cS)}$ does indeed represent~$\cC$.

\begin{proof}
By definition, $G=(x_{lj})_{l=1,\ldots, n-1\atop j=0,\ldots,n-1}$.
First notice that statement~(iii) follows from the previous parts due to the identities
$c_j=v H_j^{\sf T}=H_jv\T$ and $Q_jw_j^{\sf T}=Q_jv\T$ along with the recursion
\begin{equation}\label{e-recurQ}
   Q_{j+1}=Q_j+G_j^{\sf T}H_j \text{ for }j\in\cI,
\end{equation}
see\eqnref{e-NMat}.
Thus it remains to show the existence of~$v$ and properties~(i) and~(ii).
Let us first collect some properties of the matrices~$Q_j$.
By definition of BCJR-trellises, they are the submatrices of~$N_j$ obtained by omitting the first row, where~$N_j$ is as
in Notation~\ref{N-XTnotation}.
Due to property~(iv) of Definition~\ref{D-CharMat}, each index $j\in\cI$ is contained in
exactly $n-k$ of the characteristic spans of~$\cC$.
With the aid of Notation~\ref{N-XTnotation}(1) and~(2), we may therefore conclude that
\begin{equation}\label{e-rankQi}
  \rk Q_j=\left\{\begin{array}{ll} n-k,&\text{if }j\in(0,a_0],\\
              n-k-1,&\text{if }j\not\in(0,a_0].\end{array}\right.
\end{equation}
Moreover,
\begin{equation}\label{e-QiGiQi+1}
   \rk(Q_j,G_j^{\sf T})=\rk(G_j^{\sf T},Q_{j+1})=\rk(Q_j,G_j^{\sf T},Q_{j+1})
   =\left\{\!\!\begin{array}{ll}
      \rk Q_j,&\text{if }j=a_0,\\ \rk Q_j+1,&\text{if }j\not=a_0,\\
      \rk Q_{j+1},\!\!&\text{if }j=0,\\ \rk Q_{j+1}+1,&\text{if }j\not=0.
    \end{array}\right.
\end{equation}
The first two identities are a consequence of the recursion\eqnref{e-recurQ}.
As for the~$4$ cases of the last identity, recall that by Theorem~\ref{T-KVBCJR},
the trellis $T_{(G,H,\cS)}$ is isomorphic to the product trellis $T_{G,\cS}$.
Therefore, we may apply Proposition~\ref{P-formulas}(d).
Since the set of starting points is $\cI\backslash\{a_0\}$ and the set of ending points is
$\cI\backslash\{0\}$, we immediately obtain the 4 cases above.
\\
Now\eqnref{e-QiGiQi+1} and\eqnref{e-rankQi} imply the existence of a unique $v\in\F^{n-k}$ such that
$G_0^{\sf T}=Q_1 v\T$.
This results in $G_0^{\sf T}=(Q_0+G_0^{\sf T}H_0)v\T$, and thus $G_0^{\sf T}(1-H_0v\T)=Q_0v\T$.
But the second case in\eqnref{e-QiGiQi+1} shows that~$G_0^{\sf T}$ is not in the column space of~$Q_0$, and
therefore we may conclude that $c_0=H_0v\T=1$ and $Q_0v\T=0$.
If $a_0=n-1$, then all of this shows that the span of~$c$ is contained in $(0,n-1]$.
Since Theorem~\ref{T-KVreversed} gives that the latter is a characteristic span of~$\cC^{\perp}$, Lemma~\ref{L-ShortestGen}
implies that the span of~$c$ equals $(0,n-1]$. This proves~(i) and~(ii) for $a_0=n-1$.
Let us now assume $a_0<n-1$.
From the very definition of BCJR-trellises we know that
\[
    \row(Q_0,l)=\sum_{j=a_l}^{n-1}x_{lj}H_j,\ l=1,\ldots,n-1;
\]
for the row indexing see Notation~\ref{N-XTnotation}.
Recall that $\{a_1,\ldots,a_{n-1}\}=\{0,\ldots,n-1\}\backslash\{a_0\}$.
Hence there exists some~$l$ for which $a_l=n-1$.
Since $x_{l,a_l}\not=0$, the identity $\row(Q_0,l)v\T=0$ yields $H_{n-1}v\T=0$.
Now we can proceed recursively to obtain $c_j=H_jv\T=0$ for all $j=n-1,n-2,\ldots,a_0+1$.
Hence the span of~$c$ is contained in $(0,a_0]$, and again Lemma~\ref{L-ShortestGen} yields equality.
This establishes~(i).
\\
As for~(ii) we also proceed backwards.
From $Q_0v\T=0$ and $H_{n-1}v\T=0$ we obtain $Q_{n-1}v\T=(Q_0-G_{n-1}^{\sf T}H_{n-1})v\T=0$
and successively $Q_jv\T=0$ for $j=n-1,n-2,\ldots,a_0+1$, as desired.
This completes the proof.
\end{proof}

Using the behavior of the characteristic matrix and BCJR-trellises under the cyclic shift as summarized
in Remark~\ref{R-shift}, the lemma above generalizes to any characteristic span.
This leads to the following result.

\begin{prop}\label{P-MainConstr}
Fix $m\in\cI$ and denote by $X^m\in\F^{(n-1)\times n}$ and $N_j^m\in\F^{(n-1)\times(n-k)}$
the submatrices of~$X$ and~$N_j$, respectively, where the $m$-th row has been omitted.
Denote the columns of~$X^m$ by $(X^m)_0^{\sf T},\ldots,(X^m)_{n-1}^{\sf T}$.
Then there exists a unique vector $v_m\in\F^{n-k}$ with the following properties:
\begin{alphalist}
\item $(X^m)_m^{\sf T}=N_{m+1}^mv_m^{\sf T}$,
\item the dual codeword $c^m=(c^m_0,\ldots,c^m_{n-1}):=v_mH\in\cC^{\perp}$ has span $(m,a_m]$ and satisfies $c^m_m=1$,
\item $N_j^mv_m^{\sf T}=0$ for $j\not\in(m,a_m]$,
\item $N_j^mw_{m,j}^{\sf T}+(X^m)_j^{\sf T} c_j^m-N^m_{j+1}w_{m,j+1}^{\sf T}=0$ for all $j\in\cI$,
      where $w_{m,j}\in\F^{n-k}$ is defined as
      \[   \label{e-wi}
          w_{m,j}=\left\{\begin{array}{ll}v_m,&\text{if }j\in(m,a_m],\\0,&\text{otherwise }.\end{array}\right.
      \]
\end{alphalist}
As a consequence, the pair $(Y,\hat{\cT})$, where
\begin{equation}\label{e-YhatT}
    Y=\begin{pmatrix}c^0\\ \vdots\\c^{n-1}\end{pmatrix},\
    \hat{\cT}=[(m,a_m],m=0,\ldots,n-1],
\end{equation}
is a characteristic pair of~$\cC^{\perp}$.
\end{prop}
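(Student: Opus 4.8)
The plan is to reduce the general statement for arbitrary $m$ to the case $m=0$, which is exactly Lemma~\ref{L-DualCycle}, and then verify that collecting all the dual codewords $c^m$ produces a characteristic pair. First I would apply the cyclic left shift $\sigma^m$. By Remark~\ref{R-shift}, if $(X,\cT)$ is a characteristic pair of $\cC$, then $(X^*,\cT^*)$ with $X^*$ the rowwise $\sigma^m$-shift of $X$ and $\cT^*=[(a_l-m,l-m],\,l\in\cI]$ is a characteristic pair of $\sigma^m(\cC)$; moreover the same remark tells us that the state space matrices of the shifted BCJR-trellis $T_{(X^*,H^*,\cT^*)}$ are $N_i^*=N_{i+m}$. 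In the shifted picture, the row omitted is the one whose span starts at $0$ (namely the row corresponding to the original index $m$, since its span is $(a_m-m,0]$ after shifting, i.e.\ $0$ is its endpoint), so Lemma~\ref{L-DualCycle} applies directly to $G:=(X^*)^m$, $H^*$, and its span list. That lemma yields a unique $v$ with $G_0^{\sf T}=Q_1v\T$, a dual codeword $vH^*\in\sigma^m(\cC)^\perp=\sigma^m(\cC^\perp)$ with span $(0,a_m-m]$ and first coordinate $1$, together with properties (ii) and (iii) there. Undoing the shift — using $\sigma^{-m}$ and the fact that $\sigma^m(\cC^\perp)^\perp$-codewords correspond to $\cC^\perp$-codewords with spans shifted back by $m$, and that $H^* = \sigma^m(H)$ rowwise so $vH^* = \sigma^m(vH)$ — I obtain exactly the vector $v_m\in\F^{n-k}$ and the four properties (a)–(d). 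Properties (a) and (c) are the shifted forms of the existence statement and of Lemma~\ref{L-DualCycle}(ii); (b) is the shift of (i) combined with Theorem~\ref{T-KVreversed}, which guarantees $(m,a_m]$ is a characteristic span of $\cC^\perp$; and (d) is the shift of (iii). The only mild care needed is bookkeeping of the row indexing, but Notation~\ref{N-XTnotation} has already set this up so that omitting ``the $m$-th row'' is unambiguous.

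For the final ``As a consequence'' claim, I would verify that $(Y,\hat{\cT})$ as in \eqnref{e-YhatT} satisfies the four conditions of Definition~\ref{D-CharMat} for $\cC^\perp$. Condition (ii) is immediate from part (b): the $m$-th row $c^m$ of $Y$ has span $(m,a_m]$. Condition (iii) — distinct starting points and distinct ending points — holds because the starting points of $\hat{\cT}$ are $0,1,\dots,n-1$ (distinct by construction) and the ending points are $a_0,\dots,a_{n-1}$, which are distinct since they are the starting points of the characteristic span list $\cT$ of $\cC$, which are distinct by Definition~\ref{D-CharMat}(iii) applied to $(X,\cT)$. Condition (iv) for $\cC^\perp$ (each $j$ lies in exactly $k = n-(n-k)$ of the spans $(m,a_m]$) follows from Theorem~\ref{T-KVreversed}: by that theorem $[(m,a_m],\,m\in\cI]$ is, up to ordering, the characteristic span list of $\cC^\perp$, and a genuine characteristic span list automatically satisfies (iv). (Alternatively one can note directly that $j\in(m,a_m]$ iff $j\notin(a_m,m]$ and count.) The last thing to check is condition (i): $\im Y = \cC^\perp$, i.e.\ the $n$ dual codewords $c^0,\dots,c^{n-1}$ span all of $\cC^\perp$.

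For condition (i), I would argue that among the $n$ spans in $\hat{\cT}$ there are $k$ conventional ones (those with $m\le a_m$) by the counting in Definition~\ref{D-CharMat}(iv) applied to the list $\hat{\cT}$, or equivalently these are the reversals of the $k$ circular characteristic spans of $\cC$; the corresponding $k$ rows of $Y$ form the rows of a matrix in MSGM-like form whose nonzero leading positions $m$ (where $c^m_m=1$ by (b)) are distinct, hence these $k$ rows are linearly independent. Since $\dim\cC^\perp = n-k \le k$ is not automatic — wait, here one must instead invoke that a characteristic span list of an $(n-k)$-dimensional code has exactly $n-k$ circular spans, and show that the $n-k$ rows of $Y$ with circular spans are independent by the analogous leading-coordinate argument after a cyclic shift normalizing each circular span to a conventional one. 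I would give the clean version: for each $m$, the rows of $Y$ whose (shifted) spans do not contain a fixed point are in trellis-oriented form, so the submatrix $\tilde Y$ of $Y$ consisting of the $n-k$ rows with circular spans has distinct leading positions and therefore rank $n-k$; since each $c^m\in\cC^\perp$ and $\dim\cC^\perp=n-k$, we get $\im Y=\im\tilde Y=\cC^\perp$. \emph{The main obstacle} I anticipate is precisely this linear-independence step for condition (i): parts (a)–(d) come almost for free from the shift reduction to Lemma~\ref{L-DualCycle}, but proving that the full list $Y$ actually generates $\cC^\perp$ requires an MSGM/trellis-oriented argument and careful handling of the circular versus conventional spans — and in fact the subtler rank questions (whether arbitrary \emph{dual selections} $\tilde Y$ have rank $n-k$) are deferred by the authors to Proposition~\ref{P-DualRank}, so here I would only establish the weaker statement that the \emph{whole} list $Y$ spans $\cC^\perp$, which is all that is needed for $(Y,\hat{\cT})$ to be a characteristic pair.
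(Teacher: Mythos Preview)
Your reduction of parts (a)--(d) to Lemma~\ref{L-DualCycle} via the cyclic shift $\sigma^m$ is exactly the paper's argument, and your bookkeeping with Remark~\ref{R-shift} is correct.

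For the ``consequence'' about $(Y,\hat{\cT})$, the paper simply invokes Theorem~\ref{T-KVreversed} and stops; your attempt to spell out condition~(i) of Definition~\ref{D-CharMat} is more detailed than what the paper provides, but it contains a bookkeeping slip that would make the argument fail as written. For the $(n-k)$-dimensional code $\cC^\perp$, Definition~\ref{D-CharMat}(iv) says each index lies in exactly $n-(n-k)=k$ spans; hence $\hat{\cT}$ has $k$ \emph{circular} spans and $n-k$ \emph{conventional} ones --- the opposite of what you state in both attempts. Once this is corrected, your echelon argument works cleanly: the $n-k$ rows $c^m$ of~$Y$ with conventional span $(m,a_m]$ have pairwise distinct leading positions~$m$ (with $c^m_m=1$ by~(b)), so they are linearly independent; since they all lie in~$\cC^\perp$ and $\dim\cC^\perp=n-k$, they span~$\cC^\perp$, and therefore $\im Y=\cC^\perp$. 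No per-row shifting of circular spans is needed.
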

\begin{proof}
For $m=0$ this is exactly the statement of Lemma~\ref{L-DualCycle}:
Part~(a) translates into $G_0^{\sf T}=Q_1v\T$ in that lemma and (b),~(c),~(d) are~(i),~(ii),~(iii) of that lemma, respectively.
For $m>0$ we may apply the left cyclic shift~$\sigma^m$ by~$m$ units on the characteristic generators.
As detailed in Remark~\ref{R-shift}, this results in a characteristic
pair $(X^*,\cT^*)$ for the code $\sigma^m(\cC)=\im G^*=\ker H^*\T$ with BCJR-trellis $T_{(X^*,H^*,\cT^*)}$.
Its state space matrices are given by $N_j^*=N_{j+m}$ for $j\in\cI$.
This shifts the span $(a_m,m]$ to the span $(a_m-m,0]$ and we may use Lemma~\ref{L-DualCycle} again.
Applying now the inverse shift~$\sigma^{n-m}$ leads to the statements in~(a) --~(d).
The consequence about $(Y,\hat{\cT})$ is clear due to Theorem~\ref{T-KVreversed}.
\end{proof}

One should observe that the identities in Proposition~\ref{P-MainConstr}(d) indicate that the
dual codeword~$c^m$ and the states $w_{m,j},\,j\in\cI$, give rise to a cycle in the local dual of the trellis
$T_{(X^m,H,\cT^m)}$, where~$\cT^m$ is the characteristic span list of~$X^m$; see also\eqnref{e-Eicirc}.
We will make this precise later on after specifying the dual state spaces and the bilinear form as needed
for the local dualization.

\begin{exa}\label{E-selfdual2}
Consider the self-dual code from Example~\ref{E-selfdualBCJR}(b) with characteristic matrix~$X$ given
in\eqnref{e-charmatSD} and characteristic span list $\cT=[(3,0],(2,1],(1,2],(0,3]]$ ordered as
required in Notation~\ref{N-XTnotation}.
The matrix, displaying all information about the trellis~$T_{(X,G,\cT)}$, is given by
\[
  S=(N_0|X_0^{\sf T}|N_1|X_1^{\sf T}|N_2|X_2^{\sf T}|N_3|X_3^{\sf T}|N_0)=\left(\!\!\begin{array}{cc|c|cc|c|cc|c|cc|c|cc}
      1&0&1&0&0&0&0&0&0&0&0&1&1&0\\1&1&0&1&1&1&0&0&1&1&1&0&1&1\\
      0&0&0&0&0&1&1&1&1&0&0&0&0&0\\0&0&1&1&0&1&0&1&1&1&0&1&0&0\end{array}\!\!\right).
\]
Omitting the $m$-th row from~$S$, where $m=0,\ldots,3$, we see that
\[
  (X^0)_0^{\sf T}\!=\!\begin{pmatrix}0\\0\\1\end{pmatrix}\!=\!N^0_1\begin{pmatrix}1\\1\end{pmatrix},\:
  (X^1)_1^{\sf T}\!=\!\begin{pmatrix}0\\1\\1\end{pmatrix}\!=\!N^1_2\begin{pmatrix}0\\1\end{pmatrix},\:
  (X^2)_2^{\sf T}\!=\!\begin{pmatrix}0\\1\\1\end{pmatrix}\!=\!N^2_3\begin{pmatrix}1\\0\end{pmatrix},\:
  (X^3)_3^{\sf T}\!=\!\begin{pmatrix}1\\0\\0\end{pmatrix}\!=\!N^3_0\begin{pmatrix}1\\1\end{pmatrix}.
\]
Hence $v_0=(1\,1),\;v_1=(0\,1),\;v_2=(1\,0)$, and $v_3=(1\,1)$.
Recalling that $H=G$ due to self-duality, we obtain
$c^0=v_0G=(1001),\,c^1=v_1G=(0110),\;c^2=v_2G=(1111)$, and $c^3=v_3G=(1001)$.
Thus, the characteristic pair~$(Y,\hat{\cT})$ for~$\cC^{\perp}=\cC$, resulting from this procedure, is given by
\[
   Y=\begin{pmatrix}c^0\\c^1\\c^2\\c^3\end{pmatrix}=\begin{pmatrix}1&0&0&1\\0&1&1&0\\1&1&1&1\\1&0&0&1\end{pmatrix},\
   \hat{\cT}=[(0,3],(1,2],(2,1],(3,0]].
\]
One should notice that even though the code~$\cC$ is self-dual, the matrix~$Y$ is not identical to the characteristic
matrix~$X$ of~$\cC$ in\eqnref{e-charmatSD}, with which we started the procedure!
\\
As mentioned earlier, the identity in Proposition~\ref{P-MainConstr}(d) gives rise to cycles in certain local duals, namely
$(w_{m,0},c^m_0,w_{m,1},c^m_1,w_{m,2},c^m_2,w_{m,3},c^m_3,w_{m,0})$.
The matrix
\[
  \hat{S}=\left(\!\!\begin{array}{cc|c|cc|c|cc|c|cc|c|cc}
            0&0&1&  1&1&0&  1&1&0&  1&1&1& 0&0\\
            0&0&0&  0&0&1&  0&1&1&  0&0&0& 0&0\\
            1&0&1&  1&0&1&  0&0&1&  1&0&1& 1&0\\
            1&1&1&  0&0&0&  0&0&0&  0&0&1& 1&1\end{array}\!\!\right)
\]
contains these four cycles.
Proposition~\ref{P-MainConstr}(d) now states that for $m=0,\ldots,3$, the $m$-th row of~$\hat{S}$
is in the local dual of the trellis determined by the three rows of~$S$ not having index~$m$.
Using the definition of the local dual in\eqnref{e-Eicirc} this can also  be verified directly.
\end{exa}

Now we can formulate the main result of this section.

\begin{theo}\label{T-XYY}
Let $(X,\cT)$ be as in Notation~\ref{N-XTnotation} and~$Y$ as in Proposition~\ref{P-MainConstr}.
Then the pair $(X,Y)$ satisfies Theorem~\ref{T-XY}.
\end{theo}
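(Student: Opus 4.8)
The plan is to take as $Y$ exactly the characteristic matrix produced in Proposition~\ref{P-MainConstr}; by that proposition $(Y,\hat{\cT})$ is already a characteristic pair of $\cC^{\perp}$ with $\hat{\cT}=[(m,a_m],\,m=0,\ldots,n-1]$, which is precisely the reversal of $\cT$ demanded by Theorem~\ref{T-KVreversed}. So the entire content of Theorem~\ref{T-XYY} is to verify conditions~(1) and~(2) of Theorem~\ref{T-XY} for an arbitrary dual selection $(\tilde{X},\tilde{Y})$ of $(X,Y)$. I would fix a subset $\cK\subset\cI$ with $|\cK|=k$ such that the rows of $\tilde{X}$ are the $x_l$, $l\in\cK$, with span list $\cS=[(a_l,l]\mid l\in\cK]$; by the definition of a dual selection the rows of $\tilde{Y}$ are then the $c^m$, $m\in\cI\setminus\cK$, with span list $\hat{\cS}=[(m,a_m]\mid m\notin\cK]$.

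For condition~(1), I would first reduce linear independence of the selected dual generators to that of the vectors $v_m$. Since $c^m=v_mH$ and $H$ has full row rank, its left kernel $\ker H\subseteq\F^{n-k}$ is trivial, so $\sum_{m\notin\cK}\lambda_mc^m=0$ forces $\sum_{m\notin\cK}\lambda_mv_m=0$; hence $\rk\tilde{Y}=n-k$ iff $\{v_m\mid m\notin\cK\}$ is linearly independent. It then remains to prove the equivalence $\rk\tilde{X}=k\Longleftrightarrow\{v_m\mid m\notin\cK\}$ is linearly independent, which I would isolate as Proposition~\ref{P-DualRank}. The strategy is to exploit the BCJR identities $(X^m)_m^{\sf T}=N_{m+1}^mv_m^{\sf T}$ together with the recursion $N_{j+1}=N_j+X_j^{\sf T}H_j$ and the row structure of the matrices $N_j$ in Notation~\ref{N-XTnotation}(1)--(2): a dependence $\sum_m\lambda_mv_m=0$ among the $v_m$ should be transported, column by column along the time axis, into a dependence among the rows $x_l$, $l\in\cK$, and conversely. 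I expect this bookkeeping to be the main obstacle, since it requires tracking which characteristic spans are active at each time and repeatedly using that exactly $n-k$ of the spans $(a_l,l]$ contain any given $j$.

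For condition~(2), assume $\rk\tilde{X}=k$, so by~(1) also $\rk\tilde{Y}=n-k$, and both $T_{\tilde{X},\cS}$ and $T_{\tilde{Y},\hat{\cS}}$ are genuine KV-trellises, of $\cC$ and $\cC^{\perp}$ respectively. Represent $T_{\tilde{X},\cS}$ as the BCJR-trellis $T_{(\tilde{X},H,\cS)}$ (Theorem~\ref{T-KVBCJR}). Because the BCJR-construction is row-wise, for every $m\notin\cK$ the state and transition spaces of $T_{(\tilde{X},H,\cS)}$ are contained in those of $T_m:=T_{(X^m,H,\cT^m)}$, so $T_{(\tilde{X},H,\cS)}$ is a subtrellis of $T_m$; since taking local duals reverses inclusion of transition spaces, $(T_m)^{\circ}$ is a subtrellis of $(T_{(\tilde{X},H,\cS)})^{\circ}$. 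By Proposition~\ref{P-MainConstr}(d) the sequence $(w_{m,j},c^m_j)_{j\in\cI}$ is a cycle of $(T_m)^{\circ}$, hence of $(T_{(\tilde{X},H,\cS)})^{\circ}$. Thus each of the $n-k$ dual generators $c^m$, $m\notin\cK$, with span $(m,a_m]$, appears as a cycle in $(T_{(\tilde{X},H,\cS)})^{\circ}$, and these cycles generate a linear subtrellis whose edge-label code is, by condition~(1), all of $\cC^{\perp}$, and which is a priori a merged image of the product trellis $T_{\tilde{Y},\hat{\cS}}$.

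To see that no merging occurs and that this subtrellis is all of $(T_{(\tilde{X},H,\cS)})^{\circ}$, I would compare state complexity profiles. By Theorem~\ref{T-LD}, $\text{SCP}((T_{(\tilde{X},H,\cS)})^{\circ})=\text{SCP}(T_{(\tilde{X},H,\cS)})=\text{SCP}(T_{\tilde{X},\cS})$; on the other hand, since $\hat{\cS}$ consists of the reversed complementary spans of $\cS$, Proposition~\ref{P-formulas}(d) together with Definition~\ref{D-CharMat}(iv) gives $\text{SCP}(T_{\tilde{Y},\hat{\cS}})=\text{SCP}(T_{\tilde{X},\cS})$, which is essentially \cite[Prop.~5.13]{KoVa03}. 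Denote this common profile by $(s_0,\ldots,s_{n-1})$, where $s_j=|\{m\notin\cK\mid j\in(m,a_m]\}|$ by Proposition~\ref{P-formulas}(d). Choosing the dual state spaces and a non-degenerate bilinear form as in Proposition~\ref{P-LocalDual}, the states $w_{m,j}$ correspond to the elementary-trellis states that make up $T_{\tilde{Y},\hat{\cS}}$, and, arguing exactly as in the proof of Theorem~\ref{T-KVBCJR}(2), the set $\{w_{m,j}\mid m\notin\cK,\ j\in(m,a_m]\}$ is linearly independent for every $j$. Having $s_j$ linearly independent elements in the $j$-th state space of $(T_{(\tilde{X},H,\cS)})^{\circ}$, which has dimension $s_j$, the subtrellis generated by the $n-k$ cycles already fills every state space, and by Proposition~\ref{P-formulas}(d) its transition spaces have the correct dimension too; hence it equals $(T_{(\tilde{X},H,\cS)})^{\circ}$ and, carrying linearly independent states along its generating cycles, it is a genuine (non-merged) product trellis, so it is isomorphic to $T_{\tilde{Y},\hat{\cS}}$. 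Therefore $T_{\tilde{Y},\hat{\cS}}\cong(T_{(\tilde{X},H,\cS)})^{\circ}\cong(T_{\tilde{X},\cS})^{\circ}$, which is condition~(2); Theorem~\ref{T-BCJRdualKV} will then exhibit this isomorphism explicitly through suitable BCJR-representations.
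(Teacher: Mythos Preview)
Your outline follows the paper's route closely: you isolate the rank equivalence as Proposition~\ref{P-DualRank} and then establish the trellis duality through the explicit state vectors $w_{m,j}$, exactly as in Proposition~\ref{P-LocalDual}. A few technical points in your sketch deserve correction, however.

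First, the assertion ``$(T_m)^{\circ}$ is a subtrellis of $(T_{(\tilde{X},H,\cS)})^{\circ}$'' is not correct as stated: the state spaces of~$T_m$ are strictly larger than those of~$T_{(\tilde{X},H,\cS)}$, and by Theorem~\ref{T-LD} the same holds for their local duals, so the inclusion of state spaces goes the wrong way. What you actually need---and what the paper does---is to restrict the identity of Proposition~\ref{P-MainConstr}(d) to the rows indexed by~$\cK$, obtaining $\tilde{N}_jw_{m,j}^{\sf T}+\tilde{X}_j^{\sf T}c_j^m-\tilde{N}_{j+1}w_{m,j+1}^{\sf T}=0$, which places the triple $(w_{m,j},c_j^m,w_{m,j+1})$ directly into $(\tilde{E}_j)^{\circ}$ once the dual state spaces are chosen as $\hat{V}_j=\im P_j$. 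Second, the non-degeneracy of the bilinear form $(\alpha\tilde{N}_j,\beta P_j)\mapsto\alpha\tilde{N}_jP_j^{\sf T}\beta\T$ is not automatic and is the substance of Proposition~\ref{P-LocalDual}(1); the paper proves it as $\ker\tilde{N}_jP_j^{\sf T}=\ker\tilde{N}_j$, using Proposition~\ref{P-MainConstr}(c) together with the non-singularity of $(v_m)_{m\in\hat{\cK}}$ from part~(1). Third, the linear independence of $\{w_{m,j}\mid m\notin\cK,\ j\in(m,a_m]\}$ does not require anything like Theorem~\ref{T-KVBCJR}(2): since $w_{m,j}=v_m$ on $(m,a_m]$, it is immediate from the independence of $\{v_m\mid m\in\hat{\cK}\}$ already secured in~(1).

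Finally, your sketch for Proposition~\ref{P-DualRank} glosses over the harder direction $\rk\tilde{X}=k\Rightarrow\{v_m\}$ independent. The paper's argument is not a column-by-column transport of a dependence; rather, given~$w$ orthogonal to every~$v_m$, one shows $w\in\im N_j^m$ for all~$m\in\hat{\cK}$ and all~$j$ (using Proposition~\ref{P-MainConstr}(c) and a rank count), deduces $w\in\bigcap_j\im\tilde{N}_j$, and then invokes Theorem~\ref{T-bigcapkerNi} to force $w=0$. That theorem is precisely where the KV-hypothesis on~$\tilde{X}$ enters, and it is the main obstacle you anticipated.
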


The proof consists of several steps. We begin with Part~(1) of Theorem~\ref{T-XY}.
It is worth pointing out that for the proof of this result the BCJR-description of the
trellises involved is crucial.
It directly links the ``dual states''~$v_m$ to the associated dual codewords~$c^m$, making the second equivalence
in\eqnref{e-RankEqu} below obvious.

\begin{prop}\label{P-DualRank}
Let~$v_m$ and $(Y,\hat{\cT})$ be as in Proposition~\ref{P-MainConstr}, and let $\cK\cupdot\hat{\cK}=\cI$ be
an index partition such that $|\cK|=k$.
Let~$\tilde{X}\in\F^{k\times n}$ be the submatrix of~$X$ consisting of the generators with spans in the list
$\cS:=[(a_l,l]\mid l\in\cK]$, while~$\tilde{Y}\in\F^{(n-k)\times n}$ is the submatrix of~$Y$ consisting of the dual
generators with spans in $\hat{\cS}:=[(m,a_m]\mid m\in\hat{\cK}]$.
Then $(\tilde{X},\tilde{Y})$ is a dual selection of $(X,Y)$ in the sense of Definition~\ref{D-dualselection} and
\begin{equation}\label{e-RankEqu}
   \rk\tilde{X}=k\Longleftrightarrow\{v_m\mid m\in\hat{\cK}\}\text{ is linearly independent }\Longleftrightarrow\rk\tilde{Y}=n-k.
\end{equation}
\end{prop}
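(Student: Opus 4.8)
The plan is first to dispose of the routine part of the claim and the ``obvious'' second equivalence, and then to attack the first equivalence, whose two implications need separate (and unequal) arguments. The statement that $(\tilde X,\tilde Y)$ is a dual selection is immediate: the span list of~$\tilde X$ is $\cS=[(a_l,l]\mid l\in\cK]$, its reversed list is $[(l,a_l]\mid l\in\cK]$, and because both~$\cC$ and~$\cC^{\perp}$ have support~$\cI$ no characteristic span is empty, so $\cI\setminus(a_l,l]=(l,a_l]$; hence $\hat\cT\setminus[(l,a_l]\mid l\in\cK]=[(m,a_m]\mid m\in\hat\cK]=\hat\cS$ is exactly the span list of~$\tilde Y$, as Definition~\ref{D-dualselection} requires. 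For the second equivalence in~\eqnref{e-RankEqu}: since $H$ has full row rank the map $v\mapsto vH$ is injective, and by Proposition~\ref{P-MainConstr}(b) the rows of~$\tilde Y$ are precisely the vectors $c^m=v_mH$ for $m\in\hat\cK$; thus the rows of~$\tilde Y$ are linearly independent if and only if the $v_m$, $m\in\hat\cK$, are, and since $|\hat\cK|=n-k$ this is the assertion.

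For the first equivalence, $\rk\tilde X=k\Longleftrightarrow\{v_m\mid m\in\hat\cK\}$ linearly independent, I would prove the two implications in turn. The direction ``$\{v_m\mid m\in\hat\cK\}$ independent $\Rightarrow\rk\tilde X=k$'' I would argue by contraposition. If $0\neq\lambda=(\lambda_l)_{l\in\cK}$ satisfies $\sum_{l\in\cK}\lambda_lx_l=0$, then $\lambda\,\tilde X_j^{\sf T}=0$ for every~$j$, so by the recursion~\eqnref{e-NMat} the vector $\eta_j:=\sum_{l\in\cK}\lambda_l\,\row(N_j,l)\in\F^{n-k}$ does not depend on~$j$; call it~$\eta$. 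It is nonzero: choosing~$j$ in the (nonempty) span $(a_{l_0},l_0]$ of some~$x_{l_0}$ with $\lambda_{l_0}\neq0$, and using Notation~\ref{N-XTnotation}(1),(2), $\eta$ becomes a nontrivial combination of the linearly independent vectors $\{\row(N_j,l)\mid j\in(a_l,l]\}$. For $m\in\hat\cK$ we have $l\neq m$ for all $l\in\cK$, so Proposition~\ref{P-MainConstr}(a) gives $\row(N_{m+1},l)\,v_m^{\sf T}=x_{lm}$, whence $\eta\cdot v_m=\sum_{l\in\cK}\lambda_lx_{lm}=0$. Thus the $n-k$ vectors $v_m$, $m\in\hat\cK$, lie in the hyperplane $\eta^{\perp}$ of~$\F^{n-k}$ and are linearly dependent.

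The direction ``$\rk\tilde X=k\Rightarrow\{v_m\mid m\in\hat\cK\}$ independent'' is the one I expect to be hard, and here the BCJR-representation is indispensable. Assume $\rk\tilde X=k$; then $T_{(\tilde X,H,\cS)}$ is a KV-trellis with state matrices $\tilde N_j=(\row(N_j,l))_{l\in\cK}$, and $\bigcap_{j\in\cI}\im\tilde N_j=\{0\}$ by Theorem~\ref{T-bigcapkerNi}. Given a relation $\sum_{m\in\hat\cK}\mu_mv_m=0$ (equivalently $\sum_{m\in\hat\cK}\mu_mc^m=0$, since $c^m=v_mH$), I would substitute it into the cycle identities of Proposition~\ref{P-MainConstr}(d): taking the $l$-th component for $l\in\cK$ (legitimate, as $l\neq m$), multiplying by~$\mu_m$ and summing over $m\in\hat\cK$, the edge-label contribution $x_{lj}\bigl(\sum_{m\in\hat\cK}\mu_mc^m_j\bigr)$ vanishes, leaving $\row(N_j,l)\zeta_j^{\sf T}=\row(N_{j+1},l)\zeta_{j+1}^{\sf T}$ for all $j$ and $l\in\cK$, where $\zeta_j:=\sum_{m\in\hat\cK}\mu_mw_{m,j}$; that is, $j\mapsto\tilde N_j\zeta_j^{\sf T}$ is constant, and since $\row(N_j,l)=0$ whenever $j\notin(a_l,l]$ that constant is~$0$, so $\zeta_j\in(\im\tilde N_j)^{\perp}$ for every~$j$. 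The delicate remaining step is to exploit the explicit shape of~$\zeta_j$: the increment $w_{m,j+1}-w_{m,j}$ equals $v_m$ when $j=m$, equals $-v_m$ when $j=a_m$, and is $0$ otherwise, and the $m$'s (resp.\ the $a_m$'s) are pairwise distinct, so $\zeta_{j+1}-\zeta_j=\mu_jv_j\,[\,j\in\hat\cK\,]-\mu_{m'}v_{m'}\,[\,m'\in\hat\cK\,]$ with $m'=m'(j)$ the unique index satisfying $a_{m'}=j$. Comparing this with $\tilde N_j\zeta_j^{\sf T}=\tilde N_{j+1}\zeta_{j+1}^{\sf T}=0$ and the recursion $\tilde N_{j+1}=\tilde N_j+\tilde X_j^{\sf T}H_j$, and then propagating the resulting relations around the cyclic time axis while invoking the linear independence of the active rows of~$\tilde N_j$ (Notation~\ref{N-XTnotation}(2)) and $\bigcap_j\im\tilde N_j=\{0\}$, I expect to be able to force $\mu_m=0$ for all $m\in\hat\cK$. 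Assembling the three parts then gives the chain of equivalences~\eqnref{e-RankEqu}. The main obstacle is precisely this last bookkeeping step, keeping track of which span endpoints lie in~$\hat\cK$ as one moves around the time axis.
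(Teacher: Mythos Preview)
Your treatment of the dual-selection claim, the second equivalence, and the direction ``$\{v_m\}$ independent $\Rightarrow\rk\tilde X=k$'' of the first equivalence is correct and is essentially the paper's argument (with $\eta$ playing the role of the paper's vector~$w=\alpha\tilde N_j$).

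The direction ``$\rk\tilde X=k\Rightarrow\{v_m\}$ independent'', however, has a genuine gap, and it is not just bookkeeping. The intermediate conclusion you work toward, $\tilde N_j\zeta_j^{\sf T}=0$, is in fact a \emph{trivial} consequence of your hypothesis together with Proposition~\ref{P-MainConstr}(c), so invoking~(d) buys you nothing: using $\sum_{m\in\hat\cK}\mu_mv_m=0$ one rewrites $\zeta_j=-\sum_{m\in\hat\cK:\,j\in(a_m,m]}\mu_mv_m$, and for each such~$m$ one has $j\notin(m,a_m]$, whence $N_j^mv_m^{\sf T}=0$ by~(c), and in particular $\tilde N_jv_m^{\sf T}=0$ since $\tilde N_j$ sits inside~$N_j^m$. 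Thus $\tilde N_j\zeta_j^{\sf T}=0$ holds automatically, and no ``propagation around the time axis'' starting from this identity can force $\mu_m=0$ without further input. In particular, the fact you want to invoke, $\bigcap_j\im\tilde N_j=\{0\}$, lives on the wrong side of the duality: you have produced vectors in $(\im\tilde N_j)^{\perp}$, not in $\im\tilde N_j$.

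The paper's argument dualizes at the outset: since $|\hat\cK|=n-k$, linear independence of $\{v_m\mid m\in\hat\cK\}$ is equivalent to the vanishing of its orthogonal complement in~$\F^{n-k}$. So take $w\in\F^{n-k}$ with $wv_m^{\sf T}=0$ for all $m\in\hat\cK$. Proposition~\ref{P-MainConstr}(c) together with the rank count for~$N_j^m$ gives $\im N_j^m=(\im v_m)^{\perp}$ whenever $j\notin(m,a_m]$ (and $\im N_j^m=\F^{n-k}$ otherwise), so $w\in\im N_j^m$ for every $m\in\hat\cK$ and every~$j$. Intersecting over $m\in\hat\cK$ and using the linear independence of the nonzero rows of~$N_j$ (Notation~\ref{N-XTnotation}(2)) forces the coefficient of $\row(N_j,m)$ to vanish for each $m\in\hat\cK$, hence $w\in\im\tilde N_j$ for every~$j$. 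Now Theorem~\ref{T-bigcapkerNi} gives $w=0$. This is where $\bigcap_j\im\tilde N_j=\{0\}$ enters naturally, and it is the step your sketch is missing.
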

\begin{proof}
By Proposition~\ref{P-MainConstr}(b) we have $\tilde{Y}=\big(v_mH\big)_{m\in\hat{\cK}}$.
Using that $\rk H=n-k$, this immediately establishes the second equivalence in\eqnref{e-RankEqu}.
\\
As for the first equivalence, we need some preparation.
Recall the matrixes~$N_j$ from Notation~\ref{N-XTnotation}, and define $\tilde{N}_j\in\F^{k\times(n-k)}$ as
the submatrices of~$N_j$ consisting of the rows with indices in~$\cK$.
Consider also the matrices~$X^m$ and~$N_j^m$ as defined in Proposition~\ref{P-MainConstr}.
Since~$\cK$ and~$\hat{\cK}$ are disjoint, we observe that for each $m\in\hat{\cK}$ the
matrices~$\tilde{X}$ and~$\tilde{N}_j$ are the submatrices of~$X^m$ and~$N_j^m$, respectively,
consisting of the rows with indices in~$\cK$.
Denote the columns of~$\tilde{X}$ by $ \tilde{X}_0^{\sf T},\ldots, \tilde{X}_{n-1}^{\sf T}$.
Then Proposition~\ref{P-MainConstr}(a) yields
\begin{equation}\label{e-coltildeX}
  \tilde{X}_m^{\sf T}=\tilde{N}_{m+1}v_m^{\sf T}\text{ for all }m\in\hat{\cK}.
\end{equation}
Now we can prove the equivalence.
\\
``$\Longleftarrow$'' First suppose that $\rk\tilde{X}<k$, and let
$\alpha\tilde{X}=0$ for some $\alpha\in\F^k\backslash\{0\}$.
We will index the vector $\alpha$ by $l\in\cK$ to be consistent with our indexing of the rows of~$\tilde{X}$.
With the aid of the recursion $\tilde{N}_{j+1}=\tilde{N}_j+\tilde{X}_j^{\sf T}H_j$, we compute
$\alpha\tilde{N}_{j+1}=\alpha(\tilde{N}_j+\tilde{X}_j^{\sf T}H_j)=\alpha\tilde{N}_{j}$ for all $j\in\cI$.
Thus, $\alpha\tilde{N}_0=\alpha\tilde{N}_1=\ldots=\alpha\tilde{N}_{n-1}=:w$.
Suppose $w=0$, and hence $\alpha\in\bigcap_{j=0}^{n-1}\ker\tilde{N}_j$.
But then Notation~\ref{N-XTnotation}(2) tells us that the identity
$\alpha\tilde{N}_j=0$ leads to $\alpha_l=0$ for each $l\in\cK$ such that $j\in(a_l,l]$.
Since no span $(a_l,l]$ is empty, this results in $\alpha=0$, contradicting our assumption.
Hence $w\not=0$.
Next, along with\eqnref{e-coltildeX}, the identity $\alpha\tilde{X}=0$ implies $\alpha\tilde{N}_{m+1}v_m^{\sf T}=0$
and hence $w v_m^{\sf T}=0$ for all $m\in\hat{\cK}$.
This shows that the matrix $(v_m^{\sf T},\,m\in\hat{\cK})\in\F^{(n-k)\times(n-k)}$ is singular, and
therefore the set $\{v_m\mid m\in\hat{\cK}\}$ is linearly dependent.
\\
``$\Longrightarrow$''
Assume $\rk\tilde{X}=k$.
Then  $\cC=\im\tilde{X}=\ker H\T$, and the matrices~$\tilde{N}_j$ are
the state space matrices of the BCJR-trellis $T_{(\tilde{X},H,\cS)}$.
Let $w(v_m^{\sf T},\,m\in\hat{\cK})=0$ for some $w\in\F^{n-k}$.
We first show that $w\in\im N_j^m$ for all $m\in\hat{\cK}$ and all $j\in\cI$.
In order to do so, notice that the rank condition in\eqnref{e-rankQi} generalizes to~$N_j^m$ as
$\rk N_j^m= n-k$ if $j\in(m,a_m]$ and $\rk N_j^m= n-k-1$ if $j\not\in(m,a_m]$.
Thus, if $j\in(m,a_m]$, then $\im N_j^m=\F^{n-k}$, and hence $w\in\im N_j^m$.
If $j\not\in(m,a_m]$, then $\rk N_j^m=n-k-1$ and Proposition~\ref{P-MainConstr}(c) yields
$\im N_j^m=(\im v_m)^{\perp}$ in $\F^{n-k}$.
Thus, $w\in\im N_j^m$, as desired.
\\
Now we have $w=\alpha_j^m N_j^m$ for all $m\in\hat{\cK}$ and some $\alpha_j^m\in\F^{n-1}$.
Then Notation~\ref{N-XTnotation}(1)  tells us that for every $m\in\hat{\cK}$ we have
$w\in\text{span}\{\row(N_j,l)\mid l\text{ such that }j\in(a_l,l],\,l\not=m\}$.
From this and the linear independence of the nonzero rows in~$N_j$, see
Notation~\ref{N-XTnotation}(2), we conclude that
$w\in\text{span}\{\row(N_j,l)\mid l\not\in\hat{\cK}\text{ and }j\in(a_l,l]\}$.
But the latter space is $\im\tilde{N}_j$, and, since $j\in\cI$ was arbitrary, this proves that
$w\in\bigcap_{j=0}^{n-1}\im\tilde{N}_j$.
Using that $T_{(\tilde{X},H,\cS)}$ is a KV-trellis,  we may apply Theorem~\ref{T-bigcapkerNi} and conclude that $w=0$.
As a consequence, $\{v_m\mid m\in\hat{\cK}\}$ is linearly independent.
This concludes the proof of the remaining equivalence in\eqnref{e-RankEqu}.
\end{proof}

So far we have proven the dual rank property in Theorem~\ref{T-XY}(1).
It is worth noting that this does not guarantee that part~(2) of that theorem is also satisfied.
This is due to the fact that KV-trellises based on the same span selection but on different generators need not
be isomorphic.
Indeed, we have the following example.
\begin{exa}\label{E-F3manyCharMat}
Consider again the code $\cC\subseteq\F_3^4$ from Example~\ref{E-selfdualBCJR}(c) with the characteristic matrix~$X$
as given there.
The dual code has characteristic span list $\hat{\cT}=[(1,0],\,(0,1],\,(3,2],$ $(2,3]]$ and
the matrices
\[
   Y_1=\begin{pmatrix}1&1&1&2\\1&1&0&0\\0&0&2&1\\0&0&1&2\end{pmatrix},\
   Y_2=\begin{pmatrix}1&1&2&1\\1&1&0&0\\0&0&2&1\\0&0&1&2\end{pmatrix}
\]
are both characteristic matrices of~$\cC^{\perp}$.
Both are normalized, that is, the generators have coordinate~$1$ at the
starting point of their span.
Proposition~\ref{P-MainConstr}, applied to the characteristic matrix~$X$ of~$\cC$, produces the matrix~$Y_1$.
It is easy to see that both pairs $(X,\,Y_i),\,i=1,2$, satisfy the dual rank condition~(1) of Theorem~\ref{T-XY} for all dual selections.
It is a bit more tedious to show that only~$Y_1$ satisfies part~(2) of Theorem~\ref{T-XY} for all full rank dual selections.
Indeed, the two KV-trellises resulting from the first two rows of~$Y_1$ and~$Y_2$ are not
isomorphic\footnote{In the first trellis, the unique cycle representing the codeword $(1112)\in\cC^{\perp}$
passes through the zero state at time~$1$, but this is not the case in the second trellis.},
and only the trellis resulting from~$Y_1$ is the dual of the KV-trellis of~$\cC$ corresponding to the last two rows of~$X$.
\end{exa}

Thus, it remains to prove~(2) of Theorem~\ref{T-XY} in order to complete the proof of Theorem~\ref{T-XYY}.
Precisely, if $\rk\tilde{X}=k$ in Proposition~\ref{P-DualRank}, then the pairs $(\tilde{X},\cS)$ and $(\tilde{Y},\hat{\cS})$
give rise to KV-trellises of~$\cC$ and~$\cC^{\perp}$, respectively, and we aim to show that these trellises are
duals of each other.
Since for KV-trellises, the BCJR-dual and the local dual are isomorphic, we may use either approach.
The initial idea in Proposition~\ref{P-MainConstr} was motivated by local dualization, and therefore we
will use that construction.
While we do have a description of the state spaces~$V_j$ of $T_{(\tilde{X},H,\cS)}$, we do not yet
have dual state spaces $\hat{V}_j$ and a non-degenerate bilinear form on $V_j\times\hat{V}_j$ which are needed to
establish a local duality as in Theorem~\ref{T-LD}.
Example~\ref{E-selfdual2} shows that the standard bilinear form on~$\F^{n-k}$ restricted to~$V_j$
is in general degenerate, and therefore $\hat{V}_j=V_j$ along with this bilinear form is not an option for the local dualization:
take, for instance, $\cK=\{0,1\}$; then $V_1=\im(1,1)$.
On the other hand, the example and the construction preceding it indicate what the dual state spaces should be:
the spaces generated by the ``dual states'' $v_m$.
All this leads to the following result.

\begin{prop}\label{P-LocalDual}
Let $\cK,\,\hat{\cK},\,(\tilde{X},\,\cS)$, and $(\tilde{Y},\,\hat{\cS})$ be as in Proposition~\ref{P-DualRank}.
Assume $\rk\tilde{X}=k$, and let $V_j:=\im \tilde{N}_j$, where $\tilde{N}_j\in\F^{k\times(n-k)}$ are the
state space matrices of the KV-trellis $T:=T_{(\tilde{X},H,\cS)}$.
For $j\in\cI$ define the matrices $P_j\in\F^{(n-k)\times(n-k)}$ via
\begin{equation}\label{e-Pi}
   \row(P_j,m)=w_{m,j}=\left\{\begin{array}{ll}v_m,&\text{if }j\in(m,a_m]\\0,&\text{otherwise}\end{array}\right\}
   \text{ for }m\in\hat{\cK}.
\end{equation}
Put $\hat{V}_j:=\im P_j$ and $\hat{E}_j=\im(P_j,\tilde{Y}_j^{\sf T},P_{j+1})$ for $j\in\cI$, where~$\tilde{Y}_j^{\sf T}$ denotes the
$j$-th column of $\tilde{Y}$.
Then
\begin{arabiclist}
\item $\rk\tilde{N}_j=\rk P_j=\rk\tilde{N}_jP_j^{\sf T}$ for all $j\in\cI$.
      As a consequence, $\dim V_j=\dim\hat{V}_j$ for all~$j\in\cI$ and the bilinear form
      \begin{equation}\label{e-nondegen}
         V_j\times\hat{V}_j\longrightarrow\F,\quad (\alpha\tilde{N}_j,\beta P_j)\longmapsto \alpha\tilde{N}_jP_j^{\sf T}\beta\T
      \end{equation}
      is non-degenerate.
\item The trellis~$\hat{T}=(\hat{V},\hat{E})$, where $\hat{V}=\bigcup_{j=0}^{n-1}\hat{V}_j$ and
      $\hat{E}=\bigcup_{j=0}^{n-1}\hat{E}_j$, is a linear trellis representing~$\cC^{\perp}$ and $\hat{T}=T^{\circ}$,
      where we dualize with respect to the bilinear form\eqnref{e-nondegen}.
      Moreover, the trellis~$\hat{T}$ is isomorphic to the KV$_{(Y,\hat{\cal T})}$-trellis $T_{\tilde{Y},\hat{\cS}}$.
\end{arabiclist}
\end{prop}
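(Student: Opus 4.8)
The plan is to prove~(1) first --- the crux being that the bilinear form\eqnref{e-nondegen} is non-degenerate --- and then to deduce~(2) from Proposition~\ref{P-MainConstr} together with a dimension count, using the isomorphism $\hat T\cong T_{\tilde Y,\hat\cS}$. Throughout put $\cK_j:=\{l\in\cK\mid j\in(a_l,l]\}$, $\hat\cK_j:=\{m\in\hat\cK\mid j\in(m,a_m]\}$, and $s_j:=|\cK_j|$. For the rank identities in~(1): by Notation~\ref{N-XTnotation}(1),(2) the nonzero rows of $\tilde N_j$ are exactly those indexed by $\cK_j$ and they are linearly independent, so $\rk\tilde N_j=s_j$; the nonzero rows of $P_j$ are $\{v_m\mid m\in\hat\cK_j\}$, and since $\rk\tilde X=k$ Proposition~\ref{P-DualRank} gives that $\{v_m\mid m\in\hat\cK\}$ is linearly independent, hence so is $\{v_m\mid m\in\hat\cK_j\}$ and $\rk P_j=|\hat\cK_j|$. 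Finally $|\hat\cK_j|=s_j$: using $j\in(m,a_m]\Leftrightarrow j\notin(a_m,m]$ and Definition~\ref{D-CharMat}(iv) (exactly $n-k$ indices $l\in\cI$ satisfy $j\in(a_l,l]$), together with the fact that $\cK$ and $\hat\cK$ partition $\cI$, one obtains $s_j+(n-k-|\hat\cK_j|)=n-k$. So $\rk\tilde N_j=\rk P_j=s_j$.

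For the non-degeneracy I would first observe that the form\eqnref{e-nondegen} is simply the restriction of the standard bilinear form on $\F^{n-k}$ to $V_j\times\hat V_j$, because $\alpha\tilde N_jP_j^{\sf T}\beta^{\sf T}=(\alpha\tilde N_j)(\beta P_j)^{\sf T}$; hence its left radical is $V_j\cap\hat V_j^{\perp}$ and its right radical is $\hat V_j\cap V_j^{\perp}$ (orthogonality in $\F^{n-k}$), and since $\dim V_j=\dim\hat V_j=s_j$ a dimension count shows that either radical is zero as soon as the other is. So it suffices to show $V_j\cap\hat V_j^{\perp}=\{0\}$. Restricting Proposition~\ref{P-MainConstr}(c) to the rows indexed by $\cK$ (valid since $\tilde N_j$ is the $\cK$-row submatrix of $N_j^m$ for every $m\in\hat\cK$) yields $\tilde N_jv_m^{\sf T}=0$ for all $m\in\hat\cK\setminus\hat\cK_j$, so $V_j\subseteq\{u\mid uv_m^{\sf T}=0\text{ for all }m\in\hat\cK\setminus\hat\cK_j\}$; the right-hand space has dimension $(n-k)-((n-k)-s_j)=s_j=\dim V_j$ by the linear independence of the $v_m$, so the inclusion is an equality. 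Intersecting with $\hat V_j^{\perp}=\{u\mid uv_m^{\sf T}=0\text{ for all }m\in\hat\cK_j\}$ leaves $\{u\mid uv_m^{\sf T}=0\text{ for all }m\in\hat\cK\}=\{0\}$, since $\{v_m\mid m\in\hat\cK\}$ is a basis of $\F^{n-k}$. This gives non-degeneracy, and since $v\mapsto vP_j^{\sf T}$ is then injective on $V_j$ we also get $\rk(\tilde N_jP_j^{\sf T})=\dim V_j=s_j$, finishing~(1).

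For~(2), linearity of $\hat T$ is immediate. Writing $\hat E_j=\{(\beta P_j,\beta\tilde Y_j^{\sf T},\beta P_{j+1})\mid\beta\in\F^{n-k}\}$, one checks $\hat E_j\subseteq(E_j)^{\circ}$ by restricting Proposition~\ref{P-MainConstr}(d) to the $\cK$-rows: for the standard basis vector $\beta=e_m$ with $m\in\hat\cK$ it reads $\tilde N_jw_{m,j}^{\sf T}+\tilde X_j^{\sf T}c^m_j-\tilde N_{j+1}w_{m,j+1}^{\sf T}=0$, which by linearity in $\beta$ is precisely the condition\eqnref{e-Eicirc} for $(\beta P_j,\beta\tilde Y_j^{\sf T},\beta P_{j+1})$ to lie in $(E_j)^{\circ}$ with respect to the form\eqnref{e-nondegen} and $E_j=\im(\tilde N_j,\tilde X_j^{\sf T},\tilde N_{j+1})$. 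For equality I would compare dimensions: since $T=T_{(\tilde X,H,\cS)}$ is a KV-trellis, Theorem~\ref{T-KVBCJR} and Proposition~\ref{P-formulas}(d) give $\dim E_j=s_{j+1}+1$ if $j\in\cK$ and $s_{j+1}$ otherwise, so\eqnref{e-dimElocal} yields $\dim(E_j)^{\circ}=s_j+1$ if $j\in\hat\cK$ and $s_j$ otherwise; on the other hand $\beta M_j\mapsto\beta P_j$ is a well-defined isomorphism $\im M_j\rightarrow\im P_j$ (both kernels consist of the $\beta$ with $\beta_m=0$ for all $m\in\hat\cK_j$, by linear independence of the $v_m$), which extends to a trellis isomorphism $T_{\tilde Y,\hat\cS}\cong\hat T$, and Proposition~\ref{P-formulas}(d) applied to the product trellis $T_{\tilde Y,\hat\cS}$ gives $\dim\hat E_j=s_j+1$ if $j\in\hat\cK$ and $s_j$ otherwise. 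Hence $\hat E_j=(E_j)^{\circ}$, so $\hat T=T^{\circ}$, which represents $\cC^{\perp}$ by Theorem~\ref{T-LD}; and $\hat T\cong T_{\tilde Y,\hat\cS}$, which is a KV$_{(Y,\hat\cT)}$-trellis since $\rk\tilde Y=n-k$ by Proposition~\ref{P-DualRank}.

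I expect the main obstacle to be the non-degeneracy step of~(1): identifying $V_j$ with $\{v_m\mid m\in\hat\cK\setminus\hat\cK_j\}^{\perp}$ requires putting together the global linear independence of the $v_m$ (which ultimately rests, via Proposition~\ref{P-DualRank} and the rank hypothesis, on Theorem~\ref{T-bigcapkerNi}), the vanishing statements of Proposition~\ref{P-MainConstr}(c), and the counting identity $|\hat\cK_j|=s_j$ coming from Definition~\ref{D-CharMat}(iv). Once this and the rank identities are in place, the remainder --- the inclusion $\hat E_j\subseteq(E_j)^{\circ}$ and the dimension bookkeeping in~(2) --- is essentially a transcription of Proposition~\ref{P-MainConstr}, made transparent by the observation that\eqnref{e-nondegen} is nothing but the standard form on $\F^{n-k}$ restricted to the relevant subspaces.
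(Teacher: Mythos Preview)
Your proposal is correct and follows essentially the same approach as the paper's proof: the rank identities via the counting argument from Definition~\ref{D-CharMat}(iv) and linear independence of the~$v_m$, non-degeneracy via Proposition~\ref{P-MainConstr}(c) (the paper phrases this as $\ker\tilde N_jP_j^{\sf T}=\ker\tilde N_j$, which is your radical computation), the inclusion $\hat E_j\subseteq(E_j)^{\circ}$ via Proposition~\ref{P-MainConstr}(d), and equality by matching dimensions. The only cosmetic difference is that you establish the isomorphism $\hat T\cong T_{\tilde Y,\hat\cS}$ first and read off $\dim\hat E_j$ from Proposition~\ref{P-formulas}(d), whereas the paper bounds $\dim\hat E_j$ directly (using $\row(P_j,j)=0$ and $c^j_j=1$ when $j\in\hat\cK$) and records the isomorphism at the end.
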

\begin{proof}
(1) Recall that by property~(iv) of Definition~\ref{D-CharMat} every index~$j$ is contained in exactly~$n-k$
characteristic spans.
Thus, with the aid of Notation~\ref{N-XTnotation}(1) and~(2), and $|\hat{\cK}|=n-k$ we compute
\begin{align*}
 \rk\tilde{N}_j&=|\{l\in\cK\mid j\in(a_l,l]\}|=n-k-|\{m\in\hat{\cK}\mid j\in(a_m,m]\}|\\
               &=|\{m\in\hat{\cK}\mid j\not\in(a_m,m]\}|=|\{m\in\hat{\cK}\mid j\in(m,a_m]\}|.
\end{align*}
But the last quantity is exactly~$\rk P_j$ due to Proposition~\ref{P-DualRank}.
It remains to show $\rk\tilde{N}_j=\rk\tilde{N}_jP_j^{\sf T}$.
We will do this by showing $\ker\tilde{N}_jP_j^{\sf T}=\ker\tilde{N}_j$.
Clearly, we have ``$\supseteq$''.
For the converse, let $\alpha\tilde{N}_jP_j^{\sf T}=0$ for some $\alpha\in\F^k$.
Then $\alpha\tilde{N}_jv_m^{\sf T}=0$ for all $m\in\hat{\cK}$ such that $j\in(m,a_m]$.
Along with Proposition~\ref{P-MainConstr}(c) this leads to $\alpha\tilde{N}_jv_m^{\sf T}=0$ for all
$m\in\hat{\cK}$.
But due to Proposition~\ref{P-DualRank} the matrix $(v_m\mid m\in\hat{\cK})\in\F^{(n-k)\times(n-k)}$ is
non-singular and thus we conclude $\alpha\tilde{N}_j=0$.
This concludes the proof of the rank identities.
The non-degeneracy of the bilinear form follows from
$\ker\tilde{N}_jP_j^{\sf T}=\ker\tilde{N}_j$ and $\ker P_j\tilde{N}_j\T=\ker P_j$.
\\
(2)
We first show $\hat{T}=T^{\circ}$.
In order to do so, we have to prove $\hat{E}_j=(\tilde{E}_j)^{\circ}$, where
$\tilde{E}_j=\im(\tilde{N}_j,\tilde{X}_j^{\sf T},\tilde{N}_{j+1})$ are the transition spaces of~$T$, and~$(\tilde{E}_j)^{\circ}$
are the duals in the sense of\eqnref{e-Eicirc}.
The rows of $(P_j,\tilde{Y}_j^{\sf T},P_{j+1})$ are given by $(w_{m,j},c^m_j,w_{m,j+1})$ for $m\in\hat{\cK}$.
Since $(\tilde{N}_j,\tilde{X}_j^{\sf T},\tilde{N}_{j+1})$ is a submatrix of $(N^m_j,(X^m)_j^{\sf T},N^m_{j+1})$ for all $m\in\hat{\cK}$,
Proposition~\ref{P-MainConstr}(d) shows that $\hat{E}_j\subseteq(\tilde{E}_j)^{\circ}$.
For the converse we will show that the two spaces have the same dimension.
Let $s_j:=\rk\tilde{N}_j=\rk P_j$ and $e_j=\dim\tilde{E}_j$.
Then Theorem~\ref{T-LD} tells us that $e_j^{\circ}:=\dim(\tilde{E}_j)^{\circ}=s_j+s_{j+1}+1-e_j$.
Moreover, being a KV-trellis,~$T$ is isomorphic to the product trellis $T_{\tilde{X},\cS}$.
Therefore, Proposition~\ref{P-formulas}(d) tells us
that $e_j=s_{j+1}$ if $j\in\hat{\cK}$ and $e_j=s_{j+1}+1$ if $j\in\cK$.
Thus
\[
  e_j^{\circ}=s_j+1 \text{ if }j\in\hat{\cK}\;\text{ and }\;e_j^{\circ}=s_j \text{ if }j\in\cK.
\]
Using that $\dim\hat{E}_j=\rk(P_j,\tilde{Y}_j^{\sf T},P_{j+1})\geq\rk P_j=s_j$ for all~$j$,
we obtain $\dim\hat{E}_j\geq e_j^{\circ}$ for $j\in\cK$.
\\
If $j\not\in\cK$ then $j\in\hat{\cK}$ and  $\row(P_j,j)=0$ due to\eqnref{e-Pi}.
Since the $j$-th entry of~$\tilde{Y}_j^{\sf T}$ is given by $c^j_j=1$, this yields
$\dim\hat{E}_j=\rk(P_j,\tilde{Y}_j^{\sf T},P_{j+1})\geq\rk P_j+1=s_j+1=e_j^{\circ}$.
Along with $\hat{E}_j\subseteq(\tilde{E}_j)^{\circ}$, all of this shows $\hat{E}_j=(\tilde{E}_j)^{\circ}$, and
therefore $\hat{T}=T^{\circ}$.
Consequently,~$\hat{T}$ represents~$\cC^{\perp}$.
\\
It remains to show that~$\hat{T}$ is isomorphic to the KV$_{(Y,\hat{\cal T})}$-trellis $T_{\tilde{Y},\hat{\cS}}$.
For this remember that the latter has state spaces $\im M_j$ and transition spaces
$\im(M_j,\,\tilde{Y}_j^{\sf T},\,M_{j+1})$, where~$M_j$ is the state space matrix as in Definition~\ref{D-Prodtrellis}
based on the span list $[(m,a_m]\mid m\in\hat{\cK}]$.
As a consequence, the rows of~$M_j$ with index~$m$ such that $j\in(m,a_m]$ are linearly independent
while all other rows are zero.
Comparing this with\eqnref{e-Pi} and making use of Proposition~\ref{P-DualRank}, we see that
$\alpha M_j\mapsto\alpha P_j$ induces a well-defined isomorphism between $\im M_j$ and $\hat{V}_j=\im P_j$.
Now the very definitions of the transition spaces of~$\hat{T}$ and $T_{\tilde{Y},\hat{\cS}}$ show that
this gives rise to a trellis isomorphism.
This concludes the proof.
\end{proof}

This establishes the proof of Theorem~\ref{T-XYY}.

Having the dual pairing $(X,Y)$ of characteristic matrices allows us to express the duality of their
KV-trellises in terms of BCJR-representations.
Indeed, let $(\tilde{X},\tilde{Y})$ be a dual selection of $(X,Y)$ with span lists~$\cS$ and~$\hat{\cS}$ as in
Proposition~\ref{P-DualRank} and such that $\rk\tilde{X}=k$, thus $\rk\tilde{Y}=n-k$.
Then~$\tilde{Y}$ and~$\tilde{X}$ are parity check matrices of~$\cC$ and~$\cC^{\perp}$, respectively, and
may be used for the BCJR-representations of the codes.
This results in the KV-trellises $T_{(\tilde{X},\tilde{Y},\cS)}$ of~$\cC$ and
$T_{(\tilde{Y},\tilde{X},\hat{\cS})}$ of~$\cC^{\perp}$.
By definition of the dual matrix~$Y$, we have $\tilde{Y}=UH$, where $U=(v_m)_{m\in\hat{\cK}}$.
Moreover,~$U$ is non-singular due to Proposition~\ref{P-DualRank}.
Thus, the trellis $T_{(\tilde{X},\tilde{Y},\cS)}$ is isomorphic to the trellis
$T_{(\tilde{X},H,\cS)}$; see also \cite[Rem.~IV.4]{GLW10}.
As a consequence, the trellis~$\hat{T}$ of Proposition~\ref{P-LocalDual}(2), being the local
dual of $T_{(\tilde{X},H,\cS)}$, is isomorphic to the local dual of $T_{(\tilde{X},\tilde{Y},\cS)}$, which in turn
is isomorphic to the BCJR-dual $T_{(\tilde{X},\tilde{Y},\cS)}^{\perp}$, due to Theorem~\ref{T-ProdBCJRdual}.
A tedious, but straightforward matrix computation shows that this trellis satisfies
the following symmetry.

\begin{theo}\label{T-BCJRdualKV}
Let the data be as in Proposition~\ref{P-DualRank} and let $\rk\tilde{X}=k$. Then
\[
    T_{(\tilde{X},\tilde{Y},\cS)}^{\perp}=T_{(\tilde{Y},\tilde{X},\hat{\cS})}.
\]
\end{theo}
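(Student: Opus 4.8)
The plan is to reduce the whole statement to a single matrix identity between displacement matrices and then verify that identity by telescoping the cycle relation already recorded in Proposition~\ref{P-MainConstr}(d). First I would invoke Proposition~\ref{P-NSdual}: since $T_{(\tilde{X},\tilde{Y},\cS)}=T_{(\tilde{X},\tilde{Y},D)}$ with $D$ the displacement matrix supplied by\eqnref{e-DMat}, its BCJR-dual is exactly $T_{(\tilde{Y},\tilde{X},D^{\sf T})}$. As a trellis of the form $T_{(G,H,D')}$ is literally determined by its triple $(G,H,D')$ via Definition~\ref{D-NS}(a), it then suffices to prove $D^{\sf T}=\hat{D}$, where $\hat{D}$ is the displacement matrix of $T_{(\tilde{Y},\tilde{X},\hat{\cS})}$, again read off from\eqnref{e-DMat}. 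Writing out both formulas, with the rows of $\tilde{X}$ indexed by $l\in\cK$ (the $l$-th being $x_l$, span $(a_l,l]$) and the rows of $\tilde{Y}$ indexed by $m\in\hat{\cK}$ (the $m$-th being $c^m$, span $(m,a_m]$ with starting point $m$), the claim $D^{\sf T}=\hat{D}$ becomes $\sum_{j=a_l}^{n-1}x_{lj}c^m_j=\sum_{j=m}^{n-1}x_{lj}c^m_j$ for all $l\in\cK$, $m\in\hat{\cK}$.

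For this identity the key input is the relation in Proposition~\ref{P-MainConstr}(d): writing $N^m_j$ for the submatrix of $N_j$ (the state matrices of the full BCJR-trellis $T=T_{(X,H,\cT)}$) with row $m$ deleted, one has $\row(N^m_j,l)\,w_{m,j}^{\sf T}+x_{lj}c^m_j-\row(N^m_{j+1},l)\,w_{m,j+1}^{\sf T}=0$ for every $j\in\cI$ and every $l\neq m$. Summing over $j=a_l,\dots,n-1$ and telescoping (using $N_n=N_0$, $w_{m,n}=w_{m,0}$) gives $\sum_{j=a_l}^{n-1}x_{lj}c^m_j=\row(N^m_0,l)\,w_{m,0}^{\sf T}-\row(N^m_{a_l},l)\,w_{m,a_l}^{\sf T}$; the last term vanishes since $a_l\notin(a_l,l]$ forces $\row(N_{a_l},l)=0$ by Notation~\ref{N-XTnotation}(1). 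Summing the same relation over $j=m,\dots,n-1$ instead gives $\sum_{j=m}^{n-1}x_{lj}c^m_j=\row(N^m_0,l)\,w_{m,0}^{\sf T}-\row(N^m_m,l)\,w_{m,m}^{\sf T}$, and now the last term vanishes because $m\notin(m,a_m]$ forces $w_{m,m}=0$. Hence both sums equal the common quantity $\row(N^m_0,l)\,w_{m,0}^{\sf T}$, which yields $D^{\sf T}=\hat{D}$ and therefore $T_{(\tilde{X},\tilde{Y},\cS)}^{\perp}=T_{(\tilde{Y},\tilde{X},\hat{\cS})}$.

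The step I expect to be the real (but modest) difficulty is not any single computation above but the bookkeeping: one must track carefully which matrix plays the role of the generator matrix, which the parity check matrix, and which span list drives\eqnref{e-DMat} in each of the two BCJR-constructions, and one must consistently use the endpoint-based row indexing of Notation~\ref{N-XTnotation}. It is worth noting that a naive attempt to compare the two partial pairings directly --- by pushing the orthogonality $\sum_{j=0}^{n-1}x_{lj}c^m_j=0$ around --- leads to an unpleasant case distinction between conventional and circular spans; routing the argument through Proposition~\ref{P-MainConstr}(d) is precisely what makes the two ``boundary'' contributions disappear for the structural reasons $a_l\notin(a_l,l]$ and $m\notin(m,a_m]$, so that no case analysis is needed. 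This upgrade from the isomorphism already established in the preceding discussion to an actual equality of trellises is the entire content of the theorem.
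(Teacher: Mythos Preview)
Your proposal is correct and matches the paper's approach, which the paper itself summarizes only as ``a tedious, but straightforward matrix computation.'' You have supplied exactly that computation: reducing the statement via Proposition~\ref{P-NSdual} to the displacement-matrix identity $D^{\sf T}=\hat{D}$ and then verifying the entrywise equality $\sum_{j=a_l}^{n-1}x_{lj}c^m_j=\sum_{j=m}^{n-1}x_{lj}c^m_j$. Your use of Proposition~\ref{P-MainConstr}(d) to telescope both sums to the common value $\row(N^m_0,l)\,w_{m,0}^{\sf T}$, with the boundary terms vanishing because $a_l\notin(a_l,l]$ and $m\notin(m,a_m]$, is clean and avoids the conventional/circular case split one would otherwise face; this is a genuine improvement over a bare ``straightforward'' claim.
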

The situation described in Example~\ref{E-F3manyCharMat} may be used to show that this symmetry is not true
for full rank dual selections of arbitrary pairs $(X,Y)$ of characteristic matrices, even if they
satisfy the dual rank condition.

Let us close the paper with the following remarks pertaining to specific classes of codes.
First, if~$\cC\subseteq\F^n$ is a cyclic code, then it is easy to see that~$\cC$
has only one characteristic matrix (up to scalar factors).
It is given by the~$n$ cyclic shifts of the generator polynomial, see also \cite[Lem.~2]{KaSh05}.
As a consequence, this matrix and the corresponding characteristic matrix for the dual code
satisfy Theorem~\ref{T-XY}.
The dual rank condition of this theorem has already been proven in \cite[Thm.~6]{KaSh05}.

Second, let us return to self-dual codes.
Notice that for the self-dual code in Example~\ref{E-selfdualBCJR}(b), the pair $(X,X)$,
where~$X$ is as in\eqnref{e-charmatSD}, does not satisfy
the dual rank condition of Theorem~\ref{T-XY}(1).
It is easy to check that~$X$ is not the lexicographically first characteristic matrix of that code
(the lexicographically first characteristic matrix is obtained by choosing for each span the
lexicographically first codeword having that span, and where the lexicographic ordering starts at the starting point of that span).
In the proof of~\cite[Thm.~1]{KaSh05} it was stated that for any self-dual code, the pair $(X,X)$, where~$X$ is the lexicographically
first characteristic matrix, satisfies the dual rank condition of Theorem~\ref{T-XY}(1).
While this is indeed the case for the particular code in Example~\ref{E-selfdualBCJR}(b),
this is, unfortunately, not true in general.


\begin{exa}\label{E-selfdualbig}
Consider the extended $[8,4,4]$-Hamming code $\cC\subseteq\F_2^8$ generated by
\[
   G=\begin{pmatrix}1&0&1&0&1&1&0&0\\0&1&1&1&1&0&0&0\\0&0&1&0&1&0&1&1\\0&0&0&1&1&1&1&0\end{pmatrix}.
\]
Then
\[
  X= \begin{pmatrix}1&0&0&0&0&1&1&1\\1&1&0&0&1&0&1&0\\1&1&1&0&0&0&0&1\\1&0&1&1&0&0&1&0\\
               0&1&1&1&1&0&0&0\\1&0&1&0&1&1&0&0\\0&0&0&1&1&1&1&0\\0&0&1&0&1&0&1&1
        \end{pmatrix},\,
  \cT=[(5,0],(4,1],(7,2],(6,3],(1,4],(0,5],(3,6],(2,7]]
\]
form a characteristic pair, and~$X$ is the lexicographically first characteristic matrix of~$\cC$.
It is easy to see that the pair $(X,X)$ does not satisfy the dual rank condition in Theorem~\ref{T-XY}(1):
the rows with spans $(4,1],\,(7,2],\,(6,3],\,(3,6]$ are linearly independent, whereas this is not
the case for the rows with spans $(5,0],\,(4,1],\,(7,2],\,(0,5]$ (these are the spans that are not the
reversed spans of the first list).
\end{exa}

\bibliographystyle{abbrv}
{\small

}

\end{document}